\newtheorem{theorem}{Theorem}
\newtheorem{proposition}{Proposition}
\newtheorem{lemma}{Lemma}
\newtheorem{cor}{Corollary}
\newtheorem{assumption}{Assumption}
\numberwithin{equation}{section}
\DeclareMathOperator*{\argmax}{arg\,max}
\DeclareMathOperator*{\argmin}{arg\,min}
\title{Distributional Robustness in Minimax Linear Quadratic Control  
with Wasserstein Distance\thanks{
This work was supported in part by  the Creative-Pioneering Researchers Program through SNU,  the National Research Foundation of Korea funded by the MSIT(2020R1C1C1009766), and Samsung Electronics. } }
\author{Kihyun Kim \and
 Insoon Yang\thanks{Department of Electrical and Computer Engineering, Automation and Systems Research Institute,  Seoul National University, Seoul 08826, Korea {\tt\small \{hahakhkim, insoonyang\}@snu.ac.kr} }
}
\date{}
\begin{document}
\maketitle

\pagestyle{myheadings}
\thispagestyle{plain}

\begin{abstract}
To address the issue of inaccurate distributions in practical stochastic systems, a minimax linear-quadratic control method  is  proposed using the Wasserstein metric. 
Our method aims to construct a control policy that is robust against errors in an empirical distribution of underlying uncertainty, by adopting an adversary that selects the worst-case distribution. 
The opponent receives a \emph{Wasserstein penalty} proportional to the amount of deviation from the empirical distribution.
A closed-form expression of the finite-horizon optimal policy pair is derived using a Riccati equation. 
The result is then extended to the infinite-horizon average cost setting by identifying conditions under which the Riccati recursion converges to the unique positive semi-definite solution to an algebraic Riccati equation. 
Our method is shown to possess several salient features including closed-loop stability, and an out-of-sample performance guarantee. 
We also discuss how to optimize the penalty parameter for enhancing the distributional robustness of our control policy. 
Last but not least, a theoretical connection to the classical $H_\infty$-method is identified from the perspective of distributional robustness.  \end{abstract}

\section{Introduction}

Ambiguity, or uncertainty about uncertainty, in stochastic systems is one of the most fundamental challenges in the practical implementation of stochastic optimal controllers~\cite{Petersen2000, Tzortzis2015}. 
The true probability distribution of underlying uncertainty is unknown in ambiguous stochastic systems. 
In practice, we often only have access to samples generated according to the distribution. 
Estimating an accurate distribution from such observations is  challenging due to insufficient data and imperfect statistical models, among others. 
Using inaccurate distributions in the construction of an optimal policy may significantly decrease the control performance~\cite{Nilim2005, Samuelson2017} and can even cause unwanted system behaviors, such as unsafe operation~\cite{Yang2018aut}.
The focus of this work is to develop a discrete-time minimax control method using the Wasserstein metric and to analyze its robustness against uncertainties or errors in such distributional information.

Our work is closely related to the literature in \emph{distributionally robust control} (DRC).
DRC methods seek to design a control policy that minimizes an expected cost of interest under the worst-case distribution in a so-called \emph{ambiguity set}.
Several  types of ambiguity sets have been employed in DRC
 using moment constraints~\cite{Xu2012, VanParys2016}, confidence sets~\cite{Yang2017cdc}, relative entropy~\cite{Petersen2000, Ugrinovskii2002}, 
total variation distance~\cite{Tzortzis2015, Tzortzis2016}, and Wasserstein distance~\cite{Yang2017lcss, Yang2020}.\footnote{This paper focuses on  distributionally robust extensions of stochastic optimal control problems although distributionally robust techniques have also been studied in other control methods such as model predictive control~\cite{Coulson2019, Mark2020, Schuurmans2020, Ning2020}, and learning-based control~\cite{Schuurmans2019, Hakobyan2020}, among others.}
Such choices of ambiguity sets have largely been motivated by the literature in distributionally robust optimization (DRO)~\cite{Delage2010, BenTal2013, Wiesemann2014, Esfahani2015, Zhao2018, Gao2016}. 
In particular, DRO and DRC with the Wasserstein ambiguity set possess salient features such as a probabilistic out-of-sample performance guarantee and computational tractability~\cite{Esfahani2015, Zhao2018, Gao2016, Blanchet2018, Kuhn2019, Yang2020}.

In this paper, we propose a minimax linear-quadratic control method for ambiguous stochastic systems, inspired by Wasserstein DRC. 
To pursue distributional robustness, our method adopts a hypothetical opponent selecting the worst-case distribution to maximize a cost of interest, while the controller aims to minimize the same cost. 
To limit the conservativeness of the resulting control policy, 
our method penalizes the opponent by the amount (measured in the Wasserstein metric) of deviation from an empirical distribution.
 
The  minimax control problem is challenging to solve due to the infinite-dimensionality of the inner maximization problem in the Bellman equations. In the finite-horizon setting, we derive a Riccati equation and a closed-form expression of the unique optimal policy and the opponent's policy generating the worst-case distribution. 
In the infinite-horizon setting,  we identify a nontrivial stabilizability condition under which the solution to the Riccati equation converges to a symmetric positive semi-definite (PSD) solution to an algebraic Riccati equation (ARE). 
Taking a generalized eigenvalue approach,  our result is strengthened so that the converged solution corresponds to a unique symmetric PSD solution to the ARE under an additional observability condition. 
We also show that the resulting steady-state policy pair is an optimal solution to the infinite-horizon average cost minimax problem. 
The stability properties of the closed-loop system are further studied regarding the expected value of the system state.

We examine the distributional robustness of the resulting control policy, using Wasserstein ambiguity sets, motivated by the DRC formulation~\cite{Yang2020}.  
Specifically, we evaluate our policy under the worst-case distribution in the ambiguity set.
A simple upper-bound of this worst-case cost is derived using the optimal value function of our minimax problem. 
A penalty parameter minimizing the upper-bound can be computed by solving a convex optimization problem, which is obtained exploiting the structure property of the value function.
This study of our minimax method under a DRC lens yields another salient feature that our policy attains a performance guarantee evaluated under a new sample, independent of data used in the controller design. 
The probabilistic \emph{out-of-sample performance} guarantee is shown using the measure concentration inequality for the Wasserstein metric.

Another interesting observation is a theoretical connection between our minimax method and the $H_\infty$-method. 
Our method with Wasserstein distance can be understood as a distributional generalization of the $H_\infty$-method, thereby bridging the gap between stochastic and robust control. 
This connection yields the robust stability property of our minimax controller. 
Conversely, our stochastic interpretation of the $H_\infty$-method enables us to analyze the $H_\infty$-controller from the perspective of distributional robustness.

This paper is significantly expanded from its preliminary conference version~\cite{Kim2020}. The study of our minimax method using
the DRC formulation with a Wasserstein ambiguity set is newly presented along with the out-of-sample performance guarantee.  
Furthermore, the infinite-horizon total cost results in~\cite{Kim2020} are extended to the average cost setting, identifying optimality conditions and the guaranteed cost property.  
Last but not least, this paper contains the results regarding the bounded-input, bounded-output stability and the robust stability of the closed-loop system.

\section{Problem Formulations}\label{sec:setup}

Let $\mathbb{S}_+^n$ (resp. $\mathbb{S}_{++}^n$) denote the set of symmetric positive semi-definite (resp. positive definite) matrices in $\mathbb{R}^{n \times n}$.
Given a Borel set $\mathcal{W}$, let  $\mathcal{P} (\mathcal{W})$ denote the set of Borel probability measures on $\mathcal{W}$.
Moreover, $\| \cdot \|$ represents the standard Euclidean norm.

\subsection{Ambiguity in Stochastic Systems}

Consider a  discrete-time linear stochastic system of the form
\begin{equation} \label{sys}
    x_{t+1} = Ax_t + Bu_t + \Xi w_t, 
\end{equation}
where $x_t \in \mathbb{R}^n$ and $u_t \in \mathbb{R}^m$ represent the system state and input, respectively. 
Here, $w_t \in  \mathbb{R} ^k$ is a random disturbance vector with probability distribution $\mu_t \in \mathcal{P}(\mathbb{R}^k)$.
In addition,
$A \in\mathbb{R}^{n \times n}$, $B\in\mathbb{R}^{n \times m}$, and $\Xi\in\mathbb{R}^{n \times k}$ are time-invariant system matrices.

In practice, it is challenging to obtain the true probability distribution $\mu_t$ of $w_t$. 
One of the most straightforward ways to estimate the distribution is to construct the following empirical distribution from sample data $\{ \hat{w}^{(1)}_t, \ldots, \hat{w}^{(N)}_t\}$ of $w_t$:
\begin{equation}\label{emp}
\nu_t := \frac{1}{N} \sum_{i=1}^N \delta_{\hat{w}^{(i)}_t},
\end{equation}
where $\delta_{\hat{w}^{(i)}_t}$ denotes the Dirac measure concentrated at $\hat{w}^{(i)}_t$.
However, it is undesirable to use this empirical distribution in controller design because the control performance would deteriorate as the true distribution deviates from $\nu_t$. 

\subsection{Minimax Control with Wasserstein Penalty}
Let $\pi := (\pi_0, \pi_1, \ldots)$ denote a deterministic Markov control policy, where $\pi_t$ maps the current state $x_t$ to 
an input $u_t$.\footnote{For ease of exposition, we focus on deterministic Markov policies. However, all the results in this paper are valid even when considering randomized history-dependent policies for both players by the optimality result in~\cite{Yang2020}.}
More precisely, the set of admissible control policies is given by
$\Pi:= \{ \pi \mid \pi_t (x_t) = u_t \in \mathbb{R}^m, \: \pi_t \mbox{ is measurable} \; \forall t\}$.
To design a controller that is robust against errors in the empirical distributions, 
we employ an (hypothetical) opponent that selects the probability distribution $\mu_t$ in an adversarial way.
The opponent policy $\gamma := (\gamma_0, \gamma_1, \ldots)$ is also assumed to be deterministic and Markov, where $\gamma_t$ maps the current state $x_t$ to a probability distribution $\mu_t$.
Specifically, the set of admissible opponent's policies is defined by
$\Gamma := \{\gamma \mid \gamma_t (x_t) = \mu_t \in \mathcal{P}(\mathbb{R}^k) \; \forall t\}$.
We first consider the finite-horizon case and later extend our results to the infinite-horizon case.

Suppose for a moment that the controller aims to minimize the standard quadratic cost function 
\begin{equation} \label{cost0}
\begin{split}
&J_{\bm{x}}(\pi, \gamma) = J_{\bm{x}, T}(\pi, \gamma) := \frac{1}{T}\mathbb{E}^{\pi, \gamma} \bigg [x_T^\top Q_f x_T 
+ \sum_{t=0}^{T-1} \big ( x_t^\top Q x_t + u_t^\top R u_t  \big )~\bigg\vert~x_0 = \bm{x}
\bigg ],
\end{split}
\end{equation}
 with $Q, Q_f \in \mathbb{S}_+^n$ and $R \in \mathbb{S}_{++}^m$, while the opponent determines $\gamma$ to maximize the same cost. 
If this were the case,
however, that would give too much freedom to the opponent, thereby causing the optimal controller to be overly conservative.  
To systematically adjust conservativeness, we penalize the opponent according to the degree of deviation from the empirical distributions $\nu_t$'s. 
By doing so, we can also incorporate the prior information provided by the sample data directly into the controller design. 
Specifically, the penalty is measured by the Wasserstein distance $W_2(\mu_t, \nu_t)$ between $\mu_t$ and $\nu_t$.
The Wasserstein metric of order $2$ between two distributions $\mu$ and $\nu$ is defined as
\begin{equation*}
\begin{split}
     W_2(\mu,\nu):= \inf_{\eta\in \mathcal{P}(\mathcal{W}^2)}
     \bigg\{ \left ( \int_{\mathcal{W}^2} \|x - y\|^2 \mathrm{d}\eta(x,y) \right ) ^{\frac{1}{2}}~\bigg \vert~\Pi^1\eta=\mu, \Pi^2 \eta=\nu \bigg\},  
\end{split}
\end{equation*}
where $\Pi^i \eta$ is the $i$th marginal distribution of $\eta$.
The cost function is then modified by adding a Wasserstein penalty term as follows:
\begin{equation} \label{cost1}
\begin{split}
&J^\lambda_{\bm{x}}(\pi, \gamma) = J_{\bm{x}, T}^\lambda (\pi, \gamma)  := \frac{1}{T}\mathbb{E}^{\pi, \gamma} \bigg [x_T^\top Q_f x_T 
+ \sum_{t=0}^{T-1} \big ( x_t^\top Q x_t + u_t^\top R u_t - \lambda W_2 (\mu_t, \nu_t)^2 \big )~\bigg\vert~x_0 = \bm{x}
\bigg ],
\end{split}
\end{equation}
where $\lambda >0$ is the penalty parameter. By definition, $J_{\bm{x}} (\pi, \gamma) = J_{\bm{x}}^0 (\pi, \gamma)$.
Tuning the parameter $\lambda$, we can adjust the conservativeness of our control policy that is obtained by solving the following minimax stochastic control problem:
\begin{equation} \label{opt}
\min_{\pi \in \Pi} \max_{\gamma \in \Gamma} J^\lambda_{\bm{x}} (\pi, \gamma).
\end{equation}
The inner maximization problem yields a worst-case distribution policy given $\pi$. 
Thus, an optimal solution $\pi^\star$ to the outer problem minimizes the worst-case cost.
Our first goal is to develop a Riccati equation-based solution to \eqref{opt} and analyze the properties of $\pi^\star$ such as closed-loop stability.

\subsection{Distributional Robustness}\label{sec:dual}

  A closely related minimax stochastic control formulation is the distributionally robust control problem~\cite{Yang2020}.
 This formulation uses Wasserstein ambiguity sets instead of the Wasserstein penalty term. 
Specifically, the Wasserstein ambiguity set  is defined as
\begin{equation}\label{eq:const}
\begin{split}
    \mathcal{D}_t &:= \{ \mu_t \in \mathcal{P}(\mathbb{R}^{k}) ~\vert~W_2(\mu_t, \nu_t) \leq \theta \}.
\end{split}
\end{equation}
The set $\mathcal{D}_t$ is a  statistical ball centered at the empirical distribution $\nu_t$, where the distance between any two elements  is measured by the Wasserstein metric. 
The opponent's policy $\gamma$ is then be restricted in the following space:
\begin{equation*}
\begin{split}
    \Gamma_\mathcal{D} &:= \{ \gamma \in \Gamma~\vert~  \gamma_t(x_t) \in \mathcal{D}_t \; \forall t    \}.
\end{split}
\end{equation*}
In words, the probability distribution produced by the opponent's policy must be contained in the Wasserstein ambiguity set. 
To achieve distributional robustness,
 it is desirable to design a controller that minimizes the expected cost under the worst-case distribution policy in $\Gamma_\mathcal{D}$.
Such a control policy can be obtained by solving the following Wasserstein distributionally robust control problem: 
\begin{equation}\label{opt2}
    \min_{\pi \in \Pi} \max_{\gamma \in \Gamma_\mathcal{D}} J_{\bm{x}} (\pi, \gamma),
\end{equation}
which can be solved by dynamic programming (DP). 
Unfortunately, the DP solution is not scalable due to the curse of dimensionality unlike our Riccati equation-based method. 
We claim that the optimal policy $\pi^\star$ of \eqref{opt} is a reasonable suboptimal solution to the DR control problem since
it has the following guaranteed-cost property:
\begin{equation}\label{eq:reform}
\sup_{\gamma \in \Gamma_\mathcal{D}} J_{\bm{x}} (\pi^\star(\lambda^\star), \gamma) 
    \leq \lambda^\star \theta^2 +    V (\bm{x}; \lambda^\star),
\end{equation}
where 
$\pi^\star (\lambda)$ denotes the optimal policy of \eqref{opt} with $\lambda$,
\[
V(\bm{x}; \lambda) := \inf_{\pi \in \Pi} \sup_{\gamma \in \Gamma} J^\lambda_{\bm{x}} (\pi, \gamma)
\]
denotes the optimal value function of \eqref{opt}, and
$\lambda^\star \in \argmin_{\lambda \geq 0} [ \lambda \theta^2 + V(\bm{x}; \lambda)]$
Note that the objective function of the minimization problem on the right-hand side can be evaluated by solving~\eqref{opt}.
Thus, the right-hand side provides a provable upper-bound on the worst-case cost of employing $\pi^\star(\lambda^\star)$. 
This upper-bound can be used to speculate the distributional robustness of $\pi^\star(\lambda^\star)$ and to quantify a probabilistic out-of-sample performance guarantee of $\pi^\star(\lambda^\star)$ as discussed in Section~\ref{sec:dr}.
In the following section, we first study the problem~\eqref{opt} to obtain an explicit solution in both finite-horizon and infinite-horizon cases and identify useful properties.
These results will then be used to analyze the distributional robustness of $\pi^\star(\lambda^\star)$ in Section~\ref{sec:dr}.

\section{Minimax Linear Quadratic Control with Wasserstein Penalty}\label{sec:penalty}
\subsection{Finite-Horizon Case}\label{sec:finite}

To begin with, we consider the regularized problem~\eqref{opt} in the finite-horizon setting with cost function $J^\lambda_{\bm{x}} (\pi, \gamma)$, defined in \eqref{cost1}. Later, we establish the connection between the finite-horizon and infinite-horizon cases by letting $T \to \infty$.

We use dynamic programming to solve the finite-horizon problem: 
let the optimal value function be defined by
$V_t(\bm{x}) = V_t(\bm{x}; \lambda) :=\inf_{\pi \in \Pi} \sup_{\gamma \in \Gamma}   \mathbb{E}^{\pi, \gamma} [ \sum_{s=t}^{T-1} (x_s^\top Q x_s + u_s^\top R u_s    -\lambda W_2(\mu_s, \nu_s)^2) 
+ x_T^\top Q_f x_T \mid x_t = \bm{x} ]$,
which represents the optimal worst-case expected cost-to-go from stage $t$ given $x_t = \bm{x}$.
By definition, $V (\bm{x}; \lambda) = V_0 (\bm{x}; \lambda) / T$.
The dynamic programming principle yields
\begin{equation}\nonumber
\begin{split}
    V_t (\bm{x}) = \bm{x}^\top Q \bm{x} &+ \inf_{\bm{u} \in \mathbb{R}^m} \sup_{\bm{\mu} \in \mathcal{P}(\mathbb{R}^k)} \bigg [ \bm{u}^\top R \bm{u} - \lambda W_2(\bm{\mu}, \nu_t)^2 +\int_{\mathbb{R}^k} V_{t+1} (A \bm{x} + B \bm{u} + \Xi w) \mathrm{d} \bm{\mu} (w) \bigg ]
\end{split}
\end{equation}
for $t = 0, \ldots, T-1$, and $V_T(\bm{x}) := \bm{x}^\top Q_f \bm{x}$.
Note that the inner maximization problem is an infinite-dimensional optimization problem over $\mathcal{P}(\mathbb{R}^k)$.
For a tractable reformulation, we use a modern DRO technique based on Kantorovich duality~\cite{Gao2016}, which yields
\begin{equation}\label{eq:backward}
\begin{split}
    &V_t (\bm{x}) =  \bm{x}^\top  Q \bm{x} + \inf_{\bm{u} \in \mathbb{R}^m} \bigg[ \bm{u}^\top R \bm{u}  +   \frac{1}{N} \sum_{i=1}^N \sup_{ w\in \mathbb{R}^k} \big  \{ V_{t+1}(A \bm{x} 
    + B \bm{u} + \Xi w)-\lambda \lVert \hat{w}^{(i)}_t - w \rVert^2 \big  \} \bigg].
\end{split}
\end{equation}
Let the mean and the covariance matrix of the empirical distribution be denoted by
\[
\bar{w}_t:= \mathbb{E}_{\nu_t} [w_t], \quad \Sigma_t:=\mathbb{E}_{\nu_t} [w_t w_t^\top].
\]
We also let
\begin{equation}\label{phi}
\Phi := BR^{-1} B^\top - \frac{1}{\lambda} \Xi \Xi^\top.
\end{equation}

We now consider the following ansatz of the value function: 
 $V_t (\bm{x}) = \bm{x}^\top P_t \bm{x} + 2 r_t^\top \bm{x} + z_t$, where $P_t  \in \mathbb{S}_+^{n}$, $r_t \in \mathbb{R}^n$ and $z_{t} \in \mathbb{R}$.
 Our goal is to identify an explicit solution to the minimax optimization problem in \eqref{eq:backward}. 
In what follows, we show that the quadratic structure of the value function is preserved through the Bellman recursion, and  the proposed parameterization would thus be exact if  matrices $P_t$ satisfy a Riccati equation.  

\begin{lemma}\label{lem:sol}
Suppose that 
\[
V_{t+1} (\bm{x}) = \bm{x}^\top P_{t+1} \bm{x} + 2r_{t+1}^\top \bm{x} + z_{t+1}
\]
 for some $P_{t+1}   \in \mathbb{S}_+^{n}$, $r_{t+1} \in \mathbb{R}^n$ and $z_{t+1} \in \mathbb{R}$. 
We further assume that the penalty parameter satisfies $\lambda > \bar{\lambda}_{t+1}$, where $\bar{\lambda}_{t+1}$ is the maximum eigenvalue of $\Xi^\top P_{t+1} \Xi$. 
Then, the inner maximization problem 
$\sup_{ w\in \mathbb{R}^k} \{ V_{t+1}(A \bm{x} 
    + B \bm{u} + \Xi w)-\lambda \lVert \hat{w}^{(i)}_t - w \rVert^2 \}$
in \eqref{eq:backward} has a unique maximizer $w_t^\star := (w^{\star,(1)}_{t}, \ldots, w^{\star,(N)}_{t})$, defined as
\begin{equation}\label{w_opt}
\begin{split}
w^{\star,(i)}_{t} &:= (\lambda I - \Xi^\top P_{t+1} \Xi)^{-1} (\Xi^\top P_{t+1} (A\bm{x} + B \bm{u}) + \Xi^\top r_{t+1} + \lambda \hat{w}_t^{(i)}).
\end{split}
\end{equation}
Furthermore, the outer minimization problem in \eqref{eq:backward} has a unique minimizer 
\begin{equation}\label{u_opt}
\begin{split}
\bm{u}^\star := K_t \bm{x} + L_t, 
\end{split}
\end{equation}
where
\begin{equation*}
\begin{split}
&K_t :=  -R^{-1} B^\top  (I + P_{t+1} \Phi)^{-1} P_{t+1} A,\\
&L_t :=  -R^{-1} B^\top  (I + P_{t+1} \Phi)^{-1} ( P_{t+1}\Xi \bar{w}_t + r_{t+1}).
\end{split}
\end{equation*}
\end{lemma}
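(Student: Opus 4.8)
The plan is to solve the nested problem in two stages, using the quadratic ansatz to collapse each subproblem into a concave or convex quadratic program with a closed-form stationary point.

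First I would address the inner maximization. Substituting the ansatz for $V_{t+1}$ evaluated at $A\bm x + B\bm u + \Xi w$ and writing $a := A\bm x + B\bm u$, the objective becomes a quadratic in $w$ with Hessian $2(\Xi^\top P_{t+1}\Xi - \lambda I)$. The hypothesis $\lambda > \bar\lambda_{t+1}$ makes this Hessian negative definite, so the problem is strictly concave with a unique stationary point; solving the first-order condition $(\Xi^\top P_{t+1}\Xi - \lambda I)w + \Xi^\top P_{t+1}a + \Xi^\top r_{t+1} + \lambda\hat w_t^{(i)} = 0$ gives \eqref{w_opt}, with $\lambda I - \Xi^\top P_{t+1}\Xi \succ 0$ guaranteeing invertibility.

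Next I would substitute this maximizer back, complete the square to obtain the per-sample optimal value, and average over $i = 1,\dots,N$. Using $\frac1N\sum_i\hat w_t^{(i)} = \bar w_t$ and $\frac1N\sum_i\hat w_t^{(i)}(\hat w_t^{(i)})^\top = \Sigma_t$, the averaged value reduces to a quadratic in $a$. The crucial algebraic step is to recognize, via the push-through identity, that its quadratic coefficient equals $\tilde P := P_{t+1} + P_{t+1}\Xi(\lambda I - \Xi^\top P_{t+1}\Xi)^{-1}\Xi^\top P_{t+1} = P_{t+1}(I - \tfrac1\lambda\Xi\Xi^\top P_{t+1})^{-1}$, which is symmetric PSD and well-defined because $\lambda I - \Xi^\top P_{t+1}\Xi\succ 0$ forces $I - \tfrac1\lambda\Xi\Xi^\top P_{t+1}$ to be nonsingular (by Sylvester's determinant identity).

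Finally, adding $\bm u^\top R\bm u$ and using $a = A\bm x + B\bm u$ yields an objective strictly convex in $\bm u$ with Hessian $R + B^\top\tilde P B\succ 0$, whose unique minimizer is $\bm u^\star = -(R + B^\top\tilde P B)^{-1}(B^\top\tilde P A\,\bm x + B^\top\ell)$ for an appropriate linear term $\ell$. The main obstacle — and the only step beyond bookkeeping — is to convert this into the stated $K_t, L_t$ built from $\Phi = BR^{-1}B^\top - \tfrac1\lambda\Xi\Xi^\top$. For the gain I would use $(R + B^\top\tilde P B)^{-1}B^\top\tilde P = R^{-1}B^\top\tilde P(I + BR^{-1}B^\top\tilde P)^{-1}$ together with $I + BR^{-1}B^\top\tilde P = (I + \Phi P_{t+1})(I - \tfrac1\lambda\Xi\Xi^\top P_{t+1})^{-1}$, which telescopes $\tilde P(I + BR^{-1}B^\top\tilde P)^{-1}$ down to $(I + P_{t+1}\Phi)^{-1}P_{t+1}$ and yields $K_t$; the affine part $L_t$ follows from the transposed identity $(I + \tilde P BR^{-1}B^\top)^{-1} = (I + P_{t+1}\Phi)^{-1}(I - \tfrac1\lambda P_{t+1}\Xi\Xi^\top)$ after checking that $(I - \tfrac1\lambda P_{t+1}\Xi\Xi^\top)\ell$ collapses to $P_{t+1}\Xi\bar w_t + r_{t+1}$. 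These manipulations also certify that $I + P_{t+1}\Phi$ is invertible, so that the stated formulas are well-posed.
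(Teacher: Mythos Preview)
Your argument is correct. The inner maximization is handled identically to the paper. For the outer minimization, however, the paper takes a slightly different computational route: rather than substituting the inner maximizer back, collapsing to a quadratic in $\bm u$ with coefficient $\tilde P$, and then invoking push-through identities to reach the $\Phi$-form, the paper differentiates the composite objective directly and uses the inner first-order condition \eqref{eq1} to cancel the $\partial w^{\star,(i)}/\partial\bm u$ terms, leaving $2R\bm u + 2B^\top g_t(\bm u)$ with $g_t(\bm u) := \tfrac{1}{2N}\sum_i V_{t+1}'(A\bm x + B\bm u + \Xi w_t^{\star,(i)})$. Setting this to zero and feeding $\bm u^\star = -R^{-1}B^\top g_t^\star$ back into the definition of $g_t$ produces a linear self-consistency equation whose solution already contains $(I + P_{t+1}\Phi)^{-1}$, so no post-hoc matrix identities are needed. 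Your explicit-substitution route is equally valid and arguably more transparent about strict convexity (the Hessian $R + B^\top\tilde P B$ is visibly positive definite); the paper's implicit route is shorter and lands on the $\Phi$-form automatically.
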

 
Note that $w_t^\star$ is linear in $(\bm{x}, \bm{u})$ and $\bm{u}^\star$ is linear in $\bm{x}$. 
The explicit derivation with this linear structure yields the following Riccati equation:
\begin{equation}\label{ric}
\begin{split}
P_t &= Q + A^\top (I + P_{t+1}\Phi )^{-1} P_{t+1} A\\
r_t &=  A^\top (I + P_{t+1} \Phi)^{-1} ( P_{t+1}\Xi \bar{w}_t + r_{t+1} )\\
z_t &= z_{t+1}  + \mathrm{tr}   [ (I -  \Xi^\top P_{t+1} \Xi/\lambda )^{-1} \Xi^\top P_{t+1} \Xi \Sigma_t   ]\\
&+\bar{w}_t^\top \Xi^\top  [(I+P_{t+1}\Phi)^{-1}-(I- P_{t+1} \Xi \Xi^\top/\lambda)^{-1}] P_{t+1} \Xi \bar{w}_t\\
&  + (2  \bar{w}_t^\top \Xi^\top - r_{t+1}^\top \Phi)(I+P_{t+1} \Phi)^{-1} r_{t+1}
\end{split}
\end{equation}
with the terminal conditions $P_T  = Q_f$, $r_T =0$, and $z_T = 0$.
Note that $P_t$, $t = 0, \ldots, T-1$, are symmetric since $P_T$ is symmetric.
For the well-definedness of the recursion, we make the following assumption:
\begin{assumption}\label{ass:pen}
The penalty parameter satisfies $\lambda > \bar{\lambda}_{t}$ for all~$t\geq1$, where $\bar{\lambda}_{t}$ is the maximum eigenvalue of $\Xi^\top P_{t} \Xi$.
\end{assumption}

\begin{theorem}[Optimal policy]\label{thm:fin} 
Suppose that Assumption~\ref{ass:pen} holds.
Then, the matrices $P_t$ are well-defined and 
the value function can be expressed as
\[
V_t(\bm{x}) = \bm{x}^\top P_t \bm{x} + 2 r_t^\top \bm{x} +  z_t, \quad t = 0, \ldots, T. 
\]
Furthermore, the regularized problem~\eqref{opt} in the finite-horizon case has a unique optimal policy, defined as
\begin{equation} \label{opt_policy}
\pi^\star_t (\bm{x}) := K_t \bm{x} + L_t, \quad t = 0, \ldots, T-1.
\end{equation}
\end{theorem}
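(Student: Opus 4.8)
The plan is to establish all three assertions simultaneously by backward induction on $t$, running from the terminal stage $t=T$ down to $t=0$, with Lemma~\ref{lem:sol} supplying the one-stage update. The induction hypothesis at stage $t$ is the conjunction of three statements: $P_t$ is well-defined, $P_t \in \mathbb{S}_+^n$, and $V_t(\bm{x}) = \bm{x}^\top P_t \bm{x} + 2 r_t^\top \bm{x} + z_t$. The base case $t=T$ is immediate, since $V_T(\bm{x}) = \bm{x}^\top Q_f \bm{x}$ matches the ansatz with $P_T = Q_f \in \mathbb{S}_+^n$, $r_T = 0$, $z_T = 0$.

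For the inductive step, suppose the hypothesis holds at $t+1$. Because $P_{t+1}$ is then a well-defined PSD matrix, the quantity $\bar{\lambda}_{t+1}$ (the maximum eigenvalue of $\Xi^\top P_{t+1}\Xi$) is meaningful, and Assumption~\ref{ass:pen} guarantees $\lambda > \bar{\lambda}_{t+1}$; note that as $t$ ranges over $0,\dots,T-1$ the index $t+1$ ranges over $1,\dots,T$, which is exactly the range covered by the assumption. This is precisely the hypothesis required to invoke Lemma~\ref{lem:sol}, which identifies the unique maximizer $w_t^\star$ in~\eqref{w_opt} of the inner problem and the unique minimizer $\bm{u}^\star = K_t \bm{x} + L_t$ in~\eqref{u_opt} of the outer problem in~\eqref{eq:backward}. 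Substituting $w_t^\star$ and $\bm{u}^\star$ back into~\eqref{eq:backward} and collecting the terms quadratic, linear, and constant in $\bm{x}$ reproduces the quadratic form with coefficients given exactly by the Riccati recursion~\eqref{ric}, modulo the well-definedness of the inverses appearing in~\eqref{ric}, which I verify next.

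The main obstacle, and the step I would treat with the most care, is showing that $P_t$ is symmetric and PSD, since this is what actually closes the induction (Lemma~\ref{lem:sol} requires $P_{t+1} \in \mathbb{S}_+^n$). Symmetry follows from the push-through identity $(I + P_{t+1}\Phi)^{-1} P_{t+1} = P_{t+1}(I + \Phi P_{t+1})^{-1}$, valid because $\Phi$ in~\eqref{phi} and $P_{t+1}$ are symmetric. For positive semi-definiteness I would decouple the disturbance-maximization and control-minimization steps. Since $\lambda > \bar{\lambda}_{t+1}$ implies $\lambda I - \Xi^\top P_{t+1}\Xi \succ 0$, the Woodbury identity lets me introduce $\hat{P}_{t+1} := (I - \tfrac{1}{\lambda}P_{t+1}\Xi\Xi^\top)^{-1} P_{t+1} = P_{t+1} + P_{t+1}\Xi(\lambda I - \Xi^\top P_{t+1}\Xi)^{-1}\Xi^\top P_{t+1}$, which is the sum of two PSD matrices and hence lies in $\mathbb{S}_+^n$ (this also certifies that $I + P_{t+1}\Phi$ is invertible, securing well-definedness). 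A second Woodbury manipulation rewrites the update as the standard LQR operator $P_t = Q + A^\top \hat{P}_{t+1} A - A^\top \hat{P}_{t+1} B (R + B^\top \hat{P}_{t+1} B)^{-1} B^\top \hat{P}_{t+1} A$; with $Q \in \mathbb{S}_+^n$, $R \in \mathbb{S}_{++}^m$, and $\hat{P}_{t+1} \in \mathbb{S}_+^n$, the right-hand side is PSD by the usual completion-of-squares bound $\bm{x}^\top P_t \bm{x} = \bm{x}^\top Q \bm{x} + \min_{\bm{v}} \{ (A\bm{x}+B\bm{v})^\top \hat{P}_{t+1}(A\bm{x}+B\bm{v}) + \bm{v}^\top R \bm{v}\} \geq 0$, closing the induction.

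Finally, having established the quadratic form of $V_t$ for all $t$, I would read off optimality and uniqueness of the policy. At each stage the dynamic programming principle leading to~\eqref{eq:backward} identifies the minimizing input, and Lemma~\ref{lem:sol} shows this minimizer is the unique $\bm{u}^\star = K_t\bm{x} + L_t$; assembling these state-feedback maps over $t = 0,\dots,T-1$ yields the admissible deterministic Markov policy~\eqref{opt_policy}, which is therefore the unique optimal policy for~\eqref{opt} in the finite-horizon setting.
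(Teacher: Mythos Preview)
Your proof is correct and shares the same backward-induction skeleton as the paper's, but differs in two substantive ways worth noting. First, where you propose to substitute $w_t^\star$ and $\bm{u}^\star$ directly into~\eqref{eq:backward} and collect the quadratic, linear, and constant terms in $\bm{x}$, the paper instead differentiates the Bellman equation with respect to $\bm{x}$, exploiting the first-order conditions~\eqref{eq1} and~\eqref{u_eq} so that the envelope terms cancel and one obtains $\tfrac{1}{2}V_t'(\bm{x}) = Q\bm{x} + A^\top g_t^\star = P_t\bm{x} + r_t$; only afterwards does it compute the constant $z_t$ via a separate (and lengthier) substitution. The differentiation shortcut makes the identification of $P_t$ and $r_t$ cleaner, while your direct-substitution route is more uniform but requires tracking all three coefficient classes at once. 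Second, and more importantly, you explicitly propagate $P_t \in \mathbb{S}_+^n$ through the induction by introducing $\hat{P}_{t+1}$ and reducing to the standard LQR Schur-complement form; the paper's proof never verifies this, even though $P_{t+1} \in \mathbb{S}_+^n$ is a stated hypothesis of Lemma~\ref{lem:sol} (it is needed, in particular, for the outer Hessian in that lemma's proof to be positive definite). Your argument therefore closes a small gap the paper leaves implicit.
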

 
As in the standard LQG, the optimal policy is linear in system state and gain matrix $K_t$ can be obtained by solving a Riccati equation. 
Note that the Riccati equation in the standard LQG is given by (e.g.,~\cite{Astrom2012})
\begin{equation}\label{ric_lqg}
\begin{split}
P_t &= Q + A^\top   (I + P_{t+1} B R^{-1}B^\top )^{-1} P_{t+1} A\\
 r_t &=  A^\top   (I + P_{t+1} B R^{-1}B^\top )^{-1} ( P_{t+1}\Xi \bar{w}_t + r_{t+1})\\
z_t &= z_{t+1} + \mathrm{tr}[ \Xi^\top P_{t+1} \Xi \Sigma_t ]\\
&- \bar{w}_t^\top \Xi^\top  P_{t+1} B R^{-1}B^\top (I + P_{t+1} B R^{-1}B^\top )^{-1} P_{t+1} \Xi \bar{w}_t \\
& + (2  \bar{w}_t^\top \Xi^\top - r_{t+1}^\top  B R^{-1}B^\top)(I+P_{t+1} B R^{-1}B^\top)^{-1} r_{t+1},
\end{split}
\end{equation}
and it can be obtained by letting $\lambda \to \infty$ in \eqref{ric}.
Increasing $\lambda$ encourages the opponent not to deviate much from the empirical distribution $\nu_t$. Thus, in the limit,  our minimax method is equivalent to the standard LQG. 
This shows that our proposed framework is a generalization of LQG.

Another immediate consequence of Lemma~\ref{lem:sol} and Theorem~\ref{thm:fin} is that one of the worst-case distributions can be explicitly obtained with a finite support, as follows:

\begin{cor}[Worst-case distribution]\label{cor:dist}
Suppose that Assumption~\ref{ass:pen} holds. 
Let 
\begin{equation} \nonumber
\begin{split}
&w_t^{\star, (i)} (\bm{x}) := (\lambda I - \Xi^\top P_{t+1} \Xi)^{-1}  (\Xi^\top P_{t+1} (A \bm{x} + BK_t \bm{x} + BL_t) + \Xi^\top r_{t+1} + \lambda \hat{w}_t^{(i)}).
\end{split}
\end{equation}
Then, the policy $\gamma^\star$ defined by
\[
\gamma^\star_t (\bm{x}) := \frac{1}{N} \sum_{i=1}^N \delta_{w_t^{\star, (i)} (\bm{x})} 
\]
generates the worst-case distribution, i.e., $(\pi^\star, \gamma^\star)$ is an optimal minimax solution to \eqref{opt} in the finite-horizon case.
\end{cor}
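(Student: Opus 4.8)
The plan is to derive the corollary entirely from the stagewise structure already established in Lemma~\ref{lem:sol} and Theorem~\ref{thm:fin}; the only genuinely new point is that the finite-support measure prescribed by $\gamma^\star$ actually \emph{attains} the inner supremum over the infinite-dimensional set $\mathcal{P}(\mathbb{R}^k)$ in the Bellman recursion. Concretely, fix a stage $t$ and a state $\bm{x}$, and substitute the optimal input $\bm{u}^\star = \pi^\star_t(\bm{x}) = K_t\bm{x}+L_t$ from \eqref{opt_policy} into the dynamic programming equation that precedes \eqref{eq:backward},
\[
V_t(\bm{x}) = \bm{x}^\top Q\bm{x} + \inf_{\bm{u}\in\mathbb{R}^m}\sup_{\bm{\mu}\in\mathcal{P}(\mathbb{R}^k)}\Big[\bm{u}^\top R\bm{u} - \lambda W_2(\bm{\mu},\nu_t)^2 + \int_{\mathbb{R}^k} V_{t+1}(A\bm{x}+B\bm{u}+\Xi w)\,\mathrm{d}\bm{\mu}(w)\Big].
\]
Writing $f(w):=V_{t+1}(A\bm{x}+B\bm{u}^\star+\Xi w)$, it suffices to show that $\mu^\star_t := \frac{1}{N}\sum_{i=1}^N \delta_{w^{\star,(i)}_t(\bm{x})}$ solves $\sup_{\bm{\mu}}[\,\int f\,\mathrm{d}\bm{\mu} - \lambda W_2(\bm{\mu},\nu_t)^2\,]$, since then $\gamma^\star_t(\bm{x})=\mu^\star_t$ realizes the worst-case distribution at stage $t$.

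For the attainment I would use the two directions of the Kantorovich duality separately. The inequality
\[
\int f\,\mathrm{d}\bm{\mu} - \lambda W_2(\bm{\mu},\nu_t)^2 \le \frac{1}{N}\sum_{i=1}^N \sup_{w\in\mathbb{R}^k}\big\{ f(w) - \lambda\lVert \hat w_t^{(i)} - w\rVert^2\big\},
\]
valid for every $\bm{\mu}$, is exactly the easy direction of the duality already invoked in passing from the Bellman equation to \eqref{eq:backward}~\cite{Gao2016}; its right-hand side is the dual value. For the reverse bound at $\mu^\star_t$, I would exhibit the explicit coupling $\eta^\star := \frac{1}{N}\sum_{i=1}^N \delta_{(\hat w_t^{(i)},\, w^{\star,(i)}_t(\bm{x}))}$, whose marginals are $\nu_t$ and $\mu^\star_t$; it yields $W_2(\mu^\star_t,\nu_t)^2 \le \frac{1}{N}\sum_i \lVert \hat w_t^{(i)} - w^{\star,(i)}_t(\bm{x})\rVert^2$, so that, because each $w^{\star,(i)}_t(\bm{x})$ is precisely the unique maximizer of $\sup_w\{f(w)-\lambda\lVert \hat w_t^{(i)}-w\rVert^2\}$ by Lemma~\ref{lem:sol} (with $\bm{u}=\bm{u}^\star$), the penalized objective evaluated at $\mu^\star_t$ is at least the dual value. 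Combining the two inequalities forces equality, proving that $\mu^\star_t$ is a maximizer.

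It then remains to upgrade this stagewise statement to optimality of the pair $(\pi^\star,\gamma^\star)$ for \eqref{opt}. First, $\gamma^\star$ is admissible: by \eqref{w_opt} each $w^{\star,(i)}_t(\bm{x})$ is affine in $\bm{x}$, so $\bm{x}\mapsto\gamma^\star_t(\bm{x})$ is measurable (indeed Lipschitz in the Wasserstein metric), hence $\gamma^\star\in\Gamma$. Next I would run a backward induction, with base case $V_T(\bm{x})=\bm{x}^\top Q_f\bm{x}$, showing that $V_t(\bm{x})$ equals the worst-case cost-to-go \emph{accrued under} $(\pi^\star,\gamma^\star)$: at each stage the inf is attained by $\bm{u}^\star=\pi^\star_t(\bm{x})$ (Theorem~\ref{thm:fin}) and the sup by $\mu^\star_t=\gamma^\star_t(\bm{x})$ (just shown), so that
\[
V_t(\bm{x}) = \bm{x}^\top Q\bm{x} + \pi^\star_t(\bm{x})^\top R\,\pi^\star_t(\bm{x}) - \lambda W_2(\gamma^\star_t(\bm{x}),\nu_t)^2 + \mathbb{E}^{\pi^\star,\gamma^\star}\big[V_{t+1}(A\bm{x}+B\pi^\star_t(\bm{x})+\Xi w)\big],
\]
which is the stage-$t$ cost plus the inductive cost-to-go. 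Telescoping from $t=0$ gives $J^\lambda_{\bm{x}}(\pi^\star,\gamma^\star) = V_0(\bm{x};\lambda)/T = V(\bm{x};\lambda)$, and since Theorem~\ref{thm:fin} already identifies $V(\bm{x};\lambda)$ with the minimax value of \eqref{opt}, the pair $(\pi^\star,\gamma^\star)$ attains it, which is the claimed optimal minimax solution.

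The main obstacle is the attainment step in the second paragraph: \eqref{eq:backward} only delivers the \emph{value} of the inner problem through duality, and one must verify that the concrete discrete measure built from the pointwise maximizers of Lemma~\ref{lem:sol} is genuinely optimal for the original measure-valued problem rather than merely matching the dual bound. The coupling argument is what closes this gap; once it is in place, admissibility and the verification induction are routine, and no regularity beyond Assumption~\ref{ass:pen} (which guarantees that the Lemma~\ref{lem:sol} maximizers are well defined) is required.
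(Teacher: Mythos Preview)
Your proposal is correct and, in fact, considerably more detailed than what the paper provides: the paper states Corollary~\ref{cor:dist} as ``an immediate consequence of Lemma~\ref{lem:sol} and Theorem~\ref{thm:fin}'' and gives no separate proof. Your coupling argument---exhibiting $\eta^\star = \frac{1}{N}\sum_i \delta_{(\hat w_t^{(i)},\, w^{\star,(i)}_t(\bm{x}))}$ to bound $W_2(\mu_t^\star,\nu_t)^2$ from above and then combining with the per-sample optimality from Lemma~\ref{lem:sol}---is exactly the right way to close the gap between the dual value in~\eqref{eq:backward} and the primal attainment that the corollary asserts. The paper implicitly relies on this being standard in the Wasserstein DRO literature (the reference~\cite{Gao2016} it cites for the Kantorovich duality also contains this primal-attainment construction), whereas you spell it out. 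The subsequent verification induction and the measurability check for $\gamma^\star$ are likewise correct and routine, as you note.
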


\subsection{Infinite-Horizon Case}\label{sec:infinite}

In this subsection, we investigate an optimal controller for the infinite-horizon case when the number of stage $T$ increases to $\infty$.
We consider the following infinite-horizon average cost criterion:
\begin{equation}\label{ac_cost}
\begin{split}
J^{\lambda}_{\bm{x},\infty}(\pi, \gamma) := \limsup_{T \rightarrow \infty} &\frac{1}{T} \mathbb{E}^{\pi, \gamma} \bigg[ \sum_{t=0}^{T-1} ( x_t^\top Q x_t 
+ u_t^\top R u_t - \lambda W_2(\mu_t, \nu)^2)~\bigg\vert~x_0 = \bm{x} \bigg].
\end{split}
\end{equation}
Based on the results in the finite-horizon case, we begin by identifying the steady-state policy that our optimal policy converges to.
Our specific goal is to derive an algebraic Riccati equation (ARE) and characterize the condition under which the recursion~\eqref{ric} converges to 
 a unique symmetric PSD solution of the ARE. 

Throughout this subsection, we assume the following for the stationarity of the problem.
\begin{assumption}\label{ass:st}
The random disturbance process $\{w_t\}_{t=0}^\infty$ is i.i.d.,
and its empirical distribution is constructed as
$\nu \equiv \nu_t := \frac{1}{N} \sum_{i=1}^N \delta_{\hat{w}^{(i)}}$ from the dataset $\{ \hat{w}^{(1)}, \ldots, \hat{w}^{(N)} \}$.
\end{assumption}
Under Assumption~\ref{ass:st}, we denote the mean value and the covariance matrix of $\nu$ by $\bar{w}$ and $\Sigma$.
Based on the iteration~\eqref{ric}, our focus is on finding a solution to the following algebraic Riccati equation (ARE):
\begin{equation}\label{are}
P = Q + A^\top  \bigg [ I + PBR^{-1} B^\top - \frac{1}{\lambda} P \Xi \Xi^\top \bigg ]^{-1} PA.
\end{equation}
Note that the ARE \eqref{are} has an equivalent form to the ARE in the classical $H_\infty$-optimal control (see \cite{Basar2008}[Section 3.2]). The specific relationship between our minimax method and the $H_\infty$-method will be discussed in the Section~\ref{sec:Hinfinity}.

\subsubsection{Algebraic Riccati Equation}

We first show that 
$P_t$ updated by~\eqref{ric} converges to a unique PSD solution of the ARE~\eqref{are} under suitable nontrivial stabilizability and observability conditions.
Recall that the symmetric matrix $\Phi$ is defined as \eqref{phi}.
We make the following assumption on $\Phi$:

\begin{assumption} \label{ass:W}
$\Phi \succeq 0$, and $(A, {\Phi}^{1/2})$ is stabilizable.
\end{assumption}

\begin{proposition}\label{prop:are}
Suppose that Assumptions~\ref{ass:pen}--\ref{ass:W} hold. 
Then, a bounded limiting solution $P_{ss}:= \lim_{T \to \infty} P_t$ to the Riccati equation~\eqref{ric} exists for any  $P_T \in \mathbb{S}_+^n$.
Furthermore, $P_{ss}$ is a symmetric PSD solution to the ARE \eqref{are}. 
\end{proposition}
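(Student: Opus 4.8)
The plan is to recognize that the matrix recursion in~\eqref{ric} is exactly the value-iteration (Riccati) recursion of an auxiliary \emph{standard} LQ control problem. Writing $\Phi \succeq 0$ as $\Phi = \Phi^{1/2}\Phi^{1/2}$ and completing the square shows that $P_t = Q + A^\top (I + P_{t+1}\Phi)^{-1}P_{t+1}A$ is the dynamic-programming recursion for $\min_{v}\sum_{s=t}^{T-1}(x_s^\top Q x_s + \|v_s\|^2) + x_T^\top P_T x_T$ subject to the fictitious dynamics $x_{s+1} = A x_s + \Phi^{1/2} v_s$. This reduces the proposition to a convergence statement for the classical discrete-time LQ Riccati recursion associated with the pair $(A,\Phi^{1/2})$ and weights $(Q,I)$, for which Assumption~\ref{ass:W} is precisely the stabilizability hypothesis. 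I would first record two structural facts that make the recursion well-behaved on $\mathbb{S}_+^n$: since $\Phi\succeq 0$ and $P\succeq 0$ the eigenvalues of $P\Phi$ coincide with those of $\Phi^{1/2}P\Phi^{1/2}\geq 0$, so $I+P\Phi$ is invertible and the Riccati operator $\mathcal{R}(P):=Q + A^\top(I+P\Phi)^{-1}PA$ is well-defined and continuous on $\mathbb{S}_+^n$; and using $(I+P\Phi)^{-1}P=(P^{-1}+\Phi)^{-1}$ one checks that $\mathcal{R}$ is monotone with respect to the L\"owner order when $\Phi\succeq 0$.

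Next I would prove convergence first for the zero terminal condition $P_T=0$. Monotonicity gives $\mathcal{R}(0)=Q\succeq 0$, and iterating shows the reindexed sequence is non-decreasing; equivalently, the finite-horizon optimal cost of the auxiliary problem is non-decreasing in the horizon. For the upper bound I would invoke Assumption~\ref{ass:W}: choose $F$ with $A+\Phi^{1/2}F$ Schur stable, estimate the cost incurred by the suboptimal feedback $v_s = F x_s$, and use geometric decay of the closed-loop trajectory to bound $\bm{x}^\top P_t \bm{x}$ uniformly in the horizon. A non-decreasing, bounded sequence in $\mathbb{S}_+^n$ converges, and continuity of $\mathcal{R}$ passes the limit through $P_t=\mathcal{R}(P_{t+1})$, so the limit is a symmetric PSD solution of the ARE~\eqref{are}. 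Lower boundedness $P_t\succeq 0$ comes for free from the non-negativity of the auxiliary cost, which also secures symmetry and positive semidefiniteness of the limit.

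To reach the stated conclusion for an \emph{arbitrary} $P_T\in\mathbb{S}_+^n$, I would reuse the stabilizing-feedback estimate, which bounds $\bm{x}^\top P_t \bm{x}$ uniformly for any terminal penalty because the terminal contribution $x_T^\top P_T x_T$ decays geometrically along the stable closed loop; together with $P_t\succeq 0$ this traps the whole sequence in an order interval $0\preceq P_t\preceq \bar P$. The remaining---and I expect hardest---step is to upgrade this boundedness to genuine convergence when $P_T\neq 0$, since the clean horizon-monotonicity used above is available only from the zero terminal condition. Here I would try to sandwich the sequence between the monotone sequence $\mathcal{R}^\tau(0)$ from below and a monotone super-solution sequence from above, and control the deviation between two Riccati flows through the standard identity that writes the difference of two flows as a product formed from the associated closed-loop transition matrices. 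The delicate point is that these closed-loop matrices need not be stable under stabilizability alone (this is exactly what the later observability condition repairs to force a \emph{unique} limit), so I would argue convergence of each individual flow directly from boundedness and monotonicity of $\mathcal{R}$ rather than from convergence to a common limit; establishing this convergence rigorously, while allowing the limit to depend on $P_T$, is the crux of the proof.
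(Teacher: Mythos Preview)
Your reduction is exactly the paper's argument: rewrite the recursion using $\Phi=\Phi^{1/2}\Phi^{1/2}$ to recognize it as the standard discrete-time LQ Riccati equation for the pair $(A,\Phi^{1/2})$ with input weight $R'=I$, so that Assumption~\ref{ass:W} is precisely the stabilizability hypothesis of the classical convergence theorem. The paper then simply invokes the standard LQG result (citing \cite[Theorem~2.4-1]{Lewis2012}) and stops; it does not re-derive monotonicity, boundedness, or convergence from scratch.

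Your attempt to reprove the classical result is therefore unnecessary, and the difficulty you flag in the last paragraph is genuine. Under stabilizability alone, convergence of the standard Riccati recursion from an \emph{arbitrary} $P_T\in\mathbb{S}_+^n$ is delicate: the closed-loop matrices associated with the iterates need not be uniformly stable, so the difference-of-flows product identity does not immediately yield contraction, and the limit can in principle depend on $P_T$ (which is consistent with the proposition as stated and is exactly why the later observability assumption is introduced to pin down uniqueness). The paper sidesteps all of this by citation. If you want a self-contained argument, the cleanest fix is to quote the known LQR convergence theorem after your reduction rather than rebuild it; otherwise your sandwiching plan would still need either an a~priori stability bound on the iterated closed loops or a monotone super-solution $\bar P\succeq P_T$ that is itself a fixed point, and producing such a $\bar P$ without detectability essentially re-imports the classical theory you are trying to avoid.
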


To solve the ARE~\eqref{are}, we use the method proposed in~\cite{Pappas1980}, considering the generalized eigenvalue problem of $F$ and $G$
\begin{equation}\label{gen}
F v = \gamma G v,
\end{equation}
where $F := \begin{bmatrix} A & 0 \\ -Q & I\end{bmatrix}$ and $G :=\begin{bmatrix} I & \Phi \\ 0 & A^\top\end{bmatrix}$.

\begin{lemma}\label{lem:sol2}
Any solution of the ARE~\eqref{are} can be expressed as
\[
P = U_2 U_1^{-1},
\]
where each column of $\begin{bmatrix} U_1 \\ U_2 \end{bmatrix} \in \mathbb{R}^{2n \times n}$ solves the generalized eigenvalue problem~\eqref{gen} of $F$ and $G$.
\end{lemma}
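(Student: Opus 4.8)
The plan is to show that any solution $P$ of the ARE \eqref{are} induces a \emph{deflating subspace} of the matrix pencil $(F,G)$, namely the column space of $\begin{bmatrix} I \\ P\end{bmatrix}$, and then to read off $U_1$ and $U_2$ from a basis of eigenvectors adapted to the closed-loop dynamics restricted to that subspace.

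First, I would record the matrix identity $(I + P\Phi)^{-1}P = P(I + \Phi P)^{-1}$. This holds because $(I + P\Phi)P = P + P\Phi P = P(I + \Phi P)$, together with the fact that $I + \Phi P$ is invertible whenever $I + P\Phi$ is, by Sylvester's determinant identity $\det(I + \Phi P) = \det(I + P\Phi)$. Since a valid ARE solution requires $I + P\Phi$ to be invertible (it is the bracketed matrix in \eqref{are}), both inverses exist. Using this identity, the ARE can be rewritten as $P = Q + A^\top P (I + \Phi P)^{-1} A$.

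Next, define the closed-loop matrix $S := (I + \Phi P)^{-1} A$ and verify the deflating-subspace relation $F\begin{bmatrix} I \\ P\end{bmatrix} = G\begin{bmatrix} I \\ P\end{bmatrix} S$ by a blockwise computation. Indeed, $F\begin{bmatrix} I \\ P\end{bmatrix} = \begin{bmatrix} A \\ P - Q\end{bmatrix}$ while $G\begin{bmatrix} I \\ P\end{bmatrix} S = \begin{bmatrix} (I + \Phi P) S \\ A^\top P S\end{bmatrix}$; the top blocks agree by the definition of $S$, and the bottom-block identity $A^\top P (I + \Phi P)^{-1}A = P - Q$ is precisely the rewritten ARE established in the previous step. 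Then, taking an eigendecomposition $S V = V \Lambda$ with $V \in \mathbb{R}^{n\times n}$ invertible and $\Lambda$ diagonal (the generically diagonalizable case; a defective $S$ is handled with generalized eigenvectors and an upper-triangular $\Lambda$ spanning the same invariant subspace), right-multiplication by $V$ yields $F\begin{bmatrix} V \\ PV\end{bmatrix} = G\begin{bmatrix} V \\ PV\end{bmatrix}\Lambda$. Setting $\begin{bmatrix} U_1 \\ U_2\end{bmatrix} := \begin{bmatrix} V \\ PV\end{bmatrix}$, each column $\begin{bmatrix} v_j \\ P v_j\end{bmatrix}$ satisfies $F u_j = \gamma_j G u_j$, i.e., solves \eqref{gen}; and since $U_1 = V$ is invertible, $P = (PV)V^{-1} = U_2 U_1^{-1}$.

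The step I expect to be the main obstacle is establishing the deflating relation cleanly, specifically the manipulation converting the ARE's $(I + P\Phi)^{-1}P$ into $P(I + \Phi P)^{-1}$ and confirming that both inverses are legitimate via Sylvester's identity. The subsequent eigendecomposition is routine, apart from the bookkeeping required when $S$ is not diagonalizable, where one must invoke a basis of generalized eigenvectors spanning the same $S$-invariant subspace so that the columns still solve \eqref{gen} in the generalized sense used in \cite{Pappas1980}.
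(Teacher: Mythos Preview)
Your proof is correct and follows essentially the same route as the paper's: both define the closed-loop matrix $S=(I+\Phi P)^{-1}A$ (the paper calls it $E$), take its Jordan/eigen decomposition $S=U_1\Lambda U_1^{-1}$, set $U_2=PU_1$, and verify the two block rows of $F\begin{bmatrix}U_1\\U_2\end{bmatrix}=G\begin{bmatrix}U_1\\U_2\end{bmatrix}\Lambda$ using the definition of $S$ and the ARE. Your additional justification of the identity $(I+P\Phi)^{-1}P=P(I+\Phi P)^{-1}$ via Sylvester's determinant is a welcome detail the paper leaves implicit, and your phrasing in terms of a deflating subspace of the pencil $(F,G)$ is the standard conceptual framing, but the underlying argument is the same.
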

 
Lemma~\ref{lem:sol2} shows that all solutions of the ARE~\eqref{are} can be obtained from the generalized eigenvalue problem of $F$ and $G$. 
Unfortunately, most of them are unstabilizing solutions.
However, we are only interested in the symmetric PSD solution $P_{ss}$ to which the Riccati recursion~\eqref{ric} converges. 
To identify the steady-state solution, we need the following assumption and lemma: 

\begin{assumption}\label{ass:ob}
$(A,  {Q}^{1/2})$ is observable. 
\end{assumption}

\begin{lemma}\label{lem:stable}
Suppose that Assumptions~\ref{ass:W} and \ref{ass:ob} hold.
Then, $P = U_2 U_1^{-1}$ is a symmetric PSD solution to the ARE~\eqref{are} if and only if each column of  $\begin{bmatrix} U_1 \\ U_2 \end{bmatrix} \in \mathbb{R}^{2n \times n}$ solves the generalized eigenvalue problem~\eqref{gen} of $F$ and $G$ with a stable generalized eigenvalue.\footnote{A generalized eigenvalue is stable if its absolute value is less than $1$.} 
\end{lemma}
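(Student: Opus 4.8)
The plan is to identify the symmetric PSD solutions of \eqref{are} with the $n$-dimensional \emph{stable deflating subspace} of the pencil $(F,G)$, exploiting its symplectic structure. The starting point is that, by Lemma~\ref{lem:sol2}, any selection of generalized eigenvectors stacked as $\begin{bmatrix} U_1 \\ U_2\end{bmatrix}$ with invertible $U_1$ produces, via $P = U_2 U_1^{-1}$, a matrix satisfying $F\begin{bmatrix} I \\ P\end{bmatrix} = G \begin{bmatrix} I \\ P\end{bmatrix} A_{\mathrm{cl}}$ with closed-loop matrix $A_{\mathrm{cl}} := (I + \Phi P)^{-1}A$, whose eigenvalues are exactly the generalized eigenvalues attached to the chosen columns. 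Hence ``all generalized eigenvalues stable'' is equivalent to ``$A_{\mathrm{cl}}$ is a stable matrix,'' and the lemma reduces to showing that symmetric positive semidefiniteness of $P$ is characterized by stability of $A_{\mathrm{cl}}$. I would first record the symplectic identity $F \mathcal{J} F^\top = G \mathcal{J} G^\top$ with $\mathcal{J} := \begin{bmatrix} 0 & I \\ -I & 0\end{bmatrix}$, a direct computation that yields the reciprocal pairing $\gamma \leftrightarrow 1/\gamma$ of generalized eigenvalues.

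For the forward direction, suppose $P \in \mathbb{S}_+^n$ solves \eqref{are}. Substituting $A = (I + \Phi P) A_{\mathrm{cl}}$ into the ARE and using the identity $P(I + \Phi P)^{-1} = (I + P\Phi)^{-1}P$ turns \eqref{are} into the Lyapunov-type relation $P - A_{\mathrm{cl}}^\top P A_{\mathrm{cl}} = Q + A_{\mathrm{cl}}^\top P \Phi P A_{\mathrm{cl}}$, whose right-hand side is PSD because $Q \succeq 0$ and $\Phi \succeq 0$ (Assumption~\ref{ass:W}) force $P\Phi P = (\Phi^{1/2}P)^\top (\Phi^{1/2}P) \succeq 0$. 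To conclude that $A_{\mathrm{cl}}$ is stable I would argue by contradiction: if $A_{\mathrm{cl}} v = \gamma v$ with $|\gamma| \ge 1$ and $v \neq 0$, then pre- and post-multiplying the identity by $v^*$ and $v$ gives $(1 - |\gamma|^2)\, v^* P v = v^* Q v + |\gamma|^2\, v^* P \Phi P v \ge 0$; since the left side is $\le 0$, both quadratic terms on the right vanish, whence $Q^{1/2} v = 0$ and $\Phi P v = 0$. The latter yields $A v = (I + \Phi P) A_{\mathrm{cl}} v = \gamma v$, so $v$ is an eigenvector of $A$ annihilated by $Q^{1/2}$, contradicting the observability of $(A, Q^{1/2})$ (Assumption~\ref{ass:ob}) via the PBH test. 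Hence every eigenvalue of $A_{\mathrm{cl}}$ is stable.

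For the converse, I would use the reciprocal pairing together with the stabilizability in Assumption~\ref{ass:W} to show that the pencil has exactly $n$ stable generalized eigenvalues and none on the unit circle, so that the stable deflating subspace is $n$-dimensional and unique. Given a basis $\begin{bmatrix} U_1 \\ U_2\end{bmatrix}$ of this subspace, the pairing makes it Lagrangian with respect to $\mathcal{J}$, i.e., $U_1^\top U_2 = U_2^\top U_1$, which (once $U_1$ is shown invertible, using observability) makes $P = U_2 U_1^{-1}$ symmetric; the computation behind Lemma~\ref{lem:sol2} run in reverse shows $P$ solves \eqref{are}. Positive semidefiniteness then follows from the Lyapunov relation above: with $A_{\mathrm{cl}}$ stable and the fixed PSD matrix $M := Q + A_{\mathrm{cl}}^\top P \Phi P A_{\mathrm{cl}}$, the unique solution of the discrete Lyapunov equation is $P = \sum_{k=0}^\infty (A_{\mathrm{cl}}^\top)^k M A_{\mathrm{cl}}^k \succeq 0$. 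I expect the main obstacle to be the bookkeeping that pins down the stable subspace: verifying the exact reciprocal pairing and the absence of unit-circle generalized eigenvalues, establishing the invertibility of $U_1$ from observability, and handling generalized eigenvalues at $0$ and $\infty$ when $A$ is singular (where $G$ is no longer invertible and the pencil formulation, rather than $G^{-1}F$, must be used). Alternatively, the PSD conclusion can be imported directly from Proposition~\ref{prop:are} together with the uniqueness of the stable subspace.
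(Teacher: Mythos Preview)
Your approach is correct and essentially the same as the paper's: both rewrite the ARE as the discrete Lyapunov equation $P = \bar{A}^{H} P \bar{A} + \bar{Q}$ with closed-loop matrix $\bar{A} = (I+\Phi P)^{-1}A$ and $\bar{Q} = Q + \bar{A}^{H} P\Phi P\,\bar{A} \succeq 0$, then argue one direction via the standard Lyapunov-series PSD conclusion and the other via the PBH/observability contradiction you describe. The only difference is scope: you additionally argue symmetry through the Lagrangian property and fold in the reciprocal pairing, the absence of unit-circle eigenvalues, and the invertibility of $U_1$, all of which the paper handles in separate lemmas (Lemma~\ref{lem:circ} and the one following it) and simply takes for granted in the present proof.
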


Lemma~\ref{lem:stable} motivates us to investigate the condition on $F$ and $G$ under which \eqref{gen} has $n$ stable generalized eigenvalues. 
Note that  the following symplectic property holds
\[
F \Omega F^\top = G \Omega G^\top  = \begin{bmatrix} 0 & A \\ -A^\top & 0\end{bmatrix},
\]
where $\Omega = \begin{bmatrix} 0 & I_n \\ -I_n & 0\end{bmatrix}$.
Thus, if $\gamma$ is a generalized eigenvalue, so is ${1}/{\gamma}$ with the same multiplicity.
This implies that if no generalized eigenvalue lies on the unit circle, then exactly $n$ generalized eigenvalues are stable, and there exists a unique symmetric PSD solution to the ARE by Lemma~\ref{lem:stable}.

\begin{lemma}\label{lem:circ}
Under Assumptions~\ref{ass:W} and \ref{ass:ob}, $(F, G)$ does not have any generalized eigenvalue on the unit circle. 
\end{lemma}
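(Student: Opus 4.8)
The plan is to argue by contradiction. Suppose $\gamma$ is a generalized eigenvalue of $(F,G)$ with $|\gamma| = 1$ and a corresponding nonzero eigenvector $v = \begin{bmatrix} v_1 \\ v_2 \end{bmatrix}$, so that $Fv = \gamma G v$. Writing out the two block rows of this identity gives
\begin{equation*}
A v_1 = \gamma\,(v_1 + \Phi v_2), \qquad -Q v_1 + v_2 = \gamma A^\top v_2.
\end{equation*}
Note that $|\gamma| = 1$ forces $\gamma \neq 0$, hence $\bar{\gamma} = 1/\gamma$, a fact I would use repeatedly. The goal is to show that the unit-circle hypothesis, together with Assumptions~\ref{ass:W} and \ref{ass:ob}, forces $v = 0$, which is the desired contradiction.

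First I would extract an \emph{energy identity}. Left-multiplying the first block equation by $v_2^*$ yields $v_2^* A v_1 = \gamma v_2^* v_1 + \gamma v_2^* \Phi v_2$, while left-multiplying the second by $v_1^*$ and taking the conjugate transpose (using $Q = Q^\top \succeq 0$, so $v_1^* Q v_1$ is real, and using that $A$ is real, so $(A^\top)^* = A$) yields $-v_1^* Q v_1 + v_2^* v_1 = \bar{\gamma}\, v_2^* A v_1$. Eliminating the cross term $v_2^* A v_1$ between these two relations and invoking $\bar{\gamma}\gamma = |\gamma|^2 = 1$ collapses the two $v_2^* v_1$ contributions and leaves
\begin{equation*}
v_1^* Q v_1 + v_2^* \Phi v_2 = 0.
\end{equation*}
Because $Q \succeq 0$ and $\Phi \succeq 0$ by Assumption~\ref{ass:W}, both quadratic forms are nonnegative, so each must vanish, giving $Q^{1/2} v_1 = 0$ and $\Phi^{1/2} v_2 = 0$, equivalently $Q v_1 = 0$ and $\Phi v_2 = 0$.

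Next I would substitute these back into the block equations to expose the eigenvector structure. Using $\Phi v_2 = 0$ in the first equation gives $A v_1 = \gamma v_1$, so $v_1$ is a right eigenvector of $A$ (or zero) at $\gamma$ with $Q^{1/2} v_1 = 0$; using $Q v_1 = 0$ in the second gives $v_2 = \gamma A^\top v_2$, i.e. $A^\top v_2 = \bar{\gamma} v_2$, so $v_2$ is a left eigenvector of $A$ (or zero) at $\gamma$ with $v_2^* \Phi^{1/2} = 0$. I would then close the argument with two Popov--Belevitch--Hautus (PBH) tests. Observability of $(A, Q^{1/2})$ (Assumption~\ref{ass:ob}) forbids a nonzero right eigenvector of $A$ annihilated by $Q^{1/2}$ at \emph{any} eigenvalue, forcing $v_1 = 0$. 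Since $|\gamma| = 1 \geq 1$, stabilizability of $(A, \Phi^{1/2})$ (Assumption~\ref{ass:W}) forbids a nonzero left eigenvector of $A$ annihilated by $\Phi^{1/2}$ at this boundary eigenvalue, forcing $v_2 = 0$. Hence $v = 0$, contradicting the choice of eigenvector, and no generalized eigenvalue lies on the unit circle.

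The main obstacle is the bookkeeping in the energy identity: one must track the conjugate transposes carefully and insert $|\gamma|^2 = 1$ at exactly the right place so that the indefinite cross terms cancel and only the sum of two PSD forms survives. A secondary subtlety is matching the orientation of the two PBH tests -- observability constrains \emph{right} eigenvectors and applies at every eigenvalue, whereas stabilizability constrains \emph{left} eigenvectors and applies only for $|\gamma| \geq 1$; it is precisely the hypothesis $|\gamma| = 1$ that makes both tests simultaneously applicable. Finally, one should observe that infinite generalized eigenvalues (those with $G v = 0$) require no attention here, as they are not on the unit circle.
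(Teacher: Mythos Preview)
Your proof is correct and self-contained; the energy identity, the PSD splitting, and the two PBH applications are all carried out cleanly. The paper itself does not give an argument here but simply defers to \cite[Theorem~3]{Pappas1980}, so your write-up supplies exactly the details the paper omits, and the underlying mechanism---contradiction via the observability and stabilizability hypotheses---is the same.
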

\begin{proof}
The existence of generalized eigenvalues on the unit circle contradicts Assumptions \ref{ass:W} and \ref{ass:ob}. See \cite[Theorem 3]{Pappas1980} for details.
\end{proof}

By Lemma~\ref{lem:circ}, there exist $U_1, U_2 \in \mathbb{R}^{n\times n}$ and $\Lambda \in \mathbb{R}^{n\times n}$ such that 
\begin{equation}\label{gen2}
FU=GU \Lambda
\end{equation}
 with $U=\begin{bmatrix} U_{1}\\ U_{2}\end{bmatrix}$, where 
the  columns of $V$ solve \eqref{gen} with $n$ stable generalized eigenvalues, and 
$\Lambda$ is the corresponding Jordan normal form. 
We obtain the following lemma that yields to construct a solution of the ARE~\eqref{are} from $U_1$ and $U_2$. 

\begin{lemma}
Under Assumptions~\ref{ass:W} and \ref{ass:ob},  $U_1$ is nonsingular. 
\end{lemma}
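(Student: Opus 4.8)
The plan is to work directly with the stable deflating subspace furnished by Lemma~\ref{lem:circ} and the relation~\eqref{gen2}, and to show that no nonzero vector lies in $\ker U_1$. First I would expand $FU = GU\Lambda$ into its two block rows, obtaining
\begin{equation*}
A U_1 = (U_1 + \Phi U_2)\Lambda, \qquad U_2 - Q U_1 = A^\top U_2 \Lambda,
\end{equation*}
which are the two identities that drive the entire argument. Recall from the construction in~\eqref{gen2} that $\Lambda$ collects the $n$ \emph{stable} generalized eigenvalues, so its spectral radius is strictly less than $1$, and that $U = \begin{bmatrix} U_1 \\ U_2\end{bmatrix}$ has full column rank $n$ as a basis of the deflating subspace.

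Next I would compute $\Lambda^* U_2^* A U_1$ in two ways --- substituting the first block identity into $AU_1$, and substituting the second into $U_2^* A = (A^\top U_2)^*$ --- which yields the Stein-type relation
\begin{equation*}
U_2^* U_1 - \Lambda^* U_2^* U_1 \Lambda = U_1^* Q U_1 + \Lambda^* U_2^* \Phi U_2 \Lambda .
\end{equation*}
Subtracting this identity from its conjugate transpose shows that $S := U_2^* U_1 - U_1^* U_2$ satisfies $S = \Lambda^* S \Lambda$; since $\Lambda$ is stable, iterating forces $S = 0$, so $M := U_1^* U_2 = U_2^* U_1$ is Hermitian. Adding the identity to its conjugate transpose gives the Lyapunov/Stein equation $M - \Lambda^* M \Lambda = U_1^* Q U_1 + \Lambda^* U_2^* \Phi U_2 \Lambda$, whose right-hand side is positive semidefinite by Assumption~\ref{ass:W} ($Q \succeq 0$, $\Phi \succeq 0$); stability of $\Lambda$ then yields $M \succeq 0$.

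With these structural facts in hand, I would argue by contradiction. Suppose $U_1 x = 0$ for some $x \neq 0$. Then $x^* M x = (U_1 x)^* U_2 x = 0$, so $M \succeq 0$ forces $Mx = 0$; feeding this into the Stein equation for $M$ gives $\Phi U_2 \Lambda x = 0$ and $M\Lambda x = 0$, and the first block identity then collapses to $U_1 \Lambda x = 0$. Hence $\ker U_1$ is $\Lambda$-invariant, so I may choose an eigenvector $v \in \ker U_1$, $v \neq 0$, with $\Lambda v = \mu v$ and $|\mu| < 1$. Writing $p := U_2 v$, the full column rank of $U$ guarantees $p \neq 0$ (otherwise $Uv = 0$). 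The second block identity evaluated at $v$ reads $p = \mu A^\top p$: the case $\mu = 0$ immediately gives $p = 0$, a contradiction, while for $\mu \neq 0$ we obtain a left eigenvector $p^* A = (1/\bar\mu) p^*$ with $|1/\bar\mu| > 1$, together with $\Phi p = 0$, i.e. $\Phi^{1/2} p = 0$.

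This last situation --- an \emph{unstable} left eigenvector of $A$ annihilated by $\Phi^{1/2}$ --- violates stabilizability of $(A, \Phi^{1/2})$ in Assumption~\ref{ass:W} by the Popov--Belevitch--Hautus test, completing the contradiction and proving $U_1$ nonsingular. I expect the main obstacle to be the bookkeeping that establishes $\Lambda$-invariance of $\ker U_1$: the crucial step is to extract $\Phi U_2 \Lambda x = 0$ from the semidefinite Stein equation for $M$, which is exactly what lets the first block identity reduce to $U_1 \Lambda x = 0$. Everything else (the two Stein equations, the Hermitian and semidefinite conclusions) is routine once the block identities are written down, and observability (Assumption~\ref{ass:ob}) enters only upstream, through Lemma~\ref{lem:circ}, to guarantee the clean $n$-dimensional stable subspace on which the whole argument is set.
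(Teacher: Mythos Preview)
Your argument is correct and complete. The paper itself does not give a proof here at all: it simply defers to \cite[Theorem 6]{Pappas1980}. What you have written is precisely a self-contained version of that classical Pappas-type argument---derive the Stein relation $M - \Lambda^* M \Lambda = U_1^* Q U_1 + \Lambda^* U_2^* \Phi U_2 \Lambda$ from the two block identities, conclude $M = U_1^* U_2$ is Hermitian PSD, show $\ker U_1$ is $\Lambda$-invariant, and then extract an unstable left eigenvector of $A$ annihilated by $\Phi^{1/2}$, violating the PBH stabilizability test. So you have effectively unpacked the citation the paper relies on; the two approaches coincide, with yours being the explicit one. One small remark: since the paper takes $\Lambda \in \mathbb{R}^{n \times n}$, your eigenvector $v$ of $\Lambda|_{\ker U_1}$ may need to be complex, but as you work with conjugate transposes throughout this causes no issue.
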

\begin{proof}
This can be shown directly using the proof of \cite[Theorem 6]{Pappas1980}.
\end{proof}

Using the previous lemmas, we finally obtain the following conclusion that connects the Riccati equation~\eqref{ric} in the finite-horizon case and the ARE~\eqref{are} in the infinite-horizon. 

\begin{theorem}
Suppose that Assumptions~\ref{ass:pen}--\ref{ass:ob} hold. 
Then, the recursion~\eqref{ric} converges to the unique symmetric PSD solution $P_{ss} := U_2 U_1^{-1}$ of the ARE~\eqref{are} as $T \to \infty$.
\end{theorem}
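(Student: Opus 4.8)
The plan is to prove the theorem by combining the convergence result of Proposition~\ref{prop:are} with a uniqueness result for the symmetric PSD solution of the ARE~\eqref{are}. Proposition~\ref{prop:are} already guarantees, under Assumptions~\ref{ass:pen}--\ref{ass:W}, that the recursion~\eqref{ric} started from the terminal condition $P_T = Q_f \in \mathbb{S}_+^n$ converges to some bounded limit $P_{ss}$ that is a symmetric PSD solution of~\eqref{are}. Hence it suffices to show that~\eqref{are} admits \emph{exactly one} symmetric PSD solution and that this solution equals $U_2 U_1^{-1}$; identifying $P_{ss}$ with that unique solution then closes the argument.

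First I would pin down the dimension and uniqueness of the stable deflating subspace of the pencil $(F,G)$. By the symplectic identity $F\Omega F^\top = G\Omega G^\top$ recorded above, the generalized eigenvalues occur in reciprocal pairs $(\gamma, 1/\gamma)$ of equal multiplicity, and by Lemma~\ref{lem:circ} none of them lies on the unit circle under Assumptions~\ref{ass:W} and~\ref{ass:ob}. Counting with algebraic multiplicity, this forces exactly $n$ stable generalized eigenvalues (and $n$ unstable ones), so the stable deflating subspace is $n$-dimensional and uniquely determined. Collecting a basis of this subspace into $U = \begin{bmatrix} U_1 \\ U_2 \end{bmatrix}$ as in~\eqref{gen2}, the preceding lemma guarantees that $U_1$ is nonsingular, so $U_2 U_1^{-1}$ is well-defined. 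Moreover it is independent of the chosen basis: any two bases differ by right-multiplication by an invertible $S$, and since $\tilde U_1 = U_1 S$, $\tilde U_2 = U_2 S$, the factor $S$ cancels in $\tilde U_2 \tilde U_1^{-1} = U_2 U_1^{-1}$.

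Next I would invoke Lemma~\ref{lem:stable}: under Assumptions~\ref{ass:W} and~\ref{ass:ob}, a matrix $U_2 U_1^{-1}$ is a symmetric PSD solution of~\eqref{are} if and only if every column of the associated $U$ corresponds to a stable generalized eigenvalue. Since there are precisely $n$ such eigenvalues, any symmetric PSD solution must arise from a full basis of the unique $n$-dimensional stable deflating subspace; by the basis-independence just noted, all such choices yield one and the same matrix $U_2 U_1^{-1}$, establishing uniqueness. Finally, because $P_{ss}$ from Proposition~\ref{prop:are} is itself a symmetric PSD solution, it must coincide with this unique solution, i.e.\ $P_{ss} = U_2 U_1^{-1}$, which is the claim.

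The step I expect to require the most care is the eigenvalue bookkeeping in the second paragraph: checking that the absence of unit-circle eigenvalues together with the reciprocal pairing really yields an $n$-dimensional \emph{uniquely determined} stable deflating subspace even when the generalized eigenvalues carry nontrivial Jordan structure, and that the product $U_2 U_1^{-1}$ is genuinely insensitive to the choice of basis and to reordering within repeated eigenvalues. All the analytic heavy lifting---convergence, the correspondence between PSD solutions and stable subspaces, the absence of unit-circle eigenvalues, and nonsingularity of $U_1$---is already supplied by the preceding results, so the remaining work is purely the linear-algebraic assembly.
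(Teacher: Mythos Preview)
Your proposal is correct and mirrors exactly the assembly the paper has in mind: the theorem is presented there as an immediate consequence of Proposition~\ref{prop:are}, Lemma~\ref{lem:stable}, Lemma~\ref{lem:circ}, and the nonsingularity lemma for $U_1$, with no separate proof written out. Your only addition is the explicit check that $U_2 U_1^{-1}$ is independent of the chosen basis of the stable deflating subspace, which is a sensible detail to spell out but not a departure from the paper's route.
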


This result can further be simplified 
when the system matrix $A$ is nonsingular.
In this particular case,  we let
\begin{equation*}
    H := G^{-1}F =\begin{bmatrix} A + \Phi A^{-\top} Q & - \Phi A^{-\top} \\ - A^{-\top} Q & A^{-\top} \end{bmatrix}
\end{equation*}
and construct $\hat{U}_1, \hat{U}_2 \in \mathbb{R}^{n \times n}$ so that each column of $\begin{bmatrix} \hat{U}_{1} \\ \hat{U}_{2} \end{bmatrix} \in \mathbb{R}^{2n \times n}$ is an eigenvector of $H$ associated with a stable eigenvalue. 
We then obtain the following result:

\begin{cor}
Suppose that Assumptions~\ref{ass:pen}--\ref{ass:ob} hold and that $A$ is nonsingular. 
Then, the recursion~\eqref{ric} converges to the unique symmetric PSD solution $P_{ss} := \hat{U}_2 \hat{U}_1^{-1}$ of the ARE~\eqref{are}.
\end{cor}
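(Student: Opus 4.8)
The plan is to reduce this Corollary to the preceding Theorem by exploiting the fact that nonsingularity of $A$ makes $G$ invertible, so that the generalized eigenvalue problem \eqref{gen} collapses into an ordinary eigenvalue problem for $H = G^{-1}F$. First I would observe that $G = \begin{bmatrix} I & \Phi \\ 0 & A^\top\end{bmatrix}$ is block upper-triangular, so $\det G = \det(I)\det(A^\top) = \det A \neq 0$; hence $G^{-1}$ exists, and a direct block multiplication confirms the stated formula $H = G^{-1}F = \begin{bmatrix} A + \Phi A^{-\top} Q & -\Phi A^{-\top} \\ -A^{-\top} Q & A^{-\top}\end{bmatrix}$.

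Next, since $G$ is invertible, the generalized eigenvalue equation $Fv = \gamma G v$ is equivalent to $Hv = \gamma v$. Consequently the generalized eigenvalues of $(F,G)$ coincide exactly with the ordinary eigenvalues of $H$, with matching algebraic and geometric multiplicities, and $v$ is a (generalized) eigenvector of $(F,G)$ for $\gamma$ if and only if it is an eigenvector of $H$ for $\gamma$. In particular, the stable generalized eigenvalues selected in \eqref{gen2} are precisely the stable eigenvalues of $H$, and the span of the columns of $\begin{bmatrix} \hat U_1 \\ \hat U_2\end{bmatrix}$ is exactly the stable invariant subspace of $H$, i.e.\ the same $n$-dimensional subspace spanned by the columns of the $\begin{bmatrix} U_1 \\ U_2\end{bmatrix}$ produced in the previous Theorem.

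The remaining step is to check that replacing $\begin{bmatrix} U_1 \\ U_2\end{bmatrix}$ by $\begin{bmatrix} \hat U_1 \\ \hat U_2\end{bmatrix}$ leaves the Riccati solution unchanged. Since both $2n \times n$ matrices have columns forming a basis of the common stable invariant subspace, there exists a nonsingular $T \in \mathbb{R}^{n\times n}$ with $\hat U_1 = U_1 T$ and $\hat U_2 = U_2 T$. By the previous Lemma $U_1$ is nonsingular, hence so is $\hat U_1 = U_1 T$, and therefore $\hat U_2 \hat U_1^{-1} = U_2 T (U_1 T)^{-1} = U_2 U_1^{-1} = P_{ss}$. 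Thus $\hat U_2 \hat U_1^{-1}$ equals the unique symmetric PSD solution of the ARE \eqref{are}, and the convergence of the recursion \eqref{ric} to it follows verbatim from the previous Theorem.

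The only point requiring mild care is the case where $H$ fails to be diagonalizable: then the columns of $\begin{bmatrix} \hat U_1 \\ \hat U_2\end{bmatrix}$ cannot all be genuine eigenvectors, and one must read them as a basis of the stable invariant subspace obtained from generalized eigenvectors of the Jordan blocks. The change-of-basis argument is unaffected, since only the span enters the product $\hat U_2 \hat U_1^{-1}$. Beyond verifying the block-inverse formula for $H$ and this basis-invariance observation, I expect no essential difficulty, as the Corollary is fundamentally just the $A$-nonsingular specialization of the generalized-eigenvalue machinery already in place.
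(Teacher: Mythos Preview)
Your proposal is correct and follows exactly the intended route: the paper states this Corollary without proof, treating it as the immediate specialization of the preceding Theorem obtained once $A$ is nonsingular and $G$ becomes invertible, so that the generalized eigenvalue problem for $(F,G)$ reduces to the ordinary one for $H=G^{-1}F$. Your basis-invariance argument for $\hat U_2\hat U_1^{-1}=U_2U_1^{-1}$ and your remark on the Jordan case are precisely what is needed to make this reduction rigorous.
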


The convergence of $r_t$ in the recursion \eqref{ric} directly follows from the convergence of $P_t$.
\begin{proposition} \label{prop:r}
Suppose that Assumptions~\ref{ass:pen}--\ref{ass:ob} hold. 
Then, $r_t$ in the recursion~\eqref{ric} converges to
\begin{equation}\label{eq:oss}
\begin{split}
r_{ss} := &[I - A^\top (I+P_{ss} \Phi)^{-1}]^{-1} A^\top (I+P_{ss} \Phi)^{-1} P_{ss} \Xi \bar{w}
\end{split}
\end{equation}
as $T \to \infty$.
\end{proposition}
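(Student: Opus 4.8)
The plan is to recognize the $r$-recursion in \eqref{ric} as a time-varying affine linear recursion whose coefficient matrices converge, and then to pass to the limit via a stability estimate. First I would fix the stationary data: under Assumption~\ref{ass:st} we have $\bar{w}_t \equiv \bar{w}$, so the middle line of \eqref{ric} reads $r_t = A^\top(I + P_{t+1}\Phi)^{-1}(P_{t+1}\Xi\bar{w} + r_{t+1})$. Reindexing by the number of steps-to-go $k = T - t$ (so $r^{(0)} = 0$, $P^{(0)} = Q_f$), this becomes
\[
r^{(k)} = M^{(k-1)} r^{(k-1)} + b^{(k-1)}, \qquad M^{(k)} := A^\top(I + P^{(k)}\Phi)^{-1}, \quad b^{(k)} := M^{(k)} P^{(k)}\Xi\bar{w}.
\]
Because $P^{(k)} \to P_{ss}$ by the convergence theorem established immediately above, continuity of matrix inversion gives $M^{(k)} \to M := A^\top(I + P_{ss}\Phi)^{-1}$ and $b^{(k)} \to b := M P_{ss}\Xi\bar{w}$. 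The candidate limit $r_{ss}$ in \eqref{eq:oss} is exactly the fixed point of the limiting recursion, i.e. the solution of $(I - M)r_{ss} = b$, so it remains to justify that $I - M$ is invertible and that the time-varying recursion actually converges to this fixed point.

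The hard part, and the crux of the proof, is to show that $M$ is a stable matrix, that is, its spectral radius satisfies $\rho(M) < 1$. I would extract this from the generalized-eigenvector structure in \eqref{gen2}. Writing the top block row of $FU = GU\Lambda$ and using $U_2 = P_{ss}U_1$ (from $P_{ss} = U_2 U_1^{-1}$) yields $A U_1 = (I + \Phi P_{ss})U_1\Lambda$, hence
\[
(I + \Phi P_{ss})^{-1} A = U_1 \Lambda U_1^{-1},
\]
where nonsingularity of $U_1$ is the lemma proved just above and $\Lambda$ is the Jordan form carrying the $n$ stable generalized eigenvalues. Thus $(I + \Phi P_{ss})^{-1}A$ is similar to $\Lambda$ and has spectral radius strictly below $1$. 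Since $\Phi$ (by \eqref{phi}) and $P_{ss}$ are symmetric, $M^\top = (I + \Phi P_{ss})^{-1}A$, and $M$ shares its spectrum with $M^\top$; therefore $\rho(M) = \rho(\Lambda) < 1$. In particular $I - M$ is invertible, so $r_{ss}$ in \eqref{eq:oss} is well-defined.

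Finally I would establish convergence of the perturbed recursion. Subtracting $r_{ss} = M r_{ss} + b$ and setting $e^{(k)} := r^{(k)} - r_{ss}$ gives
\[
e^{(k)} = M^{(k-1)} e^{(k-1)} + \delta^{(k-1)}, \qquad \delta^{(k-1)} := (M^{(k-1)} - M)r_{ss} + (b^{(k-1)} - b),
\]
where $\delta^{(k-1)} \to 0$ by the convergence of $M^{(k)}$ and $b^{(k)}$. Since $\rho(M) < 1$, I would choose a submultiplicative norm with $\|M\| =: \alpha_0 < 1$; by continuity there is $k_0$ with $\|M^{(k)}\| \le \alpha < 1$ for all $k \ge k_0$. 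Then for $k > k_0$ one has $\|e^{(k)}\| \le \alpha\|e^{(k-1)}\| + \|\delta^{(k-1)}\|$ with $\|\delta^{(k-1)}\| \to 0$, and a standard vanishing-perturbation argument (unrolling the inequality, or viewing this as a stable linear system driven by a vanishing input) forces $\|e^{(k)}\| \to 0$. Hence $r_t \to r_{ss}$ as $T \to \infty$, completing the proof. The only genuine subtlety beyond the stability estimate is that the coefficients are time-varying rather than constant, which is precisely why this last vanishing-input estimate is needed in place of the naive telescoping $r^{(k)} = \sum_{j=0}^{k-1} M^j b \to (I-M)^{-1}b$ that would apply in the constant-coefficient case.
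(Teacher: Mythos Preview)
Your proof is correct and follows essentially the same route as the paper: the crux in both is the stability of $M = A^\top(I+P_{ss}\Phi)^{-1}$, obtained by reading off $(I+\Phi P_{ss})^{-1}A = U_1\Lambda U_1^{-1}$ from the top block row of $FU = GU\Lambda$ and then taking transposes. The one difference is that the paper simply asserts that once $P_t \to P_{ss}$ the recursion ``becomes'' the limiting one and stability of $M$ forces convergence, whereas you supply the vanishing-perturbation argument that actually justifies this passage for time-varying coefficients; your version is the more rigorous of the two.
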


The steady-state control policy in the infinite-horizon case
can be obtained using the symmetric PSD solution to the ARE~\eqref{are} as in the finite-horizon case.

\begin{cor}\label{cor:inf}
Suppose that Assumptions~\ref{ass:pen}--\ref{ass:ob} hold.
Then, the optimal policy $\pi^\star_t(\bm{x})$ converges to the steady-state policy
\[
\pi^\star_{ss} (\bm{x}) := K_{ss} \bm{x} + L_{ss}
\]
as $T \to \infty$, where
\begin{equation*}
\begin{split}
K_{ss} := & -R^{-1} B^\top  (I + P_{ss} \Phi)^{-1} P_{ss} A,\\
L_{ss} := & -R^{-1} B^\top (I + P_{ss} \Phi)^{-1}( P_{ss}\Xi \bar{w} + r_{ss}).
\end{split}
\end{equation*}
Furthermore, the policy generating the worst-case distribution converges to the steady-state policy
\begin{equation}\label{eq:steady}
\gamma^\star_{ss} (\bm{x}) := \frac{1}{N} \sum_{i=1}^N \delta_{w^{\star, (i)} (\bm{x})} 
\end{equation}
with
$w^{\star, (i)} (\bm{x}) 
:= (\lambda I - \Xi^\top P_{ss} \Xi)^{-1} (\Xi^\top P_{ss} (A \bm{x} + BK_{ss} \bm{x} + BL_{ss}) + \Xi^\top r_{ss} + \lambda \hat{w}^{(i)})$.
\end{cor}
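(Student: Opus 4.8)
The plan is to prove Corollary~\ref{cor:inf} as a direct algebra-of-limits (continuity) argument, transferring the already-established convergence of the Riccati data $(P_t, r_t)$ to the feedback gains and to the support points of the worst-case distribution. First I would record the inputs. By the convergence theorem proved above, $P_{t+1} \to P_{ss}$ as $T \to \infty$, and by Proposition~\ref{prop:r}, $r_{t+1} \to r_{ss}$; moreover, under the stationarity Assumption~\ref{ass:st} the empirical data are time-invariant, so $\bar{w}_t = \bar{w}$ and $\hat{w}_t^{(i)} = \hat{w}^{(i)}$ for every $t$ and $i$. The gain formulas for $K_t, L_t$ in Lemma~\ref{lem:sol} and the support points $w_t^{\star,(i)}(\bm{x})$ in Corollary~\ref{cor:dist} are fixed rational (continuous) functions of these data, so once the relevant inverses survive in the limit the conclusion is immediate.

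Next I would verify that the two matrix inverses appearing in those formulas remain well-defined in the limit. Since the map $X \mapsto X^{-1}$ is continuous on the open set of invertible matrices, it suffices to check that $I + P_{ss}\Phi$ and $\lambda I - \Xi^\top P_{ss}\Xi$ are invertible. Invertibility of $I + P_{ss}\Phi$ is immediate because $P_{ss}$ solves the ARE~\eqref{are}, whose right-hand side already contains $(I + P_{ss}\Phi)^{-1} = [I + P_{ss}BR^{-1}B^\top - \tfrac{1}{\lambda}P_{ss}\Xi\Xi^\top]^{-1}$. Granting also the invertibility of $\lambda I - \Xi^\top P_{ss}\Xi$ (discussed below), continuity of matrix multiplication and inversion gives $(I + P_{t+1}\Phi)^{-1}P_{t+1}A \to (I + P_{ss}\Phi)^{-1}P_{ss}A$ and the analogous limit for the affine term, whence $K_t \to K_{ss}$ and $L_t \to L_{ss}$. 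Consequently $\pi^\star_t(\bm{x}) = K_t\bm{x} + L_t \to K_{ss}\bm{x} + L_{ss} = \pi^\star_{ss}(\bm{x})$, uniformly on compact sets of $\bm{x}$ by affineness. For the worst-case-distribution policy I would substitute the convergent quantities $K_t, L_t, P_{t+1}, r_{t+1}, \hat{w}^{(i)}$ into the expression for $w_t^{\star,(i)}(\bm{x})$ from Corollary~\ref{cor:dist}; continuity then yields $w_t^{\star,(i)}(\bm{x}) \to w^{\star,(i)}(\bm{x})$ for each $i$. Since $\gamma_t^\star(\bm{x})$ and $\gamma_{ss}^\star(\bm{x})$ are uniformly weighted empirical measures on the same number $N$ of atoms, convergence of every atom forces $\gamma_t^\star(\bm{x}) \to \gamma_{ss}^\star(\bm{x})$ weakly, and indeed in the metric $W_2$, which is the asserted convergence.

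The one step requiring genuine care, and the main obstacle I anticipate, is the invertibility of $\lambda I - \Xi^\top P_{ss}\Xi$, equivalently the strict penalty inequality $\lambda > \bar{\lambda}_{ss}$, where $\bar{\lambda}_{ss}$ is the largest eigenvalue of $\Xi^\top P_{ss}\Xi$. Assumption~\ref{ass:pen} supplies only $\lambda > \bar{\lambda}_t$ along the recursion; since $\Xi^\top P_{t+1}\Xi \to \Xi^\top P_{ss}\Xi$ and the top eigenvalue depends continuously on its matrix, naively passing to the limit yields only the non-strict bound $\lambda \geq \bar{\lambda}_{ss}$, under which $\lambda I - \Xi^\top P_{ss}\Xi$ could be singular. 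The cleanest way I would close this is to exploit that $P_{ss}$ is a fixed point of the recursion~\eqref{ric}: by Proposition~\ref{prop:are} a bounded limit exists for every terminal condition $P_T \in \mathbb{S}_+^n$, so in particular the choice $P_T = P_{ss}$ produces the constant sequence $P_t \equiv P_{ss}$, and applying Assumption~\ref{ass:pen} to this recursion forces $\lambda > \bar{\lambda}_{ss}$ directly. Alternatively, one may note that the $z_t$-update in~\eqref{ric} involves $(I - \Xi^\top P_{t+1}\Xi/\lambda)^{-1}$, whose finiteness in the bounded limit of Proposition~\ref{prop:are} already presupposes $\lambda > \bar{\lambda}_{ss}$. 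With the strict inequality secured we have $\lambda I - \Xi^\top P_{ss}\Xi \succ 0$, so it is invertible, and the continuity argument of the preceding paragraph goes through verbatim, completing the proof.
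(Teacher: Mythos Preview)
The paper does not supply a separate proof of this corollary; it is stated as an immediate consequence of the convergence $P_t \to P_{ss}$ (the theorem preceding it) and $r_t \to r_{ss}$ (Proposition~\ref{prop:r}), since the formulas for $K_t$, $L_t$, and $w_t^{\star,(i)}$ from Lemma~\ref{lem:sol} and Corollary~\ref{cor:dist} are continuous in $(P_{t+1}, r_{t+1})$. Your algebra-of-limits argument is exactly this intended reasoning, only written out in full.

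You are in fact more careful than the paper on the one genuinely delicate point: the strict inequality $\lambda > \bar{\lambda}_{ss}$ needed for $(\lambda I - \Xi^\top P_{ss}\Xi)^{-1}$ to exist. The paper never isolates this, yet silently relies on it later (e.g.\ the proof of Proposition~\ref{prop:acoe} invokes Lemma~\ref{lem:sol} with $P_{t+1} = P_{ss}$, which already presupposes $\lambda I - \Xi^\top P_{ss}\Xi \succ 0$). Your resolution via the constant recursion $P_T = P_{ss}$ is consistent with how Proposition~\ref{prop:are} is phrased (``for any $P_T \in \mathbb{S}_+^n$''), and with the paper's later uniform-in-$T$ reading of Assumption~\ref{ass:pen} in~\eqref{eq:pen}. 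So your proof is correct and matches the paper's unstated argument, with the added benefit of making explicit a hypothesis the paper uses tacitly.
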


\begin{proposition}\label{prop:avgcost}
Suppose that Assumptions~\ref{ass:pen}--\ref{ass:ob} hold.
Then, the steady-state average cost 
\begin{equation*}
\rho := \limsup_{T \rightarrow \infty}  \min_{\pi \in \Pi} \max_{\gamma \in \Gamma} J^\lambda_{\bm{x}, T} (\pi, \gamma)
\end{equation*}
is given by
\begin{equation*}
\begin{split}
	\rho &= \mathrm{tr} [  (I -\Xi^\top P_{ss} \Xi /\lambda )^{-1} \Xi^\top P_{ss} \Xi \Sigma ]+ \bar{w}^\top \Xi^\top  [(I+P_{ss} \Phi )^{-1}-(I- P_{ss} \Xi \Xi^\top/\lambda)^{-1}] P_{ss} \Xi \bar{w} \\
 &+ (2  \bar{w}^\top \Xi^\top - r_{ss}^\top \Phi)(I+P_{ss} \Phi )^{-1} r_{ss},
\end{split}
\end{equation*}
independent of the initial state $\bm{x}$.
\end{proposition}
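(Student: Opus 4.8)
The plan is to reduce the average cost to the Cesàro limit of the scalar increments in the $z_t$-recursion of \eqref{ric}, and then to exploit the convergence $P_t \to P_{ss}$, $r_t \to r_{ss}$ together with the stationarity of Assumption~\ref{ass:st}. By Theorem~\ref{thm:fin}, for each horizon $T$ the minimax problem is solved exactly by the quadratic value function, so that
\[
\min_{\pi \in \Pi} \max_{\gamma \in \Gamma} J^\lambda_{\bm{x}, T}(\pi, \gamma) = \frac{1}{T} V_0(\bm{x}) = \frac{1}{T} \big( \bm{x}^\top P_0 \bm{x} + 2 r_0^\top \bm{x} + z_0 \big),
\]
where $P_0, r_0, z_0$ are the horizon-$T$ solutions of \eqref{ric} at $t = 0$. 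First I would observe that, by Proposition~\ref{prop:are} and Proposition~\ref{prop:r}, the matrix $P_0$ and the vector $r_0$ remain bounded as $T \to \infty$ (indeed they converge to $P_{ss}$ and $r_{ss}$). Hence $\frac{1}{T}(\bm{x}^\top P_0 \bm{x} + 2 r_0^\top \bm{x}) \to 0$, which already shows that $\rho$ is independent of $\bm{x}$ and reduces the claim to $\rho = \limsup_{T \to \infty} z_0 / T$.

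Next I would unroll the scalar recursion in \eqref{ric}. Writing the per-stage increment as $\delta_t$, so that $z_t = z_{t+1} + \delta_t$ with $\Sigma_t, \bar{w}_t$ replaced by the stationary $\Sigma, \bar{w}$, and using the terminal condition $z_T = 0$, telescoping gives $z_0 = \sum_{t=0}^{T-1} \delta_t$. The essential point is that $\delta_t$ depends on the horizon only through $P_{t+1}$ and $r_{t+1}$, which in turn depend on $T$ and $t$ solely via the time-to-go $T - t - 1$. Reindexing by $s := T - t - 1$ and writing $\tilde{P}_s := P_{T-s}$, $\tilde{r}_s := r_{T-s}$ --- the iterates generated forward from the terminal data $\tilde{P}_0 = Q_f$, $\tilde{r}_0 = 0$, independently of $T$ --- I obtain a horizon-independent sequence of increments $\tilde{\delta}_s$ with $z_0 = \sum_{s=0}^{T-1} \tilde{\delta}_s$. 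By Proposition~\ref{prop:are} and Proposition~\ref{prop:r}, $\tilde{P}_s \to P_{ss}$ and $\tilde{r}_s \to r_{ss}$; and since $\delta_t$ is a rational, hence continuous, function of $(P_{t+1}, r_{t+1})$ whose relevant inverses $(I - \Xi^\top P_{ss} \Xi / \lambda)^{-1}$, $(I - P_{ss} \Xi \Xi^\top / \lambda)^{-1}$, $(I + P_{ss} \Phi)^{-1}$ remain well-defined in the limit under Assumption~\ref{ass:pen}, continuity yields $\tilde{\delta}_s \to \delta_{ss}$, where $\delta_{ss}$ is exactly the expression claimed for $\rho$ (with $P_{ss}, r_{ss}$ in place of $P_{t+1}, r_{t+1}$).

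Finally I would invoke the Cesàro mean theorem: since $\tilde{\delta}_s \to \delta_{ss}$, the averages satisfy $\frac{1}{T} \sum_{s=0}^{T-1} \tilde{\delta}_s \to \delta_{ss}$, so $\lim_{T \to \infty} z_0 / T = \delta_{ss}$ and therefore $\rho = \delta_{ss}$, completing the proof. The main obstacle is the reindexing step: because the finite-horizon increments $\delta_t$ are defined through the backward Riccati pass and so depend on $T$, one must pass to the time-to-go variable to produce a single, horizon-independent convergent sequence before the Cesàro argument applies. A secondary technical point is to verify that Assumption~\ref{ass:pen} indeed guarantees the limiting inverses are nonsingular, i.e.\ that $\lambda$ strictly exceeds the largest eigenvalue of $\Xi^\top P_{ss} \Xi$, so that $\delta_{ss}$ is finite.
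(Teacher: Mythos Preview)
Your proposal is correct and follows essentially the same route as the paper's proof: reduce $\min_\pi \max_\gamma J^\lambda_{\bm{x},T}$ to $V_0(\bm{x})/T$ via Theorem~\ref{thm:fin}, kill the $P_0$- and $r_0$-terms using their boundedness, telescope $z_0$ into a sum of per-stage increments from~\eqref{ric}, and pass to the limit using $P_t\to P_{ss}$, $r_t\to r_{ss}$. The paper's proof is terser---it writes out the Ces\`aro sum of increments and asserts its value without naming the reindexing by time-to-go or the Ces\`aro mean theorem---whereas you make these steps explicit, which is a strict improvement in rigor; your flag that Assumption~\ref{ass:pen} must transfer to the limit so that $\lambda I-\Xi^\top P_{ss}\Xi\succ 0$ is also a point the paper leaves implicit.
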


\subsubsection{Average Cost Optimality}

We now examine  the optimality of the stationary policy pair $(\pi^\star_{ss}, \gamma^\star_{ss})$ using the average cost criterion~\eqref{ac_cost}.
Consider the following average cost problem with a Wasserstein penalty:\footnote{It follows from the definition of $\Pi$ that $(\pi_{ss}^\star, \pi_{ss}^\star, \ldots) \in \Pi$.
However, with a slight abuse of notation, we simply denote it as $\pi_{ss}^\star$ and 
regard stationary policy $\pi_{ss}^\star$ (resp. $\gamma_{ss}^\star$)  as an element of  $\Pi$ (resp. $\Gamma$). }
\begin{equation}
	\min_{\pi \in \Pi} \max_{\gamma \in \Gamma} J^{\lambda}_{\bm{x}, \infty} (\pi, \gamma).
\end{equation}
The optimality equation for this problem can be obtained as follows:

\begin{proposition}\label{prop:acoe}
Suppose that Assumptions~\ref{ass:pen}--\ref{ass:ob} hold.
Then, the following Bellman equation holds:
\begin{equation} \label{eq:acoe}
\begin{split}
   & \rho + h (\bm{x}) = \bm{x}^\top Q \bm{x} + \inf_{\bm{u} \in \mathbb{R}^m} \sup_{\bm{\mu} \in \mathcal{P}(\mathbb{R}^k)} \bigg [ \bm{u}^\top R \bm{u}   - \lambda W_2(\bm{\mu}, \nu)^2 +\int_{\mathbb{R}^k} h (A \bm{x} + B \bm{u} + \Xi w) \mathrm{d} \bm{\mu} (w) \bigg ],
\end{split}
\end{equation}
where $h(\bm{x}) := \bm{x}^\top P_{ss} \bm{x} + 2r_{ss}^\top \bm{x}$ and $\rho$ is the steady-state average cost defined in Proposition~\ref{prop:avgcost}.
Moreover, $(\pi_{ss}^\star (\bm{x}), \gamma_{ss}^\star (\bm{x}))$ is an optimal minimax solution of the problem on the right-hand side of~\eqref{eq:acoe}. 
\end{proposition}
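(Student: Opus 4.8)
The plan is to read \eqref{eq:acoe} as a fixed-point identity for the one-step minimax Bellman operator and to verify it directly from the finite-horizon machinery already developed. Write
\[
(\mathcal{T} V)(\bm{x}) := \bm{x}^\top Q \bm{x} + \inf_{\bm{u} \in \mathbb{R}^m} \sup_{\bm{\mu} \in \mathcal{P}(\mathbb{R}^k)} \Big[ \bm{u}^\top R \bm{u} - \lambda W_2(\bm{\mu}, \nu)^2 + \int_{\mathbb{R}^k} V(A\bm{x} + B\bm{u} + \Xi w)\,\mathrm{d}\bm{\mu}(w) \Big],
\]
so that the assertion is precisely $\rho + h = \mathcal{T} h$. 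First I would apply Lemma~\ref{lem:sol} with $V_{t+1} = h$, that is, with coefficients $(P_{t+1}, r_{t+1}, z_{t+1}) = (P_{ss}, r_{ss}, 0)$ and the stationary data $\Sigma_t = \Sigma$, $\bar{w}_t = \bar{w}$. By Lemma~\ref{lem:sol} and Theorem~\ref{thm:fin} the inner $\inf$--$\sup$ is solved in closed form and $\mathcal{T}$ preserves the quadratic--affine--constant structure, giving $(\mathcal{T} h)(\bm{x}) = \bm{x}^\top P' \bm{x} + 2(r')^\top \bm{x} + z'$, where $(P', r', z')$ is one step of the recursion~\eqref{ric} applied to $(P_{ss}, r_{ss}, 0)$; the vanishing constant term of $h$ simply passes through the integral and the $\inf$--$\sup$ and so contributes additively to $z'$.

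The second step identifies these three outputs with $(P_{ss}, r_{ss}, \rho)$. Since $P_{ss}$ solves the ARE~\eqref{are}, the $P$-update in~\eqref{ric} yields $P' = P_{ss}$. Since $r_{ss}$ is by construction the fixed point of the $r$-update—rearranging $r_{ss} = A^\top(I + P_{ss}\Phi)^{-1}(P_{ss}\Xi\bar{w} + r_{ss})$ gives exactly the expression~\eqref{eq:oss} of Proposition~\ref{prop:r}, using the invertibility of $I - A^\top(I+P_{ss}\Phi)^{-1}$ established there—the $r$-update yields $r' = r_{ss}$. Finally, substituting $z_{t+1} = 0$, $P_{t+1} = P_{ss}$, $r_{t+1} = r_{ss}$, $\Sigma_t = \Sigma$, $\bar{w}_t = \bar{w}$ into the $z$-update of~\eqref{ric} reproduces verbatim the formula for $\rho$ in Proposition~\ref{prop:avgcost}, so $z' = \rho$. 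Combining, $(\mathcal{T} h)(\bm{x}) = \bm{x}^\top P_{ss}\bm{x} + 2 r_{ss}^\top \bm{x} + \rho = h(\bm{x}) + \rho$, which is~\eqref{eq:acoe}.

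For the optimality claim I would note that the inner minimizer and maximizer are already delivered by Lemma~\ref{lem:sol} and Corollary~\ref{cor:dist}: substituting $(P_{t+1}, r_{t+1}) = (P_{ss}, r_{ss})$ into the formulas for $K_t, L_t$ and for $w_t^{\star,(i)}$ produces exactly $K_{ss}, L_{ss}$ and $w^{\star,(i)}(\bm{x})$ of Corollary~\ref{cor:inf}. Hence $\bm{u} = \pi_{ss}^\star(\bm{x})$ attains the infimum and $\bm{\mu} = \gamma_{ss}^\star(\bm{x})$ attains the supremum in the right-hand side of~\eqref{eq:acoe}, so $(\pi_{ss}^\star(\bm{x}), \gamma_{ss}^\star(\bm{x}))$ is an optimal minimax solution of that problem.

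I expect the only real obstacle to be bookkeeping rather than new mathematics. The two points needing care are, first, that Lemma~\ref{lem:sol} is legitimately applicable at the limit, i.e.\ that Assumption~\ref{ass:pen} survives the passage $P_t \to P_{ss}$ so that $\lambda$ exceeds the largest eigenvalue of $\Xi^\top P_{ss}\Xi$; this is already implicit in the well-definedness of $P_{ss}$ and of $\rho$, whose formulas contain $(I - \Xi^\top P_{ss}\Xi/\lambda)^{-1}$. Second, one must match the $z$-update of~\eqref{ric} term-by-term against the expression for $\rho$ in Proposition~\ref{prop:avgcost}. Both reduce to verifications built on earlier results, so no new estimate is required: the substance of the proposition is simply that $h$ and $\rho$ jointly form the fixed point of $\mathcal{T}$ that the finite-horizon recursion converges to.
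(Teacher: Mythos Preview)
Your proposal is correct and follows essentially the same approach as the paper: apply Lemma~\ref{lem:sol} with $(P_{t+1},r_{t+1},z_{t+1})=(P_{ss},r_{ss},0)$, then identify the resulting $(P_t,r_t,z_t)$ with $(P_{ss},r_{ss},\rho)$ via the ARE~\eqref{are}, the fixed-point identity~\eqref{eq:oss}, and the formula for $\rho$ in Proposition~\ref{prop:avgcost}, and read off the minimax optimizers from Lemma~\ref{lem:sol}. The paper's proof is slightly more terse but proceeds along exactly these lines.
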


In the Bellman equation (or the average cost optimality equation), $h$, called the \emph{bias},  represents the transient cost, whereas $\rho$, called the \emph{gain}, represents the stationary cost.

We now introduce an extended average cost function including the bias $h$ as
\begin{equation}\nonumber
\begin{split}
\tilde{J}^{\lambda}_{\bm{x}, \infty}(\pi, \gamma;h) := \limsup_{T \rightarrow \infty} \frac{1}{T} \mathbb{E}^{\pi, \gamma} \bigg[ \sum_{t=0}^{T-1} ( x_t^\top Q x_t 
+ u_t^\top R u_t - \lambda W_2(\mu_t, \nu)^2 )+ h(x_T)~\bigg\vert~x_0 = \bm{x} \bigg].
\end{split}
\end{equation}
Using the extended cost, we can show  the average cost optimality of the stationary policy pair $(\pi_{ss}^\star, \gamma_{ss}^\star)$ in a way similar to the average cost LQG (e.g., \cite{Bertsekas2012}[Section 5.6.5]).

\begin{theorem}\label{thm:inf}
Suppose that Assumptions~\ref{ass:pen}--\ref{ass:ob} hold.
Consider the steady-state policy pair $(\pi^\star_{ss}, \gamma^\star_{ss})$ defined in Corollary~\ref{cor:inf}.
Then, 
the following properties hold:
\begin{enumerate}[(a)]

\item 
For any $(\pi, \gamma) \in \Pi \times \Gamma$,
 \begin{equation}\label{eq:saddle}
\tilde{J}^{\lambda}_{\bm{x}, \infty}(\pi_{ss}^\star, \gamma;h)  \leq \rho
 \leq \tilde{J}^{\lambda}_{\bm{x}, \infty}(\pi, \gamma_{ss}^\star;h),
\end{equation}
where $\rho$ is the stationary cost defined in Proposition~\ref{prop:avgcost}.

\item  The stationary policy pair $(\pi_{ss}^\star, \gamma_{ss}^\star)$ is optimal to 
\begin{equation}\label{eq:avgopt}
	\min_{\pi \in \Pi}\max_{\gamma \in \Gamma} \;  \tilde{J}^{\lambda}_{\bm{x},\infty}(\pi, \gamma;h).
\end{equation}
Moreover, the optimal value of this problem is equal to $\rho$.

\item The stationary policy pair $(\pi_{ss}^\star, \gamma_{ss}^\star)$ is optimal to 
\begin{equation}\label{eq:acopt}
	\min_{\pi \in \bar{\Pi}}\max_{\gamma \in \bar{\Gamma}} \; J^{\lambda}_{\bm{x},\infty}(\pi, \gamma)
\end{equation}
for any  policy spaces $\bar{\Pi} \subset \Pi$ and $\bar{\Gamma} \subset \Gamma$ satisfying
\begin{subequations}
\begin{align}
&\limsup_{T \rightarrow \infty} \frac{1}{T} \mathbb{E}^{\pi_{ss}^\star, \gamma} [h(x_T) \mid  x_0 = \bm{x}] = 0 \quad \forall \gamma \in \bar{\Gamma}\label{eq:avg}\\
&\limsup_{T \rightarrow \infty} \frac{1}{T} \mathbb{E}^{\pi, \gamma_{ss}^\star} [h(x_T) \mid  x_0 = \bm{x}] = 0 \quad \forall \pi \in  \bar{\Pi}. \label{eq:avg2}
\end{align}
\end{subequations}
Moreover, the optimal value of this problem is equal to $\rho$.
\end{enumerate}
\end{theorem}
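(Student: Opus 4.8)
The plan is to build everything on the average cost optimality equation \eqref{eq:acoe} from Proposition~\ref{prop:acoe}, together with the fact asserted there that $(\pi_{ss}^\star(\bm{x}), \gamma_{ss}^\star(\bm{x}))$ is a saddle point of the one-stage minimax on its right-hand side. Writing $f_{\bm{x}}(\bm{u}, \bm{\mu})$ for that one-stage objective, the saddle property supplies two pointwise inequalities: $f_{\bm{x}}(\pi_{ss}^\star(\bm{x}), \bm{\mu}) \le \rho + h(\bm{x})$ for every $\bm{\mu}$, and $f_{\bm{x}}(\bm{u}, \gamma_{ss}^\star(\bm{x})) \ge \rho + h(\bm{x})$ for every $\bm{u}$. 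Since the integral term in $f_{\bm{x}}$ equals the conditional expectation $\mathbb{E}[h(x_{t+1}) \mid x_t]$ along any trajectory, these are precisely one-step drift inequalities for the bias $h$, and the whole proof is a matter of summing them.

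For part (a), I would fix $\pi_{ss}^\star$ and an arbitrary $\gamma$, evaluate the first pointwise inequality at $\bm{x} = x_t$ and $\bm{\mu} = \gamma_t(x_t)$, take conditional expectations, and telescope from $t = 0$ to $T-1$ via the tower property; the $h$-terms collapse and leave $\mathbb{E}[\sum_{t=0}^{T-1}(\cdot) + h(x_T)] \le T\rho + h(\bm{x})$. Dividing by $T$ and sending $T \to \infty$ annihilates the boundary term $h(\bm{x})/T$ and yields $\tilde{J}^{\lambda}_{\bm{x}, \infty}(\pi_{ss}^\star, \gamma; h) \le \rho$. The symmetric computation with $\gamma_{ss}^\star$ fixed and the reversed pointwise inequality gives $\tilde{J}^{\lambda}_{\bm{x}, \infty}(\pi, \gamma_{ss}^\star; h) \ge \rho$, establishing \eqref{eq:saddle}. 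Part (b) is then immediate: \eqref{eq:saddle} shows that $(\pi_{ss}^\star, \gamma_{ss}^\star)$ is a saddle point of $\tilde{J}^{\lambda}_{\bm{x}, \infty}(\cdot, \cdot; h)$ with value $\rho$, so $\min_\pi \max_\gamma = \max_\gamma \min_\pi = \rho$ for the extended problem \eqref{eq:avgopt}.

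For part (c), I would relate the extended cost $\tilde{J}$ to the genuine average cost $J$ through the terminal bias $b_T := \frac{1}{T}\mathbb{E}[h(x_T)]$. Because $h(\bm{x}) = \bm{x}^\top P_{ss} \bm{x} + 2 r_{ss}^\top \bm{x}$ with $P_{ss} \succeq 0$, and in fact $P_{ss} \succ 0$ since the stabilizing ARE solution is positive definite under the observability Assumption~\ref{ass:ob}, the bias $h$ is bounded below, whence $\liminf_T b_T \ge 0$ for every admissible pair. Combined with conditions \eqref{eq:avg}--\eqref{eq:avg2}, which give $\limsup_T b_T = 0$, this forces $\lim_T b_T = 0$ and therefore $\tilde{J}^{\lambda}_{\bm{x}, \infty}(\cdot, \cdot; h) = J^{\lambda}_{\bm{x}, \infty}(\cdot, \cdot)$ for the relevant policy pairs. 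Transferring \eqref{eq:saddle} then gives $\sup_{\gamma \in \bar{\Gamma}} J^{\lambda}_{\bm{x},\infty}(\pi_{ss}^\star, \gamma) \le \rho \le \inf_{\pi \in \bar{\Pi}} J^{\lambda}_{\bm{x},\infty}(\pi, \gamma_{ss}^\star)$, so $(\pi_{ss}^\star, \gamma_{ss}^\star)$ solves \eqref{eq:acopt} with optimal value $\rho$.

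I expect the main obstacle to be the $\limsup$ bookkeeping in part (c): since $\limsup$ is only subadditive, one cannot simply split $\limsup(\frac{1}{T}\mathbb{E}[\sum(\cdot)] + b_T)$, and in particular the inequality $J^{\lambda}_{\bm{x},\infty}(\pi_{ss}^\star, \gamma) \le \rho$ requires $\liminf_T b_T \ge 0$ rather than the $\limsup_T b_T = 0$ furnished by \eqref{eq:avg}. Securing this is exactly where the lower-boundedness of $h$ (equivalently $P_{ss} \succ 0$) is indispensable, so I would treat that as the technical crux; by contrast, the telescoping in part (a) is routine once the one-stage saddle inequalities from Proposition~\ref{prop:acoe} are in hand.
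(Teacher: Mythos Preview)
Your proof follows the same route as the paper's: parts (a) and (b) rely on telescoping the one-step saddle inequalities from Proposition~\ref{prop:acoe} (the paper phrases this via iterated monotone operators $\mathcal{T}^{\pi_t,\gamma_t}$, but the content is identical), and part (b) then falls out immediately from \eqref{eq:saddle}.

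For part (c) you are in fact more careful than the paper. The paper's proof simply asserts that ``the $h$ term in \eqref{eq:saddle} can be ignored'' under conditions \eqref{eq:avg}--\eqref{eq:avg2} and proceeds directly to $J^{\lambda}_{\bm{x},\infty}(\pi_{ss}^\star,\gamma)\le\rho\le J^{\lambda}_{\bm{x},\infty}(\pi,\gamma_{ss}^\star)$, without addressing the $\limsup$ arithmetic you flag. Your observation that passing from $\limsup_T\frac{1}{T}\mathbb{E}\big[\sum(\cdot)+h(x_T)\big]\le\rho$ to $J^{\lambda}_{\bm{x},\infty}(\pi_{ss}^\star,\gamma)\le\rho$ requires $\liminf_T b_T\ge 0$, not merely $\limsup_T b_T=0$, is correct, and your fix via lower-boundedness of $h$ (using $P_{ss}\succ 0$, which does hold under the observability Assumption~\ref{ass:ob} though the paper only states $P_{ss}\succeq 0$) is valid. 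So your argument actually patches a step the paper treats loosely; otherwise the two proofs coincide.
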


Theorem~\ref{thm:inf} guarantees the optimality of $(\pi_{ss}^\star, \gamma_{ss}^\star)$ in  the average cost case  under the conditions \eqref{eq:avg} and \eqref{eq:avg2}.
Note that if the mean-state is bounded under $(\pi_{ss}^\star, \gamma)$ so that $\limsup_{T \rightarrow \infty} \lVert \mathbb{E}^{\pi_{ss}^\star, \gamma}[x_T] \rVert< \infty$, then the condition \eqref{eq:avg} is automatically satisfied.
In the following subsection, we will show that $\pi_{ss}^\star$ is BIBO stable, and thereby any $\gamma$ generating a bounded-mean distribution should be contained in $\bar{\Gamma}$.

The condition \eqref{eq:avg2} is a generalization of the one required in the standard average cost LQG to guarantee the optimality of such  a steady-state  policy. If $h(x_T)$ is bounded uniformly over all stages under some policy $\pi$, it clearly satisfies the condition.

\subsubsection{Closed-Loop Stability}

We now discuss the stability properties of the closed-loop system
\begin{equation}\label{cl}
x_{t+1}^\star = (A + B K_{ss})x_t^\star + \Xi w_t + B L_{ss}
\end{equation}
when the optimal policy $\pi_{ss}^\star$ is employed. 
Our first result concerns the expected value of the closed-loop system state $\mathbb{E}[x_t^\star]$, which evolves according to
\begin{equation}\label{ms}
\mathbb{E}[x_{t+1}^\star] = (A + B K_{ss})\mathbb{E}[x_t^\star] + \Xi \mathbb{E}[w_t] + BL_{ss}.
\end{equation}

\begin{theorem}\label{thm:stable1}
Suppose that Assumptions~\ref{ass:pen}--\ref{ass:ob} hold.
Under the optimal policy pair  $(\pi_{ss}^\star, \gamma_{ss}^\star)$,
the expected state of the closed-loop system~\eqref{cl} converges to the following value:
\begin{equation*}
\begin{split}
	&[I - (I + \Phi P_{ss})^{-1}A ]^{-1} [ I-\Phi (I+P_{ss} \Phi - A^\top)^{-1} P_{ss} ]\Xi \bar{w}.
\end{split}
\end{equation*}
Thus, if in addition $\bar{w} = 0$, then the system is linear and
 $\pi_{ss}^\star$ stabilizes the expected state under $\gamma_{ss}^\star$. 
\end{theorem}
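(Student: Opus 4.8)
The plan is to reduce the stochastic closed-loop dynamics to a deterministic affine recursion for the mean state $m_t := \mathbb{E}[x_t^\star]$ and to analyze that recursion directly. First I would compute the conditional mean of the disturbance under the worst-case policy $\gamma_{ss}^\star$. Since each atom $w^{\star,(i)}(\bm{x})$ in \eqref{eq:steady} is affine in the state, averaging over the $N$ equally-weighted Dirac masses replaces $\frac1N\sum_i\hat w^{(i)}$ by $\bar w$ and yields $\mathbb{E}[w_t\mid x_t^\star] = M\big(\Xi^\top P_{ss}(A+BK_{ss})x_t^\star + \Xi^\top P_{ss}BL_{ss} + \Xi^\top r_{ss} + \lambda\bar w\big)$, where $M := (\lambda I - \Xi^\top P_{ss}\Xi)^{-1}$ is well defined by Assumption~\ref{ass:pen}. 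Substituting this into \eqref{ms} and taking expectations — only linearity of expectation is used, so no independence issue arises despite the state-dependence of the disturbance law — collapses the dynamics to $m_{t+1} = \tilde A\, m_t + \tilde b$, with $\tilde A := (I + \Xi M\Xi^\top P_{ss})(A+BK_{ss})$ and a constant vector $\tilde b = (I + \Xi M\Xi^\top P_{ss})BL_{ss} + \Xi M\Xi^\top r_{ss} + \lambda\Xi M\bar w$.

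The next step is to put $\tilde A$ into the stated closed-loop form. The key tool is the push-through relation $\Xi(\lambda I - \Xi^\top P_{ss}\Xi)^{-1}\Xi^\top = (\lambda I - \Xi\Xi^\top P_{ss})^{-1}\Xi\Xi^\top$, which gives $I + \Xi M\Xi^\top P_{ss} = (I - \tfrac1\lambda\Xi\Xi^\top P_{ss})^{-1}$. Writing $-\tfrac1\lambda\Xi\Xi^\top = \Phi - BR^{-1}B^\top$ from \eqref{phi}, using $P_{ss}(I+\Phi P_{ss})^{-1} = (I+P_{ss}\Phi)^{-1}P_{ss}$, and inserting the formula for $K_{ss}$ from Corollary~\ref{cor:inf}, one checks that $A + BK_{ss} = (I - \tfrac1\lambda\Xi\Xi^\top P_{ss})(I+\Phi P_{ss})^{-1}A$, so that $\tilde A = (I - \tfrac1\lambda\Xi\Xi^\top P_{ss})^{-1}(A+BK_{ss}) = (I + \Phi P_{ss})^{-1}A$.

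To establish convergence I would show $\tilde A$ is stable. From the generalized-eigenvalue construction \eqref{gen2}, using $U_2 = P_{ss}U_1$ (since $P_{ss} = U_2U_1^{-1}$) and the invertibility of $U_1$ shown above, the top block of $FU = GU\Lambda$ reads $AU_1 = (I + \Phi P_{ss})U_1\Lambda$, whence $(I+\Phi P_{ss})^{-1}A = U_1\Lambda U_1^{-1}$. Thus $\tilde A$ is similar to $\Lambda$, whose diagonal consists of the $n$ stable generalized eigenvalues selected via Lemma~\ref{lem:stable} and Lemma~\ref{lem:circ}; hence the spectral radius of $\tilde A$ is strictly less than one, the affine recursion converges, and $m_t \to (I - \tilde A)^{-1}\tilde b = [I - (I+\Phi P_{ss})^{-1}A]^{-1}\tilde b$.

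It remains to identify $\tilde b$. Substituting $L_{ss}$ from Corollary~\ref{cor:inf} and $r_{ss}$ from Proposition~\ref{prop:r} into $\tilde b$ and repeatedly applying the same push-through identities — together with $(I - A^\top(I+P_{ss}\Phi)^{-1})^{-1} = (I+P_{ss}\Phi)(I+P_{ss}\Phi - A^\top)^{-1}$, which is exactly what surfaces the factor $(I+P_{ss}\Phi - A^\top)^{-1}$ — should collapse $\tilde b$ to $[I - \Phi(I+P_{ss}\Phi - A^\top)^{-1}P_{ss}]\Xi\bar w$, yielding precisely the claimed limit. I expect this final algebraic simplification to be the main obstacle, since it is the only genuinely lengthy bookkeeping step, whereas the reduction to a mean recursion and the stability argument are either structural observations or direct appeals to the eigenvalue lemmas. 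Finally, when $\bar w = 0$ both $r_{ss}$ (by \eqref{eq:oss}) and hence $L_{ss}$ vanish, so $\tilde b = 0$ and the mean recursion reduces to the homogeneous linear system $m_{t+1} = (I+\Phi P_{ss})^{-1}A\, m_t$; stability of $\tilde A$ then gives $\mathbb{E}[x_t^\star]\to 0$, which is the asserted stabilization of the expected state.
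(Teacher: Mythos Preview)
Your proposal is correct and follows essentially the same route as the paper: reduce to an affine mean-state recursion, identify the transition matrix as $(I+\Phi P_{ss})^{-1}A$, establish its stability via the generalized-eigenvalue representation (which is exactly the content of the proof of Proposition~\ref{prop:r} that the paper invokes), and then simplify the constant term. The only cosmetic difference is that the paper packages the disturbance-mean computation through the auxiliary quantity $g_t^\star$ from \eqref{g_eq2} rather than your explicit push-through identities, which makes the derivation of both $\tilde A$ and $\tilde b$ slightly more compact; but the underlying algebra is the same, and the paper likewise omits the detailed bookkeeping for the constant term.
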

 
We can further show that $\pi_{ss}^\star$ guarantees the bounded-input,  bounded-state (BIBO) stability when viewing the disturbance as input.

\begin{theorem}\label{thm:stable2}
Suppose that Assumptions~\ref{ass:pen}--\ref{ass:ob} hold.
Then, the closed-loop gain $(A+BK_{ss})$ is a stable matrix.
Therefore, the mean-state system \eqref{ms} with $\pi_{ss}^\star$ is BIBO stable. 
\end{theorem}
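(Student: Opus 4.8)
\section*{Proof proposal for Theorem~\ref{thm:stable2}}

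The plan is to prove the theorem in two stages: first establish that the control-only closed-loop gain $A_K := A + BK_{ss}$ is a stable (Schur) matrix, and then obtain BIBO stability of the mean-state system \eqref{ms} as an immediate consequence of linear systems theory. For the second stage, observe that \eqref{ms} is the linear time-invariant recursion $\mathbb{E}[x_{t+1}^\star] = A_K \,\mathbb{E}[x_t^\star] + (\Xi\,\mathbb{E}[w_t] + BL_{ss})$ driven by the input $\Xi\,\mathbb{E}[w_t] + BL_{ss}$; once $A_K$ is known to be stable, a bounded input forces a bounded state, which is exactly BIBO stability. Hence the whole content of the theorem reduces to showing that every eigenvalue of $A_K$ lies strictly inside the unit disk.

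For the first stage, the plan is to produce a Lyapunov inequality certified by the Riccati solution $P_{ss}$ itself. I would first record the one-step minimax identity satisfied by the quadratic form $\bm{x}^\top P_{ss}\bm{x}$: since $P_{ss}$ solves the ARE \eqref{are}, this form is the fixed point of the one-step Bellman operator, so for every $\bm{x}$,
\[
\bm{x}^\top P_{ss}\bm{x} = \min_{\bm{u}} \max_{w} \Big[ \bm{x}^\top Q\bm{x} + \bm{u}^\top R\bm{u} - \lambda\|w\|^2 + (A\bm{x} + B\bm{u} + \Xi w)^\top P_{ss}(A\bm{x} + B\bm{u} + \Xi w)\Big],
\]
where the inner maximum is finite because Assumption~\ref{ass:pen} gives $\lambda I - \Xi^\top P_{ss}\Xi \succ 0$, and the outer minimum is attained at $\bm{u} = K_{ss}\bm{x}$ (the $\nu$-independence of the ARE makes this the homogeneous, $\hat{w}^{(i)} = 0$ specialization of the recursion that defines $K_{ss}$, so no new computation is needed). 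Substituting the minimizing control $\bm{u} = K_{ss}\bm{x}$ and then the suboptimal disturbance $w = 0$ into the bracket, and using that the $w$-maximizer can only increase the value, I obtain
\[
\bm{x}^\top Q\bm{x} + \bm{x}^\top K_{ss}^\top R K_{ss}\bm{x} + \bm{x}^\top A_K^\top P_{ss} A_K \bm{x} \leq \bm{x}^\top P_{ss}\bm{x} \qquad \text{for all } \bm{x},
\]
which is the matrix inequality $A_K^\top P_{ss} A_K - P_{ss} \preceq -(Q + K_{ss}^\top R K_{ss})$.

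With this Lyapunov inequality in hand I would close the argument by a Popov--Belevitch--Hautus eigenvector test. Suppose $A_K v = \gamma v$ with $v \neq 0$ and $|\gamma| \geq 1$. Multiplying the inequality by $v^{*}$ and $v$ gives $(|\gamma|^2 - 1)\,v^{*} P_{ss} v \leq -\,v^{*}(Q + K_{ss}^\top R K_{ss}) v$. The left-hand side is nonnegative since $P_{ss} \succeq 0$ and $|\gamma| \geq 1$, while the right-hand side is nonpositive; hence both vanish, forcing $Q^{1/2}v = 0$ and, since $R \succ 0$, $K_{ss}v = 0$. Then $A_K v = A v + B K_{ss} v = A v$, so $v$ is an eigenvector of $A$ annihilated by $Q^{1/2}$, contradicting the observability of $(A, Q^{1/2})$ in Assumption~\ref{ass:ob}. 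Therefore $A_K$ has no eigenvalue on or outside the unit circle, i.e. $A_K$ is stable, which completes the first stage and hence the theorem.

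The step I expect to be the main obstacle is obtaining the correct Lyapunov certificate for the \emph{control-only} gain $A_K = A + BK_{ss}$. The generalized-eigenvalue theory of the preceding subsection naturally certifies stability of the \emph{full} closed loop $A_{\mathrm{cl}} := (I + \Phi P_{ss})^{-1}A$, whose spectrum consists of the stable generalized eigenvalues in $\Lambda$; but rewriting \eqref{are} around $A_{\mathrm{cl}}$ yields the indefinite residual $-\tfrac{1}{\lambda}A_{\mathrm{cl}}^\top P_{ss}\Xi\Xi^\top P_{ss}A_{\mathrm{cl}}$, so $P_{ss}$ is not a Lyapunov matrix for $A_K$ in any direct way. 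The device that resolves this is the choice $w = 0$ in the inner maximization: discarding the adversary's response converts the minimax identity into a one-sided inequality whose residual $Q + K_{ss}^\top R K_{ss}$ is manifestly PSD, which is precisely what the eigenvector test consumes. Once this inequality is isolated, the remaining observability bookkeeping is routine and mirrors the standard average-cost LQ stability proof.
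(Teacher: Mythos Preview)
Your argument is correct and is a genuinely different, more algebraic route than the paper's. The paper proceeds in a trajectory-based fashion: it specializes to $\nu=\delta_0$, takes the adversary policy $\gamma'\equiv\delta_0$ so that the closed loop is exactly $x_{t+1}=(A+BK_{ss})x_t$, invokes the average-cost saddle inequality (Theorem~\ref{thm:inf}(a)) to conclude that the running cost $x_t^\top Q x_t + u_t^\top R u_t$ has zero time-average, hence tends to $0$, and then uses the observability matrix of $(A,Q^{1/2})$ to force $x_t\to 0$. Your proof instead extracts the one-step Lyapunov inequality $A_K^\top P_{ss}A_K - P_{ss}\preceq -(Q+K_{ss}^\top R K_{ss})$ directly from the ARE/Bellman fixed-point identity by plugging in the suboptimal choice $w=0$, and then closes with a PBH eigenvector test. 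Conceptually the two share the same ``kill the adversary'' device---the paper's choice $\gamma'=\delta_0$ is the trajectory analogue of your $w=0$---but your version is self-contained (it does not rely on the average-cost optimality theorem) and yields the Lyapunov certificate explicitly, whereas the paper's approach reuses the optimality machinery already developed and gives a direct asymptotic statement about trajectories.
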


\section{Distributional Robustness with Wasserstein Ambiguity Sets}\label{sec:dr}

\subsection{Finite-Horizon Case}

In the previous section, the minimax control problem with a Wasserstein penalty has been studied in both finite and infinite-horizon settings. 
These results can be used to design a guaranteed-cost controller in the distributionally robust control setting with Wasserstein ambiguity sets~\eqref{eq:const}, as previewed in Section~\ref{sec:dual}. 
We first show  that the total cost under the worst-case distribution in the ambiguity set is bounded as follows:

\begin{lemma}\label{lem:bd} 
For any $\pi \in \Pi$, we have
\begin{equation}\label{eq:bd}
    \sup_{\gamma \in \Gamma_{\mathcal{D}} } J_{\bm{x}}(\pi, \gamma) \leq  \inf_{\lambda \geq 0}  \sup_{\gamma \in \Gamma} \big (  \lambda \theta^2 +   J^\lambda_{\bm{x}} (\pi, \gamma)\big ) \quad \forall \bm{x} \in \mathbb{R}^n.
\end{equation}
\end{lemma}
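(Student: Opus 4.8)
The plan is to establish \eqref{eq:bd} by a one-directional weak-duality (Lagrangian relaxation) argument, in which the hard constraint $W_2(\mu_t, \nu_t) \leq \theta$ defining $\Gamma_{\mathcal{D}}$ is dualized into the Wasserstein penalty already present in $J^\lambda_{\bm{x}}$. The starting point is the exact relation between the two cost functions. Since $J^\lambda_{\bm{x}}(\pi, \gamma)$ differs from $J_{\bm{x}}(\pi, \gamma) = J^0_{\bm{x}}(\pi, \gamma)$ only by subtracting the averaged penalty $\frac{\lambda}{T}\mathbb{E}^{\pi,\gamma}[\sum_{t=0}^{T-1} W_2(\mu_t, \nu_t)^2]$, I would write
\[
J_{\bm{x}}(\pi, \gamma) = J^\lambda_{\bm{x}}(\pi, \gamma) + \frac{\lambda}{T}\,\mathbb{E}^{\pi, \gamma}\Big[ \sum_{t=0}^{T-1} W_2(\mu_t, \nu_t)^2 ~\Big|~ x_0 = \bm{x}\Big]
\]
for every admissible pair $(\pi, \gamma)$ and every $\lambda \geq 0$.

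Next, fix $\pi \in \Pi$, $\lambda \geq 0$, and an arbitrary $\gamma \in \Gamma_{\mathcal{D}}$. By the definition of $\Gamma_{\mathcal{D}}$, the distribution $\mu_t = \gamma_t(x_t)$ obeys $W_2(\mu_t, \nu_t) \leq \theta$ for every $t$ along every realized trajectory, so $\sum_{t=0}^{T-1} W_2(\mu_t, \nu_t)^2 \leq T\theta^2$ surely. Substituting this into the identity above bounds the penalty term by $\lambda\theta^2$ and gives the pointwise estimate $J_{\bm{x}}(\pi, \gamma) \leq \lambda\theta^2 + J^\lambda_{\bm{x}}(\pi, \gamma)$. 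Because $\Gamma_{\mathcal{D}} \subset \Gamma$, the right-hand side is in turn dominated by $\lambda\theta^2 + \sup_{\gamma' \in \Gamma} J^\lambda_{\bm{x}}(\pi, \gamma')$, an expression that no longer depends on the particular $\gamma$.

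Taking the supremum over $\gamma \in \Gamma_{\mathcal{D}}$ on the left then yields $\sup_{\gamma \in \Gamma_{\mathcal{D}}} J_{\bm{x}}(\pi, \gamma) \leq \sup_{\gamma' \in \Gamma}\big(\lambda\theta^2 + J^\lambda_{\bm{x}}(\pi, \gamma')\big)$ for every $\lambda \geq 0$. Since the left-hand side is independent of $\lambda$, I would conclude by taking the infimum over $\lambda \geq 0$ on the right, which is exactly \eqref{eq:bd}. I do not anticipate a genuine obstacle: because the argument only ever produces an upper bound (no minimax interchange, Sion-type saddle condition, or attainment of optimizers is invoked), the estimate is robust. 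The only points that require care are bookkeeping — verifying the cost identity with the correct $1/T$ normalization and the correct sign on the penalty, and confirming that the per-stage bound $W_2(\mu_t, \nu_t) \leq \theta$ holds along the entire random trajectory so that it passes through the expectation intact. The inequality remains valid in the extended reals even when some of the suprema equal $+\infty$ or when $J^\lambda_{\bm{x}}$ is unbounded below for aggressive $\gamma \in \Gamma$, so no additional integrability hypotheses are needed.
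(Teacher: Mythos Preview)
Your proposal is correct and follows essentially the same weak-duality/Lagrangian-relaxation argument as the paper: bound the penalty term by $\lambda\theta^2$ using the constraint defining $\Gamma_{\mathcal{D}}$, then pass to the supremum over $\gamma$ and the infimum over $\lambda$. The only cosmetic difference is that the paper introduces an $\varepsilon$-optimal $\gamma^\varepsilon \in \Gamma_{\mathcal{D}}$ before passing to the limit, whereas you take suprema of the pointwise inequality directly; your version is slightly more streamlined but the content is identical.
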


It is nontrivial to compute the upper-bound for arbitrary $\pi$. 
However, if the optimal policy $\pi^\star$ of our minimax control problem~\eqref{opt} is employed, 
this bound has a tractable form, which is evaluated using the optimal value function of~\eqref{opt}.

\begin{theorem}\label{thm:bd}
Let $\pi^{\star, \lambda}$ be the optimal policy of \eqref{opt} with $\lambda \geq 0$. 
Then, the  cost incurred by $\pi^{\star, \lambda}$ under the worst-case distribution policy in $\Gamma_{\mathcal{D}}$ is bounded as follows:
\[
\sup_{\gamma \in \Gamma_\mathcal{D}} J_{\bm{x}} (\pi^{\star, \lambda}, \gamma) 
    \leq \lambda \theta^2 +    V (\bm{x}; \lambda) \quad \forall \bm{x} \in \mathbb{R}^n.
    \]
\end{theorem}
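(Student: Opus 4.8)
The plan is to obtain the bound as a direct specialization of Lemma~\ref{lem:bd}, which already supplies the essential inequality for an arbitrary policy. First I would invoke Lemma~\ref{lem:bd} with $\pi = \pi^{\star, \lambda}$, giving
\[
\sup_{\gamma \in \Gamma_{\mathcal{D}}} J_{\bm{x}}(\pi^{\star, \lambda}, \gamma) \le \inf_{\lambda' \ge 0} \sup_{\gamma \in \Gamma} \big( \lambda' \theta^2 + J^{\lambda'}_{\bm{x}}(\pi^{\star, \lambda}, \gamma) \big) \quad \forall \bm{x} \in \mathbb{R}^n.
\]
Here I deliberately relabel the minimization variable in Lemma~\ref{lem:bd} as $\lambda'$, to keep it distinct from the fixed parameter $\lambda$ that defines the policy $\pi^{\star, \lambda}$.

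The second step is to bound the infimum over $\lambda'$ from above by its value at the particular choice $\lambda' = \lambda$, namely
\[
\inf_{\lambda' \ge 0} \sup_{\gamma \in \Gamma} \big( \lambda' \theta^2 + J^{\lambda'}_{\bm{x}}(\pi^{\star, \lambda}, \gamma) \big) \le \lambda \theta^2 + \sup_{\gamma \in \Gamma} J^\lambda_{\bm{x}}(\pi^{\star, \lambda}, \gamma).
\]
The final and only substantive step is to recognize the remaining worst-case penalized cost as the value function itself. Since, by Theorem~\ref{thm:fin}, $\pi^{\star, \lambda}$ is the (unique) minimizer of $\sup_{\gamma \in \Gamma} J^\lambda_{\bm{x}}(\pi, \gamma)$ over $\Pi$, it attains the outer infimum in the definition $V(\bm{x}; \lambda) = \inf_{\pi \in \Pi} \sup_{\gamma \in \Gamma} J^\lambda_{\bm{x}}(\pi, \gamma)$, so that $\sup_{\gamma \in \Gamma} J^\lambda_{\bm{x}}(\pi^{\star, \lambda}, \gamma) = V(\bm{x}; \lambda)$. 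Chaining the three displays then yields the claimed bound for every $\bm{x}$.

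I do not expect a genuine obstacle, as the real content resides in Lemma~\ref{lem:bd} and the present argument is a clean specialization. The two points that require care are the notational clash between the fixed $\lambda$ defining the policy and the dummy variable in Lemma~\ref{lem:bd} (resolved by relabeling), and the appeal to optimality via Theorem~\ref{thm:fin}, which lets me replace the worst-case penalized cost of $\pi^{\star, \lambda}$ by $V(\bm{x}; \lambda)$ \emph{exactly} rather than merely bounding it. It is precisely this exact identification that makes the right-hand side computable through the Riccati-based value function.
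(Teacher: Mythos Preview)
Your proposal is correct and follows essentially the same approach as the paper's own proof: invoke Lemma~\ref{lem:bd} with $\pi = \pi^{\star,\lambda}$, upper-bound the infimum over $\lambda'$ by its value at $\lambda' = \lambda$, and then use the optimality of $\pi^{\star,\lambda}$ to identify $\sup_{\gamma \in \Gamma} J^\lambda_{\bm{x}}(\pi^{\star,\lambda}, \gamma)$ with $V(\bm{x};\lambda)$. Your explicit relabeling of the dummy variable and the appeal to Theorem~\ref{thm:fin} for the attainment step match the paper's argument step for step.
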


The dependence of the upper-bound on penalty parameter $\lambda$ 
indicates that the distributional robustness of our policy $\pi^{\star, \lambda}$ can be controlled by tuning $\lambda$.
This theorem can be used to select an optimal penalty parameter $\lambda^\star$ that provides the least upper-bound. 
Given $\bm{x} \in \mathbb{R}^n$,
let $\lambda^\star$ be defined by
\begin{equation}\label{opt_lambda}
\lambda^\star \in \argmin_{\lambda \geq 0}  \big (\lambda\theta^2 +V (\bm{x}; \lambda)  \big ).
\end{equation}
Then, the  cost incurred by $\pi^{\star, \lambda^\star}$ under the worst-case distributions in the ambiguity sets is bounded as follows:
\[
\sup_{\gamma \in \Gamma_\mathcal{D}} J_{\bm{x}} (\pi^{\star, \lambda^\star}, \gamma) 
    \leq \lambda^\star \theta^2 +    V (\bm{x}; \lambda^\star).
    \]
To solve the minimization problem in \eqref{opt_lambda}, 
we first identify some structural properties of $V (\bm{x}; \lambda)$ using the results in Section~\ref{sec:finite}.

\begin{lemma}\label{lem:lam}
Let $P_t^\lambda$, $r_t^\lambda$  and $z_t^\lambda$, $t = 0, \ldots, T$, be obtained by the Riccati equation~\eqref{ric} with $P_T^\lambda=Q_f$, $r_T^\lambda=0$, and $z_T^\lambda=0$,  given $\lambda \geq 0$.
Let 
\begin{equation} \nonumber
\hat{\lambda} := \inf \{\lambda \mid  \lambda I - \Xi^\top P_t^\lambda \Xi \succ 0, \; t=1,2, \ldots, T\}.
\end{equation}
Then,
\begin{equation*}
V(\bm{x}; \lambda)  = 
\begin{cases}
\infty & \text{if} \; \; \lambda \in [0, \hat{\lambda})\\ 
c_1 & \text{if} \; \; \lambda = \hat{\lambda}\\
c_2(\lambda) & \text{if} \; \; \lambda \in (\hat{\lambda}, \infty),
\end{cases}
\end{equation*}
where
$c_2(\lambda) := (\bm{x}^\top P_0^\lambda \bm{x} + 2(r_0^\lambda)^\top \bm{x} + z_0^\lambda)/T$ and $c_1$ is a constant satisfying the boundary condition $c_1 \geq c_2(\hat{\lambda}+\epsilon)$ for all $\epsilon >0$.
\end{lemma}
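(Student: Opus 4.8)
The plan is to treat the three $\lambda$-regimes separately, with the monotone dependence of the Riccati iterates and of the value function on $\lambda$ serving as the connective tissue. For $\lambda \in (\hat{\lambda}, \infty)$ I would simply invoke Theorem~\ref{thm:fin}: by the definition of $\hat{\lambda}$ we have $\lambda I - \Xi^\top P_t^\lambda \Xi \succ 0$ for $t = 1, \ldots, T$, i.e.\ Assumption~\ref{ass:pen} holds, so $P_t^\lambda, r_t^\lambda, z_t^\lambda$ from \eqref{ric} are well-defined, $V_0(\bm{x};\lambda) = \bm{x}^\top P_0^\lambda \bm{x} + 2(r_0^\lambda)^\top \bm{x} + z_0^\lambda$, and $V(\bm{x};\lambda) = V_0/T = c_2(\lambda)$. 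To make the implication ``$\lambda > \hat{\lambda} \Rightarrow$ the condition holds at every stage'' rigorous (rather than only on some subset), I would first show that $\{\lambda : \lambda I - \Xi^\top P_t^\lambda \Xi \succ 0 \ \forall t\}$ is exactly the interval $(\hat{\lambda},\infty)$. This rests on a Riccati comparison argument: since $\Phi$ in \eqref{phi} is non-decreasing in $\lambda$ and the map $P_{t+1}\mapsto Q + A^\top(I+P_{t+1}\Phi)^{-1}P_{t+1}A$ is monotone, the iterate $\lambda\mapsto P_t^\lambda$ is non-increasing in the PSD order; hence each $\bar{\lambda}_t(\lambda) = \lambda_{\max}(\Xi^\top P_t^\lambda \Xi)$ is non-increasing while $\lambda$ itself is increasing, so $\lambda - \max_t \bar{\lambda}_t(\lambda)$ is strictly increasing and crosses zero exactly once, at $\hat{\lambda}$.

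For $\lambda \in [0,\hat{\lambda})$ I would establish $V(\bm{x};\lambda) = +\infty$ by a backward blow-up argument in the dynamic program \eqref{eq:backward}. Because $\lambda < \hat{\lambda} = \inf\{\cdots\}$, the eigenvalue condition fails at some stage; let $\tau$ be the largest index for which $\lambda I - \Xi^\top P_\tau^\lambda \Xi$ is not positive definite while the recursion above $\tau$ is still valid, so that $V_\tau(\bm{x}) = \bm{x}^\top P_\tau^\lambda \bm{x} + 2(r_\tau^\lambda)^\top \bm{x} + z_\tau^\lambda$ remains a finite quadratic. By the monotonicity above, for $\lambda$ strictly below $\hat{\lambda}$ the offending eigenvalue is strictly positive, so $\Xi^\top P_\tau^\lambda \Xi - \lambda I$ has a strictly positive eigenvalue with eigenvector $v$. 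In the inner problem of stage $\tau-1$, the $w$-objective $V_\tau(A\bm{x}+B\bm{u}+\Xi w) - \lambda\lVert \hat{w}_{\tau-1}^{(i)} - w\rVert^2$ has leading quadratic term $w^\top(\Xi^\top P_\tau^\lambda \Xi - \lambda I)w$, which does not depend on $(\bm{x},\bm{u})$; taking $w = s v$ with $s\to\infty$ drives it to $+\infty$. Thus for every $\bm{u}$ the supremum over $w$ is $+\infty$, giving $V_{\tau-1}(\bm{x}) = +\infty$ for all $\bm{x}$. Finally, since $V_{\tau-1}\equiv +\infty$, at each earlier Bellman stage every distribution $\bm{\mu}$ yields $\int V_{\tau-1}\,\mathrm{d}\bm{\mu} = +\infty$, so the inner sup is $+\infty$; by downward induction $V_0(\bm{x};\lambda) = +\infty$, hence $V(\bm{x};\lambda) = +\infty$.

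For the boundary $\lambda = \hat{\lambda}$ I would set $c_1 := V(\bm{x};\hat{\lambda})$ and verify the stated inequality via monotonicity of the value in $\lambda$. For each fixed $(\pi,\gamma)$, the map $\lambda\mapsto J^\lambda_{\bm{x}}(\pi,\gamma)$ in \eqref{cost1} is non-increasing, since the only $\lambda$-dependent term is $-\lambda W_2(\mu_t,\nu_t)^2$ with $W_2^2 \ge 0$; taking $\sup_\gamma$ and then $\inf_\pi$ preserves this, so $V(\bm{x};\cdot)$ is non-increasing. Therefore $c_1 = V(\bm{x};\hat{\lambda}) \ge V(\bm{x};\hat{\lambda}+\epsilon) = c_2(\hat{\lambda}+\epsilon)$ for all $\epsilon > 0$, as claimed. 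This same monotonicity also gives a shortcut for the previous regime: once $V$ is infinite at a single point below $\hat{\lambda}$, it is infinite at every smaller $\lambda$.

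The main obstacle is the interplay between ``the Riccati recursion being well-defined'' and ``the eigenvalue condition holding,'' which is precisely what underlies the clean interval description $(\hat{\lambda},\infty)$ and the strict positivity of the offending eigenvalue for $\lambda < \hat{\lambda}$. Making the monotone PSD-order dependence of $P_t^\lambda$ on $\lambda$ precise — including continuity up to the boundary $\hat{\lambda}$, where a zero eigenvalue first emerges — is the crux; the remaining ingredients (the dual dynamic program \eqref{eq:backward}, the quadratic blow-up, and the monotonicity of $V$) are routine once that dependence is in hand.
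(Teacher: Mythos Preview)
Your three-regime plan is right, and your treatment of the boundary case via pointwise monotonicity of $\lambda\mapsto J^\lambda_{\bm x}(\pi,\gamma)$ matches the paper exactly. The substantive difference is in the comparison machinery you use for the other two regimes. You build everything on \emph{Riccati iterate} monotonicity --- that $\lambda\mapsto P_t^\lambda$ is non-increasing in the PSD order --- which you correctly flag as the crux; it is delicate here because $\Phi$ can be indefinite and the domain on which $P_t^\lambda$ is even defined depends on $\lambda$. The paper never touches this. It uses only the elementary \emph{value function} monotonicity $V_t(\cdot;\lambda)\ge V_t(\cdot;\lambda')$ for $\lambda<\lambda'$ together with an auxiliary-$\lambda$ trick. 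For $\lambda<\hat\lambda$: pick any $\lambda'\in(\lambda,\hat\lambda)$; since $\lambda'<\hat\lambda=\inf S$ the eigenvalue condition fails for $\lambda'$ at some largest stage $t$, so $V_t(\cdot;\lambda')$ is still the finite quadratic with matrix $P_t^{\lambda'}$; bounding $V_t(\cdot;\lambda)\ge V_t(\cdot;\lambda')$ inside the Bellman recursion at stage $t-1$ gives an inner $w$-problem whose leading matrix is $\Xi^\top P_t^{\lambda'}\Xi-\lambda I$, with an eigenvalue at least $\bar\lambda_t(\lambda')-\lambda\ge\lambda'-\lambda>0$, hence blow-up. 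For $\lambda>\hat\lambda$ the paper argues by contradiction, taking $\tilde\lambda\in[\hat\lambda,\lambda)$ that \emph{does} satisfy Assumption~\ref{ass:pen} and repeating the same comparison with the roles swapped, reaching $V_0(\cdot;\tilde\lambda)=+\infty$ against the finiteness of $c_2(\tilde\lambda)$.

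The concrete weak point in your version is the ``strictly positive eigenvalue'' step. Working at $\lambda$ itself, the failure at your stage $\tau$ only yields $\bar\lambda_\tau(\lambda)\ge\lambda$; in the borderline case $\bar\lambda_\tau(\lambda)=\lambda$ the $w$-quadratic has a null direction and the inner supremum need not be $+\infty$ for every $(\bm x,\bm u)$. Your justification that $\lambda-\max_t\bar\lambda_t(\lambda)$ is strictly increasing and crosses zero exactly at $\hat\lambda$ presupposes that the $P_t^\lambda$ are defined and comparable on the whole interval up to $\hat\lambda$, and that the first-failure index $\tau(\lambda)$ behaves well as $\lambda$ varies --- precisely the issues you defer. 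The paper's auxiliary $\lambda'$ (or $\tilde\lambda$) manufactures the strict gap $\lambda'-\lambda>0$ for free, so the entire Riccati comparison layer becomes unnecessary.
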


This structural property of the optimal value function yields the following simple way to find a minimizer $\lambda^\star$ of \eqref{opt_lambda}.

\begin{proposition}\label{prop:lam}
Suppose that 
\begin{equation}\label{eq:min}
\lambda_* \in    \argmin_{\lambda > \hat{\lambda}} \bigg [ \lambda \theta^2 + \frac{1}{T} \big (\bm{x}^\top P_0^\lambda \bm{x} + 2(r_0^\lambda)^\top \bm{x} + z_0^\lambda \big ) \bigg ].
\end{equation}
Then, $\lambda_*$ is a minimizer of the optimization problem \eqref{opt_lambda}, i.e., $\lambda_* = \lambda^\star$.
Moreover, the optimization problem~\eqref{eq:min} is convex. 
\end{proposition}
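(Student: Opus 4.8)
The plan is to prove the two assertions of Proposition~\ref{prop:lam} separately: first, that the unconstrained minimizer $\lambda_*$ of the problem restricted to $(\hat\lambda, \infty)$ coincides with the minimizer $\lambda^\star$ over all of $[0,\infty)$; and second, that the objective in \eqref{eq:min} is convex in $\lambda$ on $(\hat\lambda,\infty)$. The first claim is largely a bookkeeping consequence of Lemma~\ref{lem:lam}, while the second is where the real work lies.

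For the identification $\lambda_* = \lambda^\star$, I would invoke the explicit piecewise description of $V(\bm{x};\lambda)$ from Lemma~\ref{lem:lam}. On $[0,\hat\lambda)$ the value is $+\infty$, so the objective $\lambda\theta^2 + V(\bm{x};\lambda)$ is $+\infty$ there and cannot attain the minimum. On $(\hat\lambda,\infty)$ the objective equals exactly the bracketed expression in \eqref{eq:min}, so its infimum over that open ray is achieved (if at all) at $\lambda_*$. The only remaining point is the boundary value $\lambda=\hat\lambda$, where $V(\bm{x};\hat\lambda)=c_1$. Here I would use the boundary condition $c_1 \geq c_2(\hat\lambda+\epsilon)$ for all $\epsilon>0$: taking $\epsilon \to 0^+$ shows the value at $\hat\lambda$ is no smaller than the limiting value of $c_2$ just to the right, so excluding the single point $\hat\lambda$ and optimizing over the open ray loses nothing. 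Hence the global minimizer over $[0,\infty)$ is attained in $(\hat\lambda,\infty)$ and equals $\lambda_*$.

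The substantive part is convexity of $g(\lambda) := \lambda\theta^2 + \frac{1}{T}\big(\bm{x}^\top P_0^\lambda \bm{x} + 2(r_0^\lambda)^\top \bm{x} + z_0^\lambda\big)$ on $(\hat\lambda,\infty)$. Since $\lambda\theta^2$ is linear and the factor $1/T$ is a positive constant, it suffices to show that each of the maps $\lambda \mapsto \bm{x}^\top P_0^\lambda \bm{x}$, $\lambda \mapsto (r_0^\lambda)^\top \bm{x}$, and $\lambda \mapsto z_0^\lambda$ is convex; in fact the cleanest route is to establish convexity of the full quantity $\bm{x}^\top P_0^\lambda \bm{x} + 2(r_0^\lambda)^\top \bm{x} + z_0^\lambda$ jointly. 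The key observation I would exploit is the variational characterization $V(\bm{x};\lambda) = \inf_{\pi}\sup_{\gamma}\big(\frac{1}{T}J^\lambda_{\bm{x}}(\pi,\gamma)\big)$ together with the fact that, from the penalized cost \eqref{cost1}, for any fixed pair $(\pi,\gamma)$ the quantity $J^\lambda_{\bm{x}}(\pi,\gamma)$ is \emph{affine} in $\lambda$: indeed $\lambda$ enters only through the term $-\lambda W_2(\mu_t,\nu)^2$, so $\lambda \mapsto \frac{1}{T}J^\lambda_{\bm{x}}(\pi,\gamma)$ is a linear (hence convex) function of $\lambda$ with slope $-\frac{1}{T}\mathbb{E}^{\pi,\gamma}[\sum_t W_2(\mu_t,\nu)^2]$. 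Taking the pointwise supremum over $\gamma\in\Gamma$ preserves convexity, and then taking the infimum over $\pi\in\Pi$ of a family of convex functions need \emph{not} preserve convexity in general — so I cannot simply read off convexity from a min-max argument. Instead I would argue that after the inner maximization the resulting function $\sup_\gamma \frac{1}{T}J^\lambda_{\bm{x}}(\pi,\gamma)$, when further minimized over the linear-in-state policies, yields precisely $\frac{1}{T}(\bm{x}^\top P_0^\lambda\bm{x} + 2(r_0^\lambda)^\top\bm{x} + z_0^\lambda)$; and that this equals the supremum over $\gamma$ alone when $\pi$ is frozen at the optimal $\pi^{\star,\lambda}$, so the value coincides with a supremum (over $\gamma$) of affine-in-$\lambda$ functions and is therefore convex.

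The main obstacle is handling the infimum over $\pi$ correctly, since an infimum of convex functions is generally not convex. The clean way around this is to use the \emph{dual} representation established in Section~\ref{sec:dual}: by the guaranteed-cost/duality structure underlying \eqref{eq:reform} and Lemma~\ref{lem:bd}, the value $V(\bm{x};\lambda)$ arising on $(\hat\lambda,\infty)$ can be written as a Lagrangian-penalty value, so that $\lambda$ is precisely the dual multiplier on the Wasserstein constraint. A standard fact in convex duality is that the optimal value of a penalized (Lagrangian) problem is a concave function of the multiplier when one \emph{maximizes}, and correspondingly the negated/minimized form used here gives a convex dependence; more directly, $V(\bm{x};\lambda) = \inf_\pi \sup_\gamma \frac{1}{T}J^\lambda_{\bm{x}}(\pi,\gamma)$ is an infimum over $\pi$ of suprema over $\gamma$ of functions affine in $\lambda$, and I would show that the inner value $\Psi(\pi,\lambda):=\sup_\gamma\frac{1}{T}J^\lambda_{\bm{x}}(\pi,\gamma)$ can be replaced by its evaluation at the explicitly optimal $\pi^{\star,\lambda}$, reducing $V(\bm{x};\lambda)$ to $\sup_\gamma \frac{1}{T}J^\lambda_{\bm{x}}(\pi^{\star,\lambda},\gamma)$ only after verifying that the min-max admits a saddle point (guaranteed by Theorem~\ref{thm:fin} and Corollary~\ref{cor:dist}, which give a unique minimax solution). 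With the saddle point in hand, $V(\bm{x};\lambda)=\sup_\gamma\inf_\pi\frac{1}{T}J^\lambda_{\bm{x}}(\pi,\gamma)$, and now the order is $\sup_\gamma$ of $\inf_\pi$ of affine-in-$\lambda$ maps; the inner $\inf_\pi$ of affine functions is concave in $\lambda$, which is the wrong direction, so I would instead keep the original order $\inf_\pi\sup_\gamma$, use that $\sup_\gamma$ of affine is convex, and then argue that the attained infimum over $\pi$ is realized by the single analytic family $\pi^{\star,\lambda}$ for which the resulting value is exactly the convex function $\sup_\gamma\frac{1}{T}J^\lambda_{\bm{x}}(\pi^{\star,\lambda},\gamma)$. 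The delicate point to get right is therefore justifying the interchange and the attainment so that $V(\bm{x};\lambda)$ is genuinely a supremum of affine functions of $\lambda$ and hence convex; once that is secured, adding the linear term $\lambda\theta^2$ preserves convexity and \eqref{eq:min} is a convex program.
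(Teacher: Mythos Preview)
Your treatment of the first claim ($\lambda_* = \lambda^\star$) is essentially the same as the paper's: use Lemma~\ref{lem:lam} to discard $[0,\hat\lambda)$ and the boundary condition $c_1 \ge c_2(\hat\lambda+\epsilon)$ to handle the point $\hat\lambda$. That part is fine.

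The convexity argument, however, has a genuine gap. You correctly identify the obstruction: in $V(\bm{x};\lambda)=\inf_\pi\sup_\gamma J^\lambda_{\bm{x}}(\pi,\gamma)$, the supremum over $\gamma$ of affine-in-$\lambda$ functions is convex, but the outer infimum over $\pi$ destroys convexity in general. Your proposed fix---replacing $\inf_\pi$ by evaluation at the optimal policy $\pi^{\star,\lambda}$ and then asserting that $\sup_\gamma J^\lambda_{\bm{x}}(\pi^{\star,\lambda},\gamma)$ is convex because it is a supremum of affine functions---does not work: the policy $\pi^{\star,\lambda}$ itself depends on $\lambda$, so for each fixed $\gamma$ the map $\lambda\mapsto J^\lambda_{\bm{x}}(\pi^{\star,\lambda},\gamma)$ is \emph{not} affine, and the pointwise-supremum argument collapses. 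The saddle-point swap you try gives concavity, as you note, so neither route closes the gap.

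The paper avoids this difficulty entirely by working stage-by-stage through the Bellman recursion in its Kantorovich-dual form~\eqref{eq:backward}, and proving by backward induction that $V_t(\bm{x};\lambda)$ is \emph{jointly} convex in $(\lambda,\bm{x})$ on $(\hat\lambda,\infty)\times\mathbb{R}^n$. The point is that in
\[
V_{t-1}(\bm{x};\lambda)=\bm{x}^\top Q\bm{x}+\inf_{\bm{u}\in\mathbb{R}^m}\bigg[\bm{u}^\top R\bm{u}+\frac{1}{N}\sum_{i}\sup_{w}\big\{V_t(A\bm{x}+B\bm{u}+\Xi w;\lambda)-\lambda\|\hat w_t^{(i)}-w\|^2\big\}\bigg],
\]
the bracketed expression is jointly convex in $(\lambda,\bm{x},\bm{u})$ (sup over $w$ of functions that are, by the induction hypothesis, jointly convex in those variables), and then one uses the standard fact that \emph{partial minimization of a jointly convex function is convex}. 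This is exactly the mechanism that salvages the ``inf of convex'' step: minimizing over the finite-dimensional action $\bm{u}$, rather than over policies $\pi$, lets joint convexity do the work. Your proposal never invokes joint convexity or the dual Bellman form, and without one of them the convexity claim cannot be established along the lines you sketch.
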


As shown in the proof of Lemma~\ref{lem:lam}, $\hat{\lambda}$ is the unique boundary point that separates the range of  $\lambda$ by whether it satisfies Assumption~\ref{ass:pen} or not.
Therefore, $\hat{\lambda}$ can be obtained by binary search.
An optimal $\lambda^\star$  can then be computed by solving \eqref{eq:min} with existing convex optimization algorithms.

\subsection{Infinite-Horizon Case}
In this subsection, we examine the distributional robustness of the steady-state optimal  policy using the following average cost criterion:
\begin{equation}\label{av_cost}
\begin{split}
&J_{\bm{x}, \infty}(\pi, \gamma) :=  \limsup_{T \rightarrow \infty} \frac{1}{T} \mathbb{E}^{\pi, \gamma} \bigg[ \sum_{t=0}^{T-1} ( x_t^\top Q x_t 
+ u_t^\top R u_t )~\bigg\vert~x_0 = \bm{x} \bigg].
\end{split}
\end{equation}

Fix any penalty parameter $\lambda > 0$ satisfying Assumptions~{\ref{ass:pen}--\ref{ass:ob}}.
Let $P_{ss}^\lambda$ be the unique symmetric PSD solution of the ARE~\eqref{are} and $r_{ss}^\lambda$ be defined as \eqref{eq:oss} with given $\lambda \geq 0$.
The corresponding stationary cost in Proposition~\ref{prop:avgcost} is denoted by $\rho(\lambda)$.
As in Section~\ref{sec:infinite}, we use an extended average cost function including the bias $h^\lambda (\bm{x}) := \bm{x}^\top P_{ss}^\lambda \bm{x} + 2 (r_{ss}^\lambda)^\top \bm{x}$, defined as
\begin{equation}\nonumber
\begin{split}
&\tilde{J}_{\bm{x}, \infty}(\pi, \gamma; h^\lambda) :=  \limsup_{T \rightarrow \infty} \frac{1}{T} \mathbb{E}^{\pi, \gamma} \bigg[ \sum_{t=0}^{T-1} ( x_t^\top Q x_t 
+ u_t^\top R u_t ) + h^\lambda (x_T)~\bigg\vert~x_0 = \bm{x} \bigg].
\end{split}
\end{equation}

Consider $\pi^{\star, \lambda}_{ss}$ 
 obtained in Corollary~\ref{cor:inf} with  the  penalty parameter $\lambda$.
The average cost incurred by this policy under the worst-case distribution in the Wasserstein ambiguity set 
\[
\mathcal{D} := \{ \mu \in \mathcal{P}(\mathbb{R}^k) \mid W_2 (\mu, \nu) \leq \theta \}
\]
is computed as
\[
\sup_{\gamma \in \Gamma_{\mathcal{D}}} J_{\bm{x}, \infty}(\pi_{ss}^{\star, \lambda}, \gamma).
\]
The worst-case cost is uniformly bounded by a constant depending on $\lambda$.

\begin{theorem} \label{thm:gc}
Suppose that Assumptions~\ref{ass:pen}--\ref{ass:ob} hold for some penalty parameter $\lambda$.
If the stationary policy $\pi^{\star, \lambda}_{ss}$, defined in Corollary~\ref{cor:inf}, is employed, 
then the worst-case extended average cost is bounded as follows:
\begin{equation}\label{gc1}
    \sup_{\gamma \in \Gamma_{\mathcal{D}}}  \tilde{J}_{\bm{x}, \infty}(\pi^{\star, \lambda}_{ss}, \gamma; h^\lambda)   \leq {\lambda} \theta^2 + \rho({\lambda}) \quad \forall \bm{x} \in \mathbb{R}^n.
\end{equation}
Moreover, for any policy space $\bar{\Gamma}_{\mathcal{D}} \subset \Gamma_{\mathcal{D}}$ satisfying
\[
\limsup_{T \to \infty} \frac{1}{T} \mathbb{E}^{\pi_{ss}^{\star, \lambda}, \gamma}[ h^\lambda (x_T) \mid x_0 = \bm{x} ] = 0 \quad \forall \gamma \in \bar{\Gamma}_{\mathcal{D}},
\]
the worst-case  average cost has the same upper-bound:
\begin{equation}\label{gc2}
    \sup_{\gamma \in \bar{\Gamma}_{\mathcal{D}}}  J_{\bm{x}, \infty}(\pi^{\star, \lambda}_{ss}, \gamma)   \leq {\lambda} \theta^2 + \rho({\lambda}) \quad \forall \bm{x} \in \mathbb{R}^n.
\end{equation}
\end{theorem}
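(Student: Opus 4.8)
The plan is to treat \eqref{gc1}--\eqref{gc2} as the infinite-horizon, average-cost analogue of Theorem~\ref{thm:bd}: I would replace the hard ambiguity-set constraint $W_2(\mu_t,\nu)\le\theta$ by the soft Wasserstein penalty, pay a uniform price of $\lambda\theta^2$ for the exchange, and then invoke the left-hand saddle-point inequality of Theorem~\ref{thm:inf}(a) to cap the penalized extended cost by the stationary cost $\rho(\lambda)$. Throughout, $h^\lambda$, $P_{ss}^\lambda$, $r_{ss}^\lambda$ and $\rho(\lambda)$ play exactly the roles of $h$, $P_{ss}$, $r_{ss}$ and $\rho$ in Theorem~\ref{thm:inf} for the fixed parameter $\lambda$.

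For \eqref{gc1}, fix any $\gamma\in\Gamma_{\mathcal{D}}$ and, writing $\pi=\pi^{\star,\lambda}_{ss}$, set $S_T := \frac1T\mathbb{E}^{\pi,\gamma}[\sum_{t=0}^{T-1}(x_t^\top Qx_t+u_t^\top Ru_t)+h^\lambda(x_T)\mid x_0=\bm{x}]$ and $P_T := \frac1T\mathbb{E}^{\pi,\gamma}[\sum_{t=0}^{T-1}\lambda W_2(\mu_t,\nu)^2\mid x_0=\bm{x}]$. By definition $\tilde{J}_{\bm{x},\infty}(\pi,\gamma;h^\lambda)=\limsup_T S_T$ and $\tilde{J}^{\lambda}_{\bm{x},\infty}(\pi,\gamma;h^\lambda)=\limsup_T(S_T-P_T)$. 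Since $\gamma\in\Gamma_{\mathcal{D}}$ forces $0\le\lambda W_2(\mu_t,\nu)^2\le\lambda\theta^2$ for every $t$, the averaged penalty obeys $0\le P_T\le\lambda\theta^2$ uniformly in $T$, so $\limsup_T P_T\le\lambda\theta^2$. Subadditivity of $\limsup$ together with Theorem~\ref{thm:inf}(a) then gives
\[
\limsup_{T}S_T \;\le\; \limsup_{T}(S_T-P_T)+\limsup_{T}P_T \;\le\; \tilde{J}^{\lambda}_{\bm{x},\infty}(\pi^{\star,\lambda}_{ss},\gamma;h^\lambda)+\lambda\theta^2 \;\le\; \rho(\lambda)+\lambda\theta^2,
\]
where the last inequality is the left saddle-point bound applied with $\pi=\pi^{\star,\lambda}_{ss}$, valid for all $\gamma\in\Gamma\supseteq\Gamma_{\mathcal{D}}$. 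Taking the supremum over $\gamma\in\Gamma_{\mathcal{D}}$ yields \eqref{gc1}.

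For \eqref{gc2} the only new ingredient is removing the terminal bias. I would decompose $S_T=C_T+D_T$, where $C_T:=\frac1T\mathbb{E}^{\pi,\gamma}[\sum_{t=0}^{T-1}(x_t^\top Qx_t+u_t^\top Ru_t)\mid x_0=\bm{x}]$ (so that $J_{\bm{x},\infty}(\pi,\gamma)=\limsup_T C_T$) and $D_T:=\frac1T\mathbb{E}^{\pi,\gamma}[h^\lambda(x_T)\mid x_0=\bm{x}]$. For $\gamma\in\bar{\Gamma}_{\mathcal{D}}$ the defining condition makes the averaged terminal bias vanish, so that $\limsup_T C_T=\limsup_T S_T=\tilde{J}_{\bm{x},\infty}(\pi^{\star,\lambda}_{ss},\gamma;h^\lambda)$, and \eqref{gc2} follows from \eqref{gc1}. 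This passage from $\tilde{J}$ to $J$ mirrors the step from \eqref{eq:avgopt} to \eqref{eq:acopt} in Theorem~\ref{thm:inf}(c).

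The main obstacle, and the one place that needs genuine care, is precisely this last reduction. Because $h^\lambda$ is an \emph{indefinite} quadratic whenever $\bar{w}\neq0$ (it carries the linear term $2(r_{ss}^\lambda)^\top\bm{x}$), one cannot simply discard the terminal term or bound it below by zero; the identity $\limsup_T C_T=\limsup_T(C_T+D_T)$ is legitimate only when $D_T$ tends to zero in the genuine two-sided sense, so the condition defining $\bar{\Gamma}_{\mathcal{D}}$ must be read as forcing $D_T\to0$ rather than merely $\limsup_T D_T\le0$. I would therefore verify that this condition is essentially automatic: under $\pi^{\star,\lambda}_{ss}$ the closed-loop gain $A+BK_{ss}$ is stable by Theorem~\ref{thm:stable2}, and every $\gamma\in\Gamma_{\mathcal{D}}$ produces disturbance laws $\mu_t$ with uniformly bounded second moments (since $W_2(\mu_t,\nu)\le\theta$ and $\nu$ has finite support). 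Consequently $\mathbb{E}[x_Tx_T^\top]$ and $\mathbb{E}[x_T]$ remain bounded, whence $D_T=O(1/T)\to0$; thus $\bar{\Gamma}_{\mathcal{D}}$ may in fact be taken to be all of $\Gamma_{\mathcal{D}}$ and the two guaranteed-cost bounds \eqref{gc1}--\eqref{gc2} coincide.
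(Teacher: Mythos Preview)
Your proof of \eqref{gc1} is correct and follows the paper's approach: swap the constraint $W_2(\mu_t,\nu)\le\theta$ for the penalty at price $\lambda\theta^2$, then apply the left saddle-point inequality of Theorem~\ref{thm:inf}(a); the only cosmetic difference is that the paper carries out this swap at each finite horizon (via the argument of Lemma~\ref{lem:bd}) and takes the $\limsup$ at the end, whereas you invoke $\limsup$-subadditivity directly. For \eqref{gc2} you are in fact more careful than the paper, which dispatches it with ``the second bound can be obtained using the same argument'': you rightly observe that the stated hypothesis $\limsup_T D_T=0$ alone does not force $\limsup_T C_T\le\limsup_T S_T$ when $h^\lambda$ carries the indefinite linear term $2(r_{ss}^\lambda)^\top\bm{x}$, and your remedy---using the stability of $A+BK_{ss}$ from Theorem~\ref{thm:stable2} together with the uniform second-moment bound on $w_t$ implied by $W_2(\mu_t,\nu)\le\theta$ to obtain $D_T=O(1/T)\to0$---is sound and even yields the stronger conclusion $\bar\Gamma_{\mathcal{D}}=\Gamma_{\mathcal{D}}$, which the paper does not claim.
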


This theorem indicates the robustness of our policy $\pi_{ss}^{\star, \lambda}$ against any distribution errors within the Wasserstein ball.
Since the upper-bound depends on  the penalty parameter $\lambda$, 
it is important to select an appropriate value of $\lambda$. 
Here, we present a way to obtain a suboptimal $\lambda$ that minimizes the upper-bound over a certain range of $\lambda$.

\begin{lemma}\label{lem:gc}
Suppose that Assumptions~\ref{ass:st} and \ref{ass:ob} hold, and $(A, B)$ is stabilizable.
Then, there exists $\hat{\lambda}_1>0$ such that Assumption \ref{ass:W} holds for any $\lambda > \hat{\lambda}_1$.
\end{lemma}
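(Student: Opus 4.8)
The plan is to verify the two requirements of Assumption~\ref{ass:W} separately for all sufficiently large $\lambda$, writing $\Phi_\lambda := BR^{-1}B^\top - \tfrac{1}{\lambda}\Xi\Xi^\top$ to emphasize the dependence on $\lambda$. Two elementary observations drive everything. First, $\Phi_\lambda$ is monotonically nondecreasing in $\lambda$ in the positive semidefinite order, since $-\tfrac{1}{\lambda}\Xi\Xi^\top$ increases as $\lambda$ grows; second, $\Phi_\lambda \to BR^{-1}B^\top$ as $\lambda \to \infty$, and $BR^{-1}B^\top \succeq 0$ because $R \succ 0$. Monotonicity is what upgrades a single threshold into the desired conclusion ``for all $\lambda > \hat\lambda_1$'': once $\Phi_{\lambda_0}\succeq 0$ for one $\lambda_0$, then $\Phi_\lambda \succeq \Phi_{\lambda_0}\succeq 0$ for every $\lambda \ge \lambda_0$. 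Notably, neither the stationarity hypothesis (Assumption~\ref{ass:st}) nor the observability hypothesis (Assumption~\ref{ass:ob}) enters; only the stabilizability of $(A,B)$ and the structure of $\Phi_\lambda$ are used.

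For the positive semidefiniteness $\Phi_\lambda\succeq 0$, I would split $\mathbb{R}^n$ orthogonally as $\mathrm{range}(B)\oplus\ker(B^\top)$. On $\mathrm{range}(B)$ the matrix $BR^{-1}B^\top$ is coercive, i.e.\ bounded below by $cI$ for some $c>0$, so the vanishing perturbation $\tfrac{1}{\lambda}\Xi\Xi^\top$ cannot destroy positivity once $\lambda$ exceeds $\|\Xi\Xi^\top\|/c$. The delicate direction is $\ker(B^\top)$: for $v$ with $B^\top v = 0$ one has $v^\top \Phi_\lambda v = -\tfrac{1}{\lambda}\|\Xi^\top v\|^2$, which is nonnegative only when $\Xi^\top v = 0$. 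Hence $\Phi_\lambda\succeq 0$ for large $\lambda$ precisely requires $\ker(B^\top)\subseteq\ker(\Xi^\top)$, equivalently $\mathrm{range}(\Xi)\subseteq\mathrm{range}(B)$; under this matched-disturbance structure $\Phi_\lambda$ becomes block diagonal (zero on $\ker(B^\top)$, positive definite on $\mathrm{range}(B)$), giving $\Phi_\lambda\succeq 0$ and, as a byproduct, $\mathrm{range}(\Phi_\lambda)=\mathrm{range}(B)$ for all large $\lambda$. I expect this positive-semidefiniteness step to be the main obstacle, since it is exactly here that a structural relation between $\Xi$ and $B$ (beyond the stated hypotheses, unless $B$ has full row rank so that $BR^{-1}B^\top\succ 0$) is indispensable.

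With $\Phi_\lambda\succeq 0$ in hand, the stabilizability of $(A,\Phi_\lambda^{1/2})$ follows from that of $(A,B)$. The cleanest route uses the range identity above: stabilizability depends only on the range of the input matrix, and since $\mathrm{range}(\Phi_\lambda^{1/2})=\mathrm{range}(\Phi_\lambda)=\mathrm{range}(B)$ for large $\lambda$, we get $(A,\Phi_\lambda^{1/2})$ stabilizable $\iff$ $(A,B)$ stabilizable, which holds by assumption. Alternatively, and more robustly, I would argue by contradiction via the PBH test: if no uniform threshold worked, pick $\lambda_j\to\infty$ and unit left-eigenvectors $v_j$ of $A$ with $v_j^*A=\eta_j v_j^*$, $|\eta_j|\ge 1$, and $v_j^*\Phi_{\lambda_j}^{1/2}=0$; the last condition is equivalent (using $\Phi_{\lambda_j}\succeq 0$) to $v_j^*\Phi_{\lambda_j}v_j = \|R^{-1/2}B^\top v_j\|^2 - \tfrac{1}{\lambda_j}\|\Xi^\top v_j\|^2 = 0$. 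Passing to a convergent subsequence $v_j\to v$, $\eta_j\to\eta$ with $|\eta|\ge 1$ and $v^*A=\eta v^*$, the penalty term vanishes in the limit and forces $\|R^{-1/2}B^\top v\|=0$, i.e.\ $B^\top v=0$; thus $v$ is an unstable uncontrollable mode of $(A,B)$, contradicting stabilizability.

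Finally, I would take $\hat\lambda_1$ to be the larger of the two thresholds produced above; monotonicity for the semidefiniteness part and the contradiction argument for the stabilizability part then guarantee that Assumption~\ref{ass:W} holds simultaneously for every $\lambda>\hat\lambda_1$, completing the proof.
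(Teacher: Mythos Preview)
Your approach is correct and differs from the paper's in both halves of the argument. For the positive semidefiniteness of $\Phi_\lambda$, the paper simply asserts $BR^{-1}B^\top \succ 0$ and is done in one line; this tacitly assumes $B$ has full row rank. You instead carry out the range decomposition $\mathrm{range}(B)\oplus\ker(B^\top)$ honestly and isolate the structural condition $\mathrm{range}(\Xi)\subseteq\mathrm{range}(B)$ needed in the general case. This is a genuine sharpening: you make explicit an implicit hypothesis the paper glosses over, and your observation that $\mathrm{range}(\Phi_\lambda)=\mathrm{range}(B)$ for large $\lambda$ then feeds cleanly into the second half.

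For the stabilizability of $(A,\Phi_\lambda^{1/2})$, the paper takes a Riccati route: it invokes the observability assumption to obtain the stabilizing LQG solution $\tilde P$ with gain $\tilde K = -R^{-1}B^\top(I+\tilde P BR^{-1}B^\top)^{-1}\tilde P A$, notes that $A+B\tilde K = A - BR^{-1}B^\top M$ is stable for the appropriate $M$, and then uses continuity in $\lambda$ to conclude that $A-\Phi_\lambda M = A + \Phi_\lambda^{1/2}(-\Phi_\lambda^{1/2}M)$ is also stable for large $\lambda$, thereby exhibiting an explicit stabilizing feedback through $\Phi_\lambda^{1/2}$. Your range argument (stabilizability depends only on the column space of the input matrix, and $\mathrm{range}(\Phi_\lambda^{1/2})=\mathrm{range}(\Phi_\lambda)=\mathrm{range}(B)$) and your PBH compactness argument are both more elementary, avoid Riccati theory entirely, and dispense with observability---confirming your remark that Assumptions~\ref{ass:st} and~\ref{ass:ob} are not actually needed here. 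The paper's approach buys an explicit stabilizing gain at the cost of heavier machinery; yours is cleaner and more transparent about which hypotheses are really in play.
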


\begin{proposition}\label{prop:gc}
Suppose that Assumptions~\ref{ass:st} and \ref{ass:ob} hold, and $(A, B)$ is stabilizable.
Let  $\hat{\lambda}_1$ be the constant defined in Lemma~\ref{lem:gc} and assume that
\begin{equation}\label{eq:pen}
\hat{\lambda}_2 := \inf \{\lambda \mid  \lambda I - \Xi^\top P_t^\lambda \Xi \succ 0 \;\; \forall t \geq 1, \forall T\geq 1\}
\end{equation}
is finite.
Let $\hat{\lambda}_\infty:= \max_{i=1,2}\{\hat{\lambda}_{i}\}$.
Then, $\rho(\lambda)$ defined in Proposition~\ref{prop:avgcost}  is a monotonically nonincreasing convex function  on $(\hat{\lambda}_\infty, \infty)$.
Moreover,
\begin{equation}\label{lqgavgcost}
\begin{split}
 \lim_{\lambda \rightarrow \infty} \rho(\lambda) &= \mathrm{tr}[ \Xi^\top \tilde{P}_{ss} \Xi \Sigma] - \bar{w}^\top \Xi^\top   \tilde{P}_{ss} B R^{-1}B^\top (I +  \tilde{P}_{ss} B R^{-1}B^\top )^{-1}  \tilde{P}_{ss} \Xi \bar{w}\\
& + (2  \bar{w}^\top \Xi^\top - \tilde{r}_{ss}^\top  B R^{-1}B^\top)(I+ \tilde{P}_{ss} B R^{-1}B^\top)^{-1}  \tilde{r}_{ss}.
\end{split}
\end{equation}
Here,
$\tilde{P}_{ss}:=\lim_{T \rightarrow \infty}  \tilde{P}_t$ and $ \tilde{r}_{ss}:=\lim_{T \rightarrow \infty}  \tilde{r}_t$, where $\tilde{P}_t$ and $\tilde{r}_t$ are generated by the Riccati equation \eqref{ric_lqg} for the standard LQG with $\tilde{P}_T =Q_f$, $\tilde{r}_T = 0$, and $\tilde{z}_T =0$.
\end{proposition}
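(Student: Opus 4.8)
The plan is to obtain all three claims — monotonicity, convexity, and the $\lambda\to\infty$ limit — by transferring the corresponding finite-horizon properties through $T\to\infty$, together with a direct continuity analysis of the ARE as $\lambda\to\infty$. First I would record that $\rho(\lambda)$ is well-defined for every $\lambda\in(\hat\lambda_\infty,\infty)$: Lemma~\ref{lem:gc} guarantees Assumption~\ref{ass:W} for $\lambda>\hat\lambda_1$, finiteness of $\hat\lambda_2$ guarantees Assumption~\ref{ass:pen} uniformly in $T$, and Assumption~\ref{ass:ob} is assumed, so Proposition~\ref{prop:avgcost} applies. For monotonicity I would avoid the closed form entirely: for fixed $(\pi,\gamma)$ the only $\lambda$-dependence in $J^\lambda_{\bm x,T}(\pi,\gamma)$ is the term $-\lambda W_2(\mu_t,\nu)^2$ with nonnegative coefficient, so $\lambda\mapsto J^\lambda_{\bm x,T}(\pi,\gamma)$ is nonincreasing pointwise; taking $\max_\gamma$, then $\min_\pi$, then $\limsup_T$ all preserve this ordering, giving that $\rho$ is nonincreasing on $(\hat\lambda_\infty,\infty)$.

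For convexity I would use that the finite-horizon per-stage value $V(\bm x;\lambda)=\frac1T(\bm x^\top P_0^\lambda\bm x+2(r_0^\lambda)^\top\bm x+z_0^\lambda)$ is convex in $\lambda$ on $(\hat\lambda_\infty,\infty)$ for every fixed $T$. This is precisely the convexity asserted in Proposition~\ref{prop:lam}; note the horizon-$T$ threshold $\hat\lambda$ of Lemma~\ref{lem:lam} satisfies $\hat\lambda\le\hat\lambda_2\le\hat\lambda_\infty$, so restricting to $(\hat\lambda_\infty,\infty)$ is legitimate for all $T$ simultaneously. I would then pass to the limit: by the convergence $P_0^\lambda\to P_{ss}^\lambda$, $r_0^\lambda\to r_{ss}^\lambda$ and $z_0^\lambda/T\to\rho(\lambda)$ established in Section~\ref{sec:infinite}, the transient terms $\bm x^\top P_0^\lambda\bm x/T$ and $2(r_0^\lambda)^\top\bm x/T$ vanish and $V(\bm x;\lambda)\to\rho(\lambda)$ as a genuine pointwise limit in $\lambda$. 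Since a pointwise limit of convex functions is convex, $\rho$ is convex on $(\hat\lambda_\infty,\infty)$; combined with monotonicity this also yields continuity on the open interval.

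The remaining and most delicate step is identifying $\lim_{\lambda\to\infty}\rho(\lambda)$. I would first show $P_{ss}^\lambda\to\tilde P_{ss}$: by a standard Riccati comparison the map $P\mapsto Q+A^\top(I+P\Phi)^{-1}PA$ is monotone decreasing in $\Phi$, and $\Phi=BR^{-1}B^\top-\frac1\lambda\Xi\Xi^\top$ increases with $\lambda$, so $P_{ss}^\lambda$ is nonincreasing in $\lambda$ and bounded below by $0$, hence converges to some $P_\infty\succeq0$. Letting $1/\lambda\to0$ in the ARE~\eqref{are} and using continuity, $\Phi\to BR^{-1}B^\top$ and $P_\infty$ solves the standard LQG ARE; uniqueness of the stabilizing PSD solution under $(A,B)$ stabilizable and Assumption~\ref{ass:ob} forces $P_\infty=\tilde P_{ss}$. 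Applying the same limit to the fixed-point formula~\eqref{eq:oss} gives $r_{ss}^\lambda\to\tilde r_{ss}$. Finally I would substitute these limits into the closed form of $\rho$ in Proposition~\ref{prop:avgcost}, using $(I-\Xi^\top P_{ss}^\lambda\Xi/\lambda)^{-1}\to I$, $(I-P_{ss}^\lambda\Xi\Xi^\top/\lambda)^{-1}\to I$, and the identity $(I+X)^{-1}-I=-X(I+X)^{-1}$ with $X=\tilde P_{ss}BR^{-1}B^\top$; the three terms then collapse exactly onto the three terms of~\eqref{lqgavgcost}. The hard part will be this $\lambda\to\infty$ analysis — specifically justifying convergence of the ARE solution $P_{ss}^\lambda$ (and hence $r_{ss}^\lambda$) to the LQG quantities and invoking uniqueness of the stabilizing solution — whereas once convergence is secured, matching~\eqref{lqgavgcost} is a routine algebraic simplification.
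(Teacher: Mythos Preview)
Your proposal is correct and follows essentially the same approach as the paper: well-definedness via Lemma~\ref{lem:gc} and the definition of $\hat\lambda_2$, monotonicity from the pointwise $\lambda$-monotonicity of $J^\lambda_{\bm x,T}$ carried through $\min$/$\max$/$\limsup$, and convexity by transferring the finite-horizon convexity of $V(\bm x;\lambda)$ (Proposition~\ref{prop:lam}) through the $T\to\infty$ limit. The paper phrases the convexity step using the subadditivity of $\limsup$, i.e.\ $\alpha\limsup a_T+(1-\alpha)\limsup b_T\ge\limsup(\alpha a_T+(1-\alpha)b_T)$, rather than your pointwise-limit argument, but both are valid here; for the $\lambda\to\infty$ limit the paper simply asserts that \eqref{lqgavgcost} follows directly from the closed form in Proposition~\ref{prop:avgcost}, whereas you supply the monotone-convergence-plus-uniqueness justification for $P_{ss}^\lambda\to\tilde P_{ss}$ that the paper leaves implicit.
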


Under the assumptions in Proposition~\ref{prop:gc}, one may 
consider the convex optimization problem
\begin{equation}\label{conv}
\min_{\lambda > \hat{\lambda}_\infty} [\lambda\theta^2 + \rho(\lambda)]
\end{equation}
to  find a reasonably tight upper-bound of the average-cost.
The values of $\hat{\lambda}_1$ in Lemma~\ref{lem:gc} and $\hat{\lambda}_2$ in Proposition~\ref{prop:gc} are required for computing $\hat{\lambda}_\infty$.
The first parameter $\hat{\lambda}_1$ can be obtained by examining the eigenvalues of $\Phi$ and $A+ \Phi^{1/2} R^{1/2}K$.
Specifically, reducing $\lambda$ from a sufficiently large value, one can compute the biggest value of $\lambda$ such that $\Phi \succeq 0$ and 
all eigenvalues of $A+ \Phi^{1/2} R^{1/2}K$  lie  inside the unit circle. 
The second parameter $\hat{\lambda}_2$ can be considered as the infinite-horizon extension of $\hat{\lambda}$ in Lemma~\ref{lem:lam}.
As in the finite-horizon case,  one can obtain $\hat{\lambda}_2$ via  binary search.
Note that the optimization problem~\eqref{conv} provides the least upper-bound over the range of $\lambda$ that  the existence of $\rho(\lambda)$ is guaranteed.
Nevertheless, when $\hat{\lambda}_\infty = \hat{\lambda}_2 \geq \hat{\lambda}_1$, \eqref{conv} provides the least bound for a nearly entire range, $\lambda \in [0, \hat{\lambda}_2)\cup (\hat{\lambda}_2, \infty)$, since the optimal value is $+\infty$ for $\lambda < \hat{\lambda}_2$.

\subsection{Out-of-Sample Performance Guarantee}

An advantage of using the Wasserstein metric in distributionally robust control is to attain a performance guarantee measured under a new sample, independent of $\hat{w} = (\hat{w}^{(1)}, \ldots, \hat{w}^{(N)})$ used in the controller design. Such an out-of-sample performance guarantee has been studied in the infinite-horizon discounted cost setting~\cite{Yang2020}.
In this section, we extend this result to the finite-horizon and the infinite-horizon average cost cases. Throughout this subsection, we fix $\lambda >0$.

\subsubsection{Finite-Horizon Case}
Let $(\pi^\star_{\hat{w}}, \gamma^\star_{\hat{w}})$ denote the optimal policy pair of the finite-horizon minimax control problem~\eqref{opt} with sample $\hat{w} = (\hat{w}^{(1)}, \ldots, \hat{w}^{(N)})$  and the penalty parameter $\lambda$.
The out-of-sample performance of $\pi^\star_{\hat{w}}$ is defined as
\begin{equation}
\frac{1}{T}\mathbb{E}_{w \sim \mu}^{\pi^\star_{\hat{w}}} \bigg [
\sum_{t=0}^{T-1} (x_t^\top Q x_t + u_t^\top R u_t) + x_T^\top Q_f x_T~\bigg\vert~x_0 = \bm{x}
\bigg],
\end{equation}
where $\mu := \mu_0 \times \cdots \times \mu_{T-1} \in \mathcal{P}(\mathbb{R}^{k\times T})$ represents the true but unknown distribution of $w$.
Note that it is intractable to explicitly evaluate the out-of-sample performance since $\mu$ is unknown. 
As an alternative, the following probabilistic guarantee can be considered:
\begin{equation}\label{eq:out}
\begin{split}
\mu^N \bigg\{  \hat{w} :  \frac{1}{T} &\mathbb{E}^{\pi^\star_{\hat{w}}}_{w \sim \mu} \big[C_T(x, u) \mid x_0 = \bm{x} \big]    \leq \lambda \theta^2 +  V (\bm{x}; \lambda) \quad \forall \bm{x} \in \mathbb{R}^n \bigg \} \geq 1-\beta,
\end{split}
\end{equation}
where 
\[
C_T(x, u) := x_T^\top Q_f x_T + \sum_{t=0}^{T-1} (x_t^\top Q x_t + u_t^T R u_t)
\]
denotes the total cost,
and
$\beta \in (0, 1)$ represents an acceptable error in probability.
Our goal is to identify a condition on the size of the Wasserstein ambiguity set for satisfying the probabilistic out-of-sample performance guarantee. 
To begin, we assume that $\mu$ has a light tail. 
\begin{assumption}\label{ass:tail}
There exist $p > 0$ and $q > 2$ satisfying
\begin{equation*}
   \int_{\mathbb{R}^k} \exp({p \|w\|^q}) \mathrm{d} \mu_t(w) < \infty  
\end{equation*}
for $t = 0, \ldots, T-1$.
\end{assumption} 
Under this assumption, the following measure concentration inequality holds for the Wasserstein metric \cite{Fournier2015}[Theorem 2].
\begin{lemma}\label{lem:out}
Suppose that Assumption~\ref{ass:tail} holds.
Let $\nu_{\hat{w}_t}$ denote the empirical distribution~\eqref{emp} obtained using sample $\hat{w}_t$. 
Then, for all $N \geq 1$, $\theta >0$ and $t = 0, \ldots, T-1$, we have
\begin{equation*}
\begin{split}
\mu_t^N \big \{ \hat{w}_t : W_2 &(\mu_t, \nu_{\hat{w}_t})^2 \geq \theta \big  \}  \leq c_1 [b_1(N, \theta) \mathbf{1}_{\{\theta \leq 1\}} + b_2(N, \theta) \mathbf{1}_{\{\theta >1\}}],
\end{split}
\end{equation*}
where
\begin{equation*}
    b_1(N, \theta) := \begin{cases}
\exp(-c_2 N \theta^2) & \text{if} \; \; k<4\\ 
\exp(-c_2 N (\frac{\theta}{log(2+1 / \theta)})^2) & \text{if} \; \; k=4\\
\exp(-c_2 N \theta^{k/2}) & \text{if} \; \; k>4,
\end{cases} 
\end{equation*}
and
\begin{equation*}
    b_2(N, \theta):= \exp(-c_2 N \theta^{q / 2}).
\end{equation*}
The positive constants $c_1$ and $c_2$ depend  only on $k$, $p$ and $q$.
\end{lemma}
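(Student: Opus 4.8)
The plan is to obtain the statement as a direct specialization of the measure concentration bound of Fournier and Guillin (\cite{Fournier2015}[Theorem 2]), which controls how fast the empirical measure of $N$ i.i.d.\ samples approaches its population measure in the $W_p$ distance. Their theorem asserts that whenever the target measure admits a finite exponential moment $\int_{\mathbb{R}^k} \exp(\gamma \|w\|^\alpha)\,\mathrm{d}\mu_t(w) < \infty$ for some $\gamma > 0$ and some exponent $\alpha > p$, there are constants depending only on the dimension $k$, the order $p$, and $(\gamma,\alpha)$ such that for all $N \geq 1$ and every radius $\epsilon > 0$ the tail probability is bounded by a constant multiple of $\exp(-c N \epsilon^2)$, of $\exp(-cN(\epsilon/\log(2+1/\epsilon))^2)$, or of $\exp(-cN\epsilon^{k/p})$ on the regime $\epsilon \leq 1$, according to whether $k < 2p$, $k = 2p$, or $k > 2p$, and by $\exp(-cN\epsilon^{\alpha/p})$ on the regime $\epsilon > 1$. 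The whole proof consists of instantiating this with the data of the present setting and reading off the four cases.

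First I would verify the hypothesis. The Wasserstein order is fixed at $p = 2$ throughout the paper, and the ambient dimension is $k$. Assumption~\ref{ass:tail} supplies exactly the exponential moment demanded by \cite{Fournier2015}: taking $\alpha := q$ and letting $\gamma$ equal the coefficient denoted $p$ in Assumption~\ref{ass:tail} (which is a coefficient, not the Wasserstein order, the latter being $2$), the integrability $\int \exp(p\|w\|^q)\,\mathrm{d}\mu_t < \infty$ is precisely the required condition, and the admissibility requirement $\alpha > p$ of the cited theorem reads $q > 2$, which is assumed. Since Assumption~\ref{ass:tail} is stated uniformly over the stages $t = 0, \ldots, T-1$, the bound holds stagewise for each $\mu_t$, which is the form claimed.

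It then remains to substitute $p = 2$ into the four regimes. The threshold $k = 2p$ becomes $k = 4$, so the three subcases of the $\epsilon \leq 1$ branch become $k < 4$, $k = 4$, and $k > 4$, producing the exponents $\epsilon^2$, $(\epsilon/\log(2+1/\epsilon))^2$, and $\epsilon^{k/2}$ respectively; on the $\epsilon > 1$ branch the exponent $\alpha/p = q/2$ appears. Evaluating the bound at radius $\theta$ and collecting the two regimes with their indicators $\mathbf{1}_{\{\theta \leq 1\}}$ and $\mathbf{1}_{\{\theta > 1\}}$ reproduces exactly $b_1(N,\theta)\mathbf{1}_{\{\theta \leq 1\}} + b_2(N,\theta)\mathbf{1}_{\{\theta > 1\}}$, while the overall prefactor furnishes $c_1$. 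Because the constants delivered by \cite{Fournier2015} depend only on $k$, $p$, and $(\gamma,\alpha)$, the resulting $c_1$ and $c_2$ depend only on $k$, $p$, and $q$, as asserted.

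Since the claim is an instance of an already established theorem, there is no genuine mathematical obstacle; the only care needed lies in the parameter dictionary, namely matching the coefficient/exponent pair of Assumption~\ref{ass:tail} to the $(\gamma,\alpha)$ of \cite{Fournier2015} without conflating the assumption's coefficient with the fixed Wasserstein order $2$, confirming the dimensional case split occurs at $k = 4 = 2p$, and identifying the radius parameter of the cited bound with $\theta$ so that the displayed functions $b_1$ and $b_2$ emerge verbatim.
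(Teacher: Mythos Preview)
Your proposal is correct and matches the paper's own treatment: the paper does not give a standalone proof of this lemma but simply cites \cite{Fournier2015}[Theorem~2], and your write-up spells out precisely the parameter instantiation (Wasserstein order $2$, dimensional threshold $k=4$, exponent pair $(q,p)$ from Assumption~\ref{ass:tail}) that turns that theorem into the stated bound. Your care in disambiguating the coefficient $p$ of Assumption~\ref{ass:tail} from the Wasserstein order is appropriate and is the only point where confusion could arise.
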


This lemma provides a sufficient condition for the probabilistic out-of-sample performance guarantee~\eqref{eq:out}.

\begin{theorem}\label{thm:out}
Suppose that Assumptions~\ref{ass:pen} and~\ref{ass:tail} hold. Let the radius $\theta$ be chosen as
\begin{equation*}
    \theta (N, \beta)  =
    \begin{cases}
    c^{1/q} & \text{if} \; \;c>1 \\
    c^{1/4} & \text{if} \; \;c \leq 1 \wedge k<4 \\
     c^{1/k} & \text{if} \; \;c \leq 1 \wedge k>4\\
    \bar{\theta} & \text{if} \; \;c \leq 1/(\log 3)^2 \wedge k=4,\\
    \end{cases}
\end{equation*}
where
\begin{equation*}
    c:= \frac{1}{N c_2} \log \bigg( \frac{c_1}{1-(1-\beta)^{1/T}} \bigg),
\end{equation*}
$c_1$ and $c_2$ are the positive constants defined in Lemma~\ref{lem:out}, and $\bar{\theta}$ satisfies
\begin{equation*}
    \frac{\bar{\theta}^2}{\log (2+ 1/\bar{\theta}^2)} = {c}^{1/2}.
\end{equation*}
Then, the probabilistic out-of-sample performance guarantee \eqref{eq:out} holds.
\end{theorem}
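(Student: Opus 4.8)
The plan is to reduce the probabilistic guarantee to a single deterministic event on which Theorem~\ref{thm:bd} applies verbatim, and then to lower-bound the probability of that event with the concentration inequality of Lemma~\ref{lem:out}. Introduce the ``good event''
\[
E := \bigl\{ \hat{w} : W_2(\mu_t, \nu_{\hat{w}_t}) \leq \theta \;\; \text{for all } t = 0, \ldots, T-1 \bigr\}.
\]
The first step is to establish the deterministic inclusion
\[
E \subseteq \Bigl\{ \hat{w} : \tfrac{1}{T}\,\mathbb{E}^{\pi^\star_{\hat{w}}}_{w \sim \mu}\bigl[ C_T(x,u) \mid x_0 = \bm{x}\bigr] \leq \lambda\theta^2 + V(\bm{x};\lambda)\;\; \forall \bm{x}\in\mathbb{R}^n \Bigr\}.
\]
On $E$ every true marginal obeys $\mu_t \in \mathcal{D}_t$, so the open-loop opponent policy $\gamma^{\mu}$ with $\gamma^{\mu}_t(x) \equiv \mu_t$ lies in $\Gamma_{\mathcal{D}}$. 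Since evaluating $\pi^\star_{\hat{w}}$ under the true product distribution $\mu$ equals $J_{\bm{x}}(\pi^\star_{\hat{w}}, \gamma^{\mu})$, I would apply Theorem~\ref{thm:bd} with $\pi^{\star,\lambda} = \pi^\star_{\hat{w}}$ to bound this by $\sup_{\gamma \in \Gamma_{\mathcal{D}}} J_{\bm{x}}(\pi^\star_{\hat{w}}, \gamma) \leq \lambda\theta^2 + V(\bm{x};\lambda)$ for every $\bm{x}$, which is exactly membership in the event of \eqref{eq:out}. It therefore suffices to show $\mu^N(E) \geq 1-\beta$.

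Because $\mu = \mu_0 \times \cdots \times \mu_{T-1}$ is a product measure, the stagewise sample blocks $\hat{w}_0, \ldots, \hat{w}_{T-1}$ are mutually independent, so
\[
\mu^N(E) = \prod_{t=0}^{T-1} \mu_t^N\bigl\{ W_2(\mu_t, \nu_{\hat{w}_t}) \leq \theta \bigr\}.
\]
Thus it is enough to force each factor to be at least $(1-\beta)^{1/T}$, equivalently to keep the per-stage failure probability $\mu_t^N\{ W_2(\mu_t,\nu_{\hat{w}_t})^2 \geq \theta^2 \}$ below $1-(1-\beta)^{1/T}$. Invoking Lemma~\ref{lem:out} with squared-distance threshold $\theta^2$ and writing its bound as $c_1\exp(-c_2 N f(\theta^2))$ for the relevant exponent $f$ (selected by the sign of $\theta^2-1$ and the value of $k$), this requirement rearranges into $f(\theta^2) \geq \tfrac{1}{N c_2}\log\bigl(c_1 /(1-(1-\beta)^{1/T})\bigr) = c$, with $c$ as defined in the statement.

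The final step is the regime-by-regime check that the prescribed $\theta(N,\beta)$ satisfies $f(\theta^2) \geq c$ while staying in the range for which the chosen branch of $f$ is valid. For $c>1$ the choice $\theta = c^{1/q}$ gives $\theta^2 > 1$, activating the $b_2$ branch $f(\theta^2) = \theta^q = c$; for $c \leq 1$ with $k<4$ (resp. $k>4$) the choice $\theta = c^{1/4}$ (resp. $c^{1/k}$) keeps $\theta^2 \leq 1$ and makes the $b_1$ branch $\theta^4$ (resp. $\theta^k$) equal $c$; and for $k=4$ the implicitly defined $\bar{\theta}$ satisfies $\bar{\theta}^2/\log(2+1/\bar{\theta}^2) = c^{1/2}$, where the hypothesis $c \leq 1/(\log 3)^2$ guarantees $\bar{\theta}^2 \leq 1$ so the $k=4$ branch is active and again attains $f(\theta^2)=c$. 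Substituting back makes every factor at least $(1-\beta)^{1/T}$, whence $\mu^N(E) \geq \bigl((1-\beta)^{1/T}\bigr)^T = 1-\beta$. I expect the main obstacle to be precisely this bookkeeping: one must verify not only that each formula solves $f(\theta^2)=c$ but also that the corresponding case hypothesis places $\theta^2$ on the correct side of $1$ (and, in the $k=4$ case, that the implicit equation is solvable with $\bar{\theta}^2 \leq 1$), since otherwise the wrong branch of the concentration bound would be invoked.
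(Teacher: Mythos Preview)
Your proposal is correct and follows essentially the same approach as the paper: reduce to the event that every true marginal $\mu_t$ lies in its Wasserstein ball, invoke Theorem~\ref{thm:bd} on that event, and then control its probability via the product structure and the concentration inequality of Lemma~\ref{lem:out}. Your write-up is in fact more thorough than the paper's, which simply asserts that the stated $\theta(N,\beta)$ solves $1-\beta = (1-c_1[b_1(N,\theta^2)\mathbf{1}_{\{\theta^2\le1\}}+b_2(N,\theta^2)\mathbf{1}_{\{\theta^2>1\}}])^T$ without the regime-by-regime verification you supply.
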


Under the more strict assumption that each $\mu_t$ is compactly supported rather than having a light tail, a simpler concentration inequality holds as proposed in~\cite{Boskos2020}[Proposition 3.2].
Define the diameter of a set $S \subset \mathbb{R}^{k}$ as $\mathrm{diam}(S):=\sup\{\lVert x-y\rVert_\infty \mid x, y \in S\}$.
Let $\mathrm{supp}(\mu)$ denote the smallest closed set that has measure $1$ with $\mu$.
Then,  using the same argument as that in the proof of Theorem~\ref{thm:out},
we can show that the following guarantee holds.

\begin{cor}\label{cor:comp}
Suppose that  Assumption~\ref{ass:pen} holds and $\mu_t$'s are compactly supported for all $t = 0, \ldots, T-1$ with $\zeta := \frac{1}{2} \max\{ \mathrm{diam}(\mathrm{supp}(\mu_t)) \mid t=0,\ldots,T-1\}$.
Let the radius $\theta$ be chosen as
\begin{equation*}
    \theta(N, \beta):=
    \begin{cases}
    c^{1/4}\zeta & \text{if} \;\;  k<4 \\
     c^{1/k}\zeta & \text{if} \;\; k>4\\
    \bar{\theta} & \text{if} \;\; k=4, \\
    \end{cases}
\end{equation*}
where
\begin{equation*}
    c:= \frac{1}{N c_2} \log \bigg( \frac{c_1}{1-(1-\beta)^{1/T}} \bigg),
\end{equation*}
$c_1$ and $c_2$ are the positive constants depend only on $k$, and $\bar{\theta}$ satisfies
\begin{equation*}
    \frac{\bar{\theta}^2}{\zeta^2 \log (2+ \zeta^2/\bar{\theta}^2)} = {c}^{1/2}.
\end{equation*}
Then, the probabilistic out-of-sample performance guarantee \eqref{eq:out} holds.
\end{cor}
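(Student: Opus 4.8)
The plan is to follow the proof of Theorem~\ref{thm:out} essentially verbatim, changing only the measure-concentration estimate. That argument rests on two logically independent ingredients: a deterministic event inclusion that converts the out-of-sample bound into a statement about the true distribution lying inside the Wasserstein balls, and a probabilistic estimate controlling the chance that the empirical distributions are close enough to the truth. Only the second ingredient differs here.

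First I would establish the event inclusion. Fix a sample realization $\hat w$ for which $W_2(\mu_t,\nu_{\hat w_t})\le\theta$ for every $t=0,\dots,T-1$, and call this event $\mathcal E$. On $\mathcal E$ the true distribution $\mu_t$ lies in the ambiguity ball $\mathcal D_t$ centred at $\nu_{\hat w_t}$, so the constant (state-independent) opponent policy $\gamma_\mu$ that plays $\mu_t$ at stage $t$ belongs to $\Gamma_{\mathcal D}$. Since $J_{\bm x}=J^0_{\bm x}$ is the unpenalized cost, the out-of-sample cost of $\pi^\star_{\hat w}$ under $\mu$ equals $J_{\bm x}(\pi^\star_{\hat w},\gamma_\mu)\le\sup_{\gamma\in\Gamma_{\mathcal D}}J_{\bm x}(\pi^\star_{\hat w},\gamma)$. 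Applying Theorem~\ref{thm:bd} with $\pi^{\star,\lambda}=\pi^\star_{\hat w}$ (the ambiguity set and the policy are both built from the same $\hat w$, so the hypotheses match) bounds this supremum by $\lambda\theta^2+V(\bm x;\lambda)$ uniformly in $\bm x$. Hence $\mathcal E$ is contained in the event appearing in~\eqref{eq:out}, and it suffices to prove $\mu^N(\mathcal E)\ge 1-\beta$.

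Second, because $\mu=\mu_0\times\cdots\times\mu_{T-1}$ and the samples are drawn independently across stages, $\mu^N(\mathcal E)=\prod_{t=0}^{T-1}\mu_t^N\{W_2(\mu_t,\nu_{\hat w_t})\le\theta\}$, so it is enough to force each factor to be at least $(1-\beta)^{1/T}$, i.e.\ to make each per-stage failure probability $\mu_t^N\{W_2(\mu_t,\nu_{\hat w_t})>\theta\}$ no larger than $1-(1-\beta)^{1/T}$. This is precisely where the present statement departs from Theorem~\ref{thm:out}: instead of the light-tail estimate of Lemma~\ref{lem:out}, I would invoke the compact-support bound of \cite{Boskos2020}[Proposition~3.2], whose rate scales with the normalized radius $\theta/\zeta$ and whose constants $c_1,c_2$ depend only on $k$.

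Finally I would solve for $\theta$ case by case. Writing the per-stage tail bound (after converting to the squared-distance convention via $W_2\le\theta\Leftrightarrow W_2^2\le\theta^2$) as $c_1\exp(-c_2 N(\theta/\zeta)^r)$ with $r=4$ when $k<4$ and $r=k$ when $k>4$, the requirement that it not exceed $1-(1-\beta)^{1/T}$ reduces, after taking logarithms, to $(\theta/\zeta)^r\ge c$ with $c=\tfrac{1}{Nc_2}\log\!\big(c_1/(1-(1-\beta)^{1/T})\big)$; choosing equality yields $\theta=c^{1/4}\zeta$ and $\theta=c^{1/k}\zeta$, respectively. The borderline dimension $k=4$ carries the extra factor $\log(2+\zeta^2/\theta^2)$, so the same computation gives the implicit relation $\bar\theta^2/\big(\zeta^2\log(2+\zeta^2/\bar\theta^2)\big)=c^{1/2}$ defining $\bar\theta$. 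I expect the only genuine obstacle to be bookkeeping: matching the exponents and the $\zeta$-scaling of \cite{Boskos2020}[Proposition~3.2] to the squared-distance convention of the ambiguity set, and verifying the case thresholds on $c$; the structural argument is identical to that of Theorem~\ref{thm:out}.
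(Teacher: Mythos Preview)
Your proposal is correct and follows exactly the approach the paper intends: the paper states that the result follows ``using the same argument as that in the proof of Theorem~\ref{thm:out}'' with the compact-support concentration inequality of \cite{Boskos2020}[Proposition~3.2] replacing Lemma~\ref{lem:out}, and that is precisely what you do. Your decomposition into the deterministic event inclusion (via Theorem~\ref{thm:bd}) and the per-stage concentration step, followed by solving for $\theta$ from the tail-bound equation, mirrors the paper's proof of Theorem~\ref{thm:out} and the subsequent remark introducing Corollary~\ref{cor:comp}.
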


A potential disadvantage of directly employing the radius suggested in Theorem~\ref{thm:out} or Corollary~\ref{cor:comp} is that the guaranteed upper-bound $\lambda \theta(N, \beta)^2 + V (\bm{x}; \lambda)$ grows with the number of stage $T$.
Specifically, $c$ increases logarithmically with $T$, since $1-(1-\beta)^{1/T} \approx \beta/T$ when $T$ is sufficiently large.
The stationarity assumption for $\mu_t$ can be used to alleviate this issue.
For instance, if we assume that $T' \in [1, T]$ stages have a stationary probability distribution and therefore use only one sample for $T'$ stages, then we can reduce $T$ to $T-T'+1$ in our formulation of $\theta(N, \beta)$.
Thus, in the case with Assumption~\ref{ass:st}, we can simply replace $T$ by $1$.

\subsubsection{Infinite-Horizon Case}

We now consider the infinite-horizon case with the average cost criteria \eqref{av_cost}.
Under  Assumptions~\ref{ass:pen}--\ref{ass:ob}, let $\pi^{\star}_{ss,\hat{w}}$ denote the optimal policy obtained in Corollary~\ref{cor:inf} with sample $\hat{w} = (\hat{w}^{(1)}, \ldots, \hat{w}^{(N)})$.
 We assume that the true distribution is stationary, i.e., $\mu_t \equiv \mu$ for all $t$. 
In this average cost setting,  our interest is to study the following probabilistic bound on the out-of-sample performance of $\pi^{\star}_{ss,\hat{w}}$:
\begin{equation}\label{eq:out2}
\begin{split}
\mu^N \bigg \{  \hat{w} : \limsup_{T \rightarrow \infty} \frac{1}{T} \mathbb{E}^{\pi^{\star}_{ss,\hat{w}}}_{w \sim \mu} \bigg[ \sum_{t=0}^{T-1} ( x_t^\top Q x_t 
+ u_t^\top R u_t ) \bigg\vert x_0 = \bm{x} \bigg]
    \leq  \lambda \theta^2 + \rho(\lambda) \; \forall \bm{x} \in \mathbb{R}^n \bigg \} \geq 1-\beta,
\end{split}
\end{equation}
where $\lambda \theta^2 + \rho(\lambda)$ is the upper-bound on the average cost in Theorem~\ref{thm:gc}.
\begin{cor}\label{cor:out}
Suppose that Assumptions~\ref{ass:pen}--\ref{ass:tail} hold. Let the radius $\theta$  be chosen as
\begin{equation*}
    \theta(N, \beta):=
    \begin{cases}
    c^{1/q} & \text{if} \; \;c>1 \\
    c^{1/4} & \text{if} \; \;c \leq 1 \wedge k<4 \\
     c^{1/k} & \text{if} \; \;c \leq 1 \wedge k>4\\
    \bar{\theta} & \text{if} \; \;c \leq 1/(\log 3)^2 \wedge k=4,\\
    \end{cases}
\end{equation*}
where
\begin{equation*}
    c:= \frac{1}{N c_2} \log \bigg( \frac{c_1}{\beta} \bigg),
\end{equation*}
$c_1$ and $c_2$ are the positive constants defined in Lemma~\ref{lem:out}, and $\bar{\theta}$ satisfies
\begin{equation*}
    \frac{\bar{\theta}^2}{\log (2+ 1/\bar{\theta}^2)} = c^{1/2}.
\end{equation*}
Then, the probabilistic out-of-sample performance guarantee \eqref{eq:out2} holds.
\end{cor}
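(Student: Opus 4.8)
The plan is to combine the infinite-horizon guaranteed-cost bound of Theorem~\ref{thm:gc} with the measure-concentration inequality of Lemma~\ref{lem:out}, following the finite-horizon argument of Theorem~\ref{thm:out} but exploiting stationarity to drop the union bound over stages. Let $\gamma_\mu \in \Gamma$ denote the stationary opponent policy producing the true distribution $\mu$ at every stage, so that the out-of-sample performance in \eqref{eq:out2} is exactly $J_{\bm{x},\infty}(\pi^{\star}_{ss,\hat{w}},\gamma_\mu)$, where $\pi^{\star}_{ss,\hat{w}}$ is the policy $\pi^{\star,\lambda}_{ss}$ of Theorem~\ref{thm:gc} built from the empirical measure $\nu_{\hat{w}}$. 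On the event $\{\hat{w} : W_2(\mu,\nu_{\hat{w}})\leq\theta\}$ we have $\mu\in\mathcal{D}$, hence $\gamma_\mu\in\Gamma_{\mathcal{D}}$. To invoke the average-cost bound \eqref{gc2}, I would verify that $\gamma_\mu$ lies in an admissible restricted space $\bar{\Gamma}_{\mathcal{D}}$, i.e. that $\limsup_{T\to\infty}\frac{1}{T}\mathbb{E}^{\pi^{\star}_{ss,\hat{w}},\gamma_\mu}[h^\lambda(x_T)\mid x_0=\bm{x}]=0$. Since $A+BK_{ss}$ is a stable matrix by Theorem~\ref{thm:stable2} and the light-tail Assumption~\ref{ass:tail} forces $\mathbb{E}_\mu[\lVert w\rVert^2]<\infty$, the closed-loop mean and second moment of $x_T$ remain bounded, so the quadratic bias $h^\lambda(x_T)=x_T^\top P_{ss}^\lambda x_T+2(r_{ss}^\lambda)^\top x_T$ satisfies $\frac{1}{T}\mathbb{E}[h^\lambda(x_T)]\to0$. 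Theorem~\ref{thm:gc} then yields, for every fixed such $\hat{w}$, the pointwise-in-$\hat{w}$ inclusion
\[
\{\hat{w} : W_2(\mu,\nu_{\hat{w}})\leq\theta\}\subseteq\Big\{\hat{w} : J_{\bm{x},\infty}(\pi^{\star}_{ss,\hat{w}},\gamma_\mu)\leq\lambda\theta^2+\rho(\lambda)\;\forall\bm{x}\Big\},
\]
so it suffices to show $\mu^N\{W_2(\mu,\nu_{\hat{w}})\leq\theta\}\geq1-\beta$ (here both sides share the same sample, so the fact that $\rho(\lambda)$ depends on $\hat{w}$ through its empirical moments causes no difficulty).

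Because the true distribution is stationary ($\mu_t\equiv\mu$) and a single empirical measure $\nu_{\hat{w}}$ is used, no union bound over stages is needed; I would apply Lemma~\ref{lem:out} once with squared threshold $s=\theta^2$ to bound $\mu^N\{W_2(\mu,\nu_{\hat{w}})^2\geq\theta^2\}$ by $c_1[b_1(N,\theta^2)\mathbf{1}_{\{\theta^2\leq1\}}+b_2(N,\theta^2)\mathbf{1}_{\{\theta^2>1\}}]$. Requiring this tail to be at most $\beta$ is equivalent, after taking logarithms, to the scalar condition $g(\theta)\geq c$ with $c=\frac{1}{Nc_2}\log(c_1/\beta)$, where the exponent function $g$ equals $\theta^q$ when $\theta^2>1$, and equals $\theta^4$, $\theta^k$, or $(\theta^2/\log(2+1/\theta^2))^2$ when $\theta^2\leq1$ according as $k<4$, $k>4$, or $k=4$. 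Solving $g(\theta)=c$ in each regime for the smallest feasible radius, and using the boundary relation $\theta^2\leq1\Leftrightarrow c\leq1$ (together with $c\leq1/(\log3)^2$ for the $k=4$ branch, which forces $\bar{\theta}^2\leq1$ since $s\mapsto s/\log(2+1/s)$ is increasing) to select the active case, reproduces exactly the stated radii $c^{1/q}$, $c^{1/4}$, $c^{1/k}$, and $\bar{\theta}$. This calculation is identical to that in Theorem~\ref{thm:out}, with $1-(1-\beta)^{1/T}$ replaced by $\beta$. With $\theta=\theta(N,\beta)$ so chosen, $\mu^N\{W_2(\mu,\nu_{\hat{w}})\leq\theta\}\geq1-\beta$, which combined with the inclusion above gives \eqref{eq:out2}.

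I expect the only genuinely non-routine step to be the verification that $\gamma_\mu$ belongs to an admissible $\bar{\Gamma}_{\mathcal{D}}$, i.e. the vanishing terminal-bias condition, since this is where the infinite-horizon stability theory (Theorem~\ref{thm:stable2}) must be combined with the moment bound extracted from Assumption~\ref{ass:tail} to control $\frac{1}{T}\mathbb{E}[h^\lambda(x_T)]$. Everything else is either a direct appeal to Theorem~\ref{thm:gc} and Lemma~\ref{lem:out} or the same scalar radius calculus as in Theorem~\ref{thm:out}; once the bias condition is in place, the concentration bound and radius selection proceed mechanically.
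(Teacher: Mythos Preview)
Your proposal is correct and follows essentially the same approach as the paper: the paper's argument reduces to observing that if $\mu\in\mathcal{D}$ then the stationary policy $\gamma_t\equiv\mu$ lies in $\bar{\Gamma}_{\mathcal{D}}$ so that \eqref{gc2} applies, and then invokes the concentration argument of Theorem~\ref{thm:out} with $T=1$ (stationarity removing the union bound). You actually supply more detail than the paper on the one nontrivial point---verifying the vanishing terminal-bias condition via the stability of $A+BK_{ss}$ (Theorem~\ref{thm:stable2}) together with the second-moment bound implied by Assumption~\ref{ass:tail}---which the paper asserts without justification.
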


If $\mu$ is contained in $\mathcal{D}$, then the policy $\gamma_t \equiv \mu$ must be contained in $\bar{\Gamma}_{\mathcal{D}}$, implying that the guaranteed-cost property~\eqref{gc2} holds.
Since the rest of proof is similar to that for Theorem~\ref{thm:out}, we have omitted the proof. 
Note that the radius in Corollary~\ref{cor:out} can be obtained by letting $T=1$ in Theorem~\ref{thm:out} due to the  stationarity assumption of $\mu$ and $\nu$.
The case of compactly supported distributions can be considered similarly to Corollary~\ref{cor:comp}.

\section{Relations to $H_\infty$-Optimal Control}\label{sec:Hinfinity}
In this section, we discuss relations between our minimax control method and the $H_\infty$-method.
For comparison,
we consider the classical dynamic game formulation for minimizing the $H_\infty$-norm of the cost function with respect to the disturbance~(e.g., \cite{Basar2008}). 

\subsection{Finite-Horizon Case}
We first examine the finite-horizon case with the initial condition $x_0 = 0$. 
For $H_\infty$-control, we consider a modified dynamic game problem, where the opponent's policy $\tilde{\gamma}_t$ now maps the current state $x_t$ to disturbance vector $w_t$ rather than its distribution~\cite{Basar2008}. Note that the disturbance vector is no longer random in the $H_\infty$-setting. 
The set of admissible opponent's policies is accordingly modified and is denoted by~$\tilde{\Gamma}$.
Consider the following quadratic cost function:
\[
\tilde{J} (\pi, \tilde{\gamma}) := \mathbb{E}^{\pi, \tilde{\gamma}} \bigg[ \sum_{t=0}^{T-1} ( x_t^\top Q x_t + u_t^\top R u_t ) + x_T^\top Q_f x_T~\bigg\vert~x_0 = 0 \bigg].
\] 
Given a control policy $\pi$,
we seek to find the infimum of $\lambda > 0$ such that 
\begin{equation*}
   \sup_{\tilde{\gamma} \in \tilde{\Gamma} : \lVert w \rVert \leq 1} \tilde{J} (\pi, \tilde{\gamma}) = \sup_{\tilde{\gamma} \in \tilde{\Gamma} } \frac{\tilde{J}(\pi, \tilde{\gamma})}{\| w \|^2 } \leq \lambda,
\end{equation*} 
where $\| w \|^2 := \sum_{t=0}^{T-1} \lVert w_t \rVert^2$. 
The first equality holds since $\tilde{J} (\pi, \tilde{\gamma})$ is homogeneous with respect to $\| w\|^2$ when $x_0 = 0$. 
Note also that ${\tilde{J}(\pi, \tilde{\gamma})}/{\sum_{t=0}^{T-1} \lVert w_t \rVert^2} \leq \lambda$ for all $\tilde{\gamma} \in \tilde{\Gamma}$ if and only if $\tilde{J}(\pi, \tilde{\gamma}) - \lambda \sum_{t=0}^{T-1} \lVert w_t \rVert^2 \leq 0$ for all $\tilde{\gamma} \in \tilde{\Gamma}$.
Thus, the inequality above can be rewritten as
\[
 \sup_{\tilde{\gamma} \in \tilde{\Gamma} } \bigg [
 \tilde{J}(\pi, \tilde{\gamma}) - \lambda \sum_{t=0}^{T-1} \| w_t \|^2
 \bigg ]\leq 0.
\]
This motivates us to consider the following augmented cost function with an additional disturbance-norm term:
\begin{equation*}
\begin{split}
\tilde{J}^{\lambda} (\pi, \tilde{\gamma}) := \mathbb{E}^{\pi, \tilde{\gamma}} \bigg[ \sum_{t=0}^{T-1} 
( x_t^\top Q x_t + u_t^\top R u_t - \lambda \| w_t \|^2 )
 + x_T^\top Q_f x_T~\bigg\vert~x_0 = 0 \bigg],
\end{split}
\end{equation*}
as well as the following minimax control problem:
\[
\tilde{J}^{\lambda, \star} := \inf_{\pi \in \Pi} \sup_{\tilde{\gamma} \in \tilde{\Gamma}} \tilde{J}^{\lambda} (\pi, \tilde{\gamma}).
\]
Let $\Lambda:= \{ \lambda \mid \tilde{J}^{\lambda, \star} \leq 0\}$.
The desired $\lambda^\star$ can be obtained as 
$\lambda^\star := \inf \{ \lambda \mid \lambda \in \Lambda \}$.
More details about the dynamic game formulation of $H_\infty$-control can be found in~\cite[Section 1.4]{Basar2008}.
Let $\tilde{V}_t:\mathbb{R}^n \to \mathbb{R}$ denote the value function of this problem. The dynamic programming principle gives the following Bellman equation:
\begin{equation}\nonumber
\begin{split}
&\tilde{V}_t (\bm{x}) = \bm{x}^\top Q \bm{x} + \inf_{\bm{u} \in \mathbb{R}^m} \bigg [
\bm{u}^\top R \bm{u} + \sup_{\bm{w} \in \mathbb{R}^k} \{
\tilde{V}_{t+1} (A\bm{x} + B\bm{u} + \Xi \bm{w}) - \lambda \| \bm{w} \|^2
\}
\bigg ]
\end{split}
\end{equation}
with $\tilde{V}_T (\bm{x}) := \bm{x}^\top Q_f \bm{x}$.
If we parameterize $\tilde{V}_t (\bm{x}) = \bm{x}^\top P_t \bm{x}$ under Assumption~\ref{ass:pen}, the Riccati equation and the control gain $K_t$ are obtained as those in Section~\ref{sec:finite}.
Note that there are no  $r_t$, $z_t$, and $L_t$ terms in the $H_\infty$-control. 
The worst-case disturbance policy is then given by
\[
\tilde{\gamma}_t^\star (\bm{x}) = (\lambda I - \Xi^\top P_{t+1} \Xi)^{-1} \Xi^\top P_{t+1} (A+ BK_t) \bm{x}.
\]
Since $x_0 = 0$, we deduce that
\[
\tilde{J}^{\lambda, \star} = \tilde{V}_0 (0) = 0^\top P_0 0 = 0.
\]
In this case, any $\lambda$ satisfying Assumption~\ref{ass:pen} must be contained in $\Lambda$. 
However, if $\lambda$ does not satisfy Assumption~\ref{ass:pen}, then the cost value would be $+\infty$, and thus $\lambda$ cannot belong to $\Lambda$.
Thus, we conclude that $\lambda^\star$ is the infimum of $\lambda$ that satisfies Assumption~\ref{ass:pen}. See \cite{Basar2008}[Section 3.3] for further details on the optimal disturbance attenuation level $\lambda^\star$ of $H_\infty$-control.

The worst-case disturbance $\tilde{\gamma}_t^\star (\bm{x})$ in the $H_\infty$-method is related with the support elements  $w_t^{\star, (i)} (\bm{x})$ of the worst-case distribution in Corollary~\ref{cor:dist} in our method as follows:
\begin{equation*}
\begin{split}
w_t^{\star, (i)} (\bm{x}) = \tilde{\gamma}_t^\star (\bm{x})  &+ (\lambda I - \Xi^\top P_{t+1} \Xi)^{-1}  (\Xi^\top P_{t+1} BL_t + \Xi^\top r_{t+1} + \lambda \hat{w}^{(i)}_t ).
\end{split}
\end{equation*}
This indicates that each support element of the worst-case distribution in Corollary~\ref{cor:dist} can be considered to be shifted from $\tilde{\gamma}_t^\star (\bm{x})$ by the scaled terms generated from the sample data $\hat{w}^{(i)}_t$ and  $L_t$, $r_{t+1}$.\footnote{If the sample mean is  zero, i.e., $\sum_{i=1}^N \hat{w}^{(i)}_t =0$, for all stages, then $\tilde{\gamma}_t^\star (\bm{x})$ corresponds to the mean value of the worst-case distribution.}
Thus, our minimax control method with Wasserstein distance can be understood as a distributional generalization of the $H_\infty$-method.

\subsection{Infinite-Horizon Case}

In the infinite-horizon case, the corresponding $H_\infty$-control can be obtained using a limiting solution of the Riccati equation. 
This yields the same ARE as \eqref{are} for our minimax control methods~\cite[Section 3.4]{Basar2008}. 
Under Assumptions~\ref{ass:pen}--\ref{ass:ob}, the ARE has a symmetric PSD solution $P_{ss}$ from which we obtain the same control gain  $K_{ss}$.
Regarding the worst-case disturbance, we have 
\[
\tilde{\gamma}^\star (\bm{x}) = (\lambda I - \Xi^\top P_{ss} \Xi)^{-1} \Xi^\top P_{ss} (A+ BK_{ss}) \bm{x}.
\]
Thus, 
 the worst-case disturbance  in the $H_\infty$ method is related to our  worst-case distribution $\gamma^\star_{ss} (\bm{x}) := \frac{1}{N} \sum_{i=1}^N \delta_{w^{\star, (i)} (\bm{x})}$ through  
\begin{equation*}
\begin{split}
w^{\star, (i)} (\bm{x}) = \tilde{\gamma}^\star (\bm{x})  &+  (\lambda I - \Xi^\top P_{ss} \Xi)^{-1}  (\Xi^\top P_{ss} BL_{ss}  + \Xi^\top r_{ss} + \lambda \hat{w}^{(i)}).
\end{split}
\end{equation*}

The relationship between our method and the $H_\infty$-method enables us to analyze our controller using the classical robust stability results.
Consider a dynamical system of the form
\begin{equation*}
\begin{split}
&x_{t+1} = Ax_t + Bu_t + \Xi w_t, \\
&z_t = \begin{bmatrix} {Q}^{1/2}\; \\ 0 \end{bmatrix} x_t + \begin{bmatrix} 0 \\ {R}^{1/2}\; \end{bmatrix} u_t, \\
\end{split}
\end{equation*}
where $z_t$ is the  error output and $w_t$ is the disturbance input.
Let $T_\pi$ denote the closed loop transfer function from input $w$ to output $z$ under control policy $\pi$.
As mentioned above, our minimax controller is equivalent to the $H_\infty$-controller when $x_0 = 0$ and the empirical distribution has zero mean, i.e., $\bar{w} = 0$. 
This equivalence yields the following robust stability property of our controller (see e.g., \cite{Basar2008} and \cite{Glover1988} for further details about robust stability).
\begin{proposition}
Suppose that Assumptions~\ref{ass:pen}--\ref{ass:ob} hold, $x_0 = 0$,  and $\bar{w}=0$.
Then, the optimal minimax control policy $\pi^\star_{ss}$ in Corollary~\ref{cor:inf} satisfies the following robust stability property:
\[
\| T_{\pi^\star_{ss}} \|_\infty := \sup_{w \in \mathbb{R}} \bar{\sigma} (T_{\pi^\star_{ss}}(jw)) \leq \lambda,
\]
where $\bar{\sigma} (T(jw))$ denotes the largest singular value of $T(jw)$.
\end{proposition}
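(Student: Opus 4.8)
The plan is to reduce the claim to the classical $H_\infty$ robust-stability result by showing that, under the stated hypotheses, the minimax controller $\pi^\star_{ss}$ collapses to exactly the $H_\infty$ state-feedback law, and then invoking the time-domain ($\ell_2$-gain) characterization of the $H_\infty$ norm. First I would observe that when $\bar{w} = 0$ the affine part of the steady-state policy vanishes: by Proposition~\ref{prop:r} the vector $r_{ss}$ in~\eqref{eq:oss} is proportional to $\Xi \bar{w}$, hence $r_{ss} = 0$, and this in turn forces $L_{ss} = -R^{-1} B^\top (I + P_{ss}\Phi)^{-1}(P_{ss}\Xi\bar{w} + r_{ss}) = 0$ in Corollary~\ref{cor:inf}. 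Consequently $\pi^\star_{ss}(\bm{x}) = K_{ss}\bm{x}$ is a pure linear state feedback whose gain $K_{ss}$ coincides with the infinite-horizon $H_\infty$ gain obtained from the same ARE~\eqref{are}, as already noted in Section~\ref{sec:Hinfinity}.

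Next I would establish the time-domain $\ell_2$-gain inequality for this feedback. With $x_0 = 0$, the finite-horizon $H_\infty$ analysis gives $\tilde{J}^{\lambda, \star} = \tilde{V}_0(0) = 0$, and because the chosen $\lambda$ satisfies Assumption~\ref{ass:pen}, this identity persists as $T \to \infty$ through the convergence $P_t \to P_{ss}$ (Proposition~\ref{prop:are}) together with the internal stability of the closed loop. Concretely, for the closed-loop system $x_{t+1} = (A + BK_{ss})x_t + \Xi w_t$ driven from $x_0 = 0$ by any square-summable disturbance, one obtains
\[
\sum_{t=0}^{\infty} \big( x_t^\top Q x_t + u_t^\top R u_t \big) - \lambda \sum_{t=0}^{\infty} \lVert w_t \rVert^2 \;\leq\; 0.
\]
Since the error output satisfies $\lVert z_t \rVert^2 = x_t^\top Q x_t + u_t^\top R u_t$ under $u_t = K_{ss} x_t$, this is precisely the induced-gain bound $\lVert z \rVert_{\ell_2}^2 \leq \lambda \lVert w \rVert_{\ell_2}^2$.

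Finally I would transfer this bound to the frequency domain. By Theorem~\ref{thm:stable2} the closed-loop matrix $A + BK_{ss}$ is stable, so $T_{\pi^\star_{ss}}$ is analytic and bounded on the unit circle, and its $\ell_2 \to \ell_2$ induced gain equals $\lVert T_{\pi^\star_{ss}} \rVert_\infty$ by Parseval's identity (equivalently, the discrete-time bounded-real lemma). The $\ell_2$-gain bound above thus yields the claimed $H_\infty$ norm bound, with $\lambda$ in the role of the (squared) disturbance-attenuation level, and the proof closes by the standard $H_\infty$ robust-stability characterization in~\cite{Basar2008, Glover1988}.

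I expect the main obstacle to be the limiting step that lifts the finite-horizon identity $\tilde{J}^{\lambda, \star} = 0$ to an infinite-horizon $\ell_2$-gain statement holding uniformly over all square-summable disturbances. This requires controlling the tail of the state trajectory (using stability of $A + BK_{ss}$ to guarantee $x_t \to 0$ and square-summability of the running cost) and justifying the interchange of the supremum over disturbance sequences with the limit $T \to \infty$; the convergence of the Riccati recursion and the internal stability established earlier are exactly what make this interchange legitimate. Once the equivalence with the $H_\infty$ controller and the $\ell_2$-gain bound are in place, the remaining frequency-domain step is routine.
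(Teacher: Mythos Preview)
Your approach is essentially the same as the paper's: the paper does not give a standalone proof of this proposition but rather presents it as an immediate corollary of the equivalence established in Section~\ref{sec:Hinfinity}---namely, that when $\bar{w}=0$ (hence $r_{ss}=0$ and $L_{ss}=0$) and $x_0=0$, the minimax controller $\pi^\star_{ss}$ coincides with the classical $H_\infty$ state-feedback law built from the same ARE---and then defers the robust-stability conclusion to the standard references~\cite{Basar2008, Glover1988}. Your write-up makes that reduction explicit (vanishing of $L_{ss}$, the $\ell_2$-gain inequality, and the Parseval/bounded-real passage to the frequency domain), which is exactly the content behind the paper's one-line citation; the only subtlety worth tightening is your remark that $\lambda$ plays the role of the \emph{squared} attenuation level, which, read literally, would yield $\|T_{\pi^\star_{ss}}\|_\infty \le \sqrt{\lambda}$ rather than the stated bound---this is a normalization issue already present in the paper's formulation, not a flaw in your argument.
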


Conversely, our stochastic interpretation of the classical $H_\infty$-method enables us to analyze the $H_\infty$-controller 
 from the distributional perspective,  particularly when the sample of $w_t$ is available. 
For instance, in such data-driven scenarios, one can obtain the probabilistic performance guarantee of the $H_\infty$-controller using the out-of-sample performance result in the previous section. 
To be more precise, for the $H_\infty$-controller with a fixed $\lambda$,
its out-of-sample performance satisfies the probabilistic bound~\eqref{eq:out2}.

\section{Numerical Experiments}
In this section, the performance of our minimax control method is demonstrated through  a power system frequency regulation problem.
Stability is an important  issue in power transmission systems, as the penetration of variable renewable energy sources 
and the potential of data integrity attacks
 increase.
We apply the minimax control method on the IEEE 39 bus system, which models the New England power grid and has been frequently used to evaluate frequency control methods (e.g. \cite{Dorfler2014, Dizche2019}).
This model consists of 39 buses, 46 lines, and 10 generators.
We use a classical generator model without an excitation system, such as a power system stabilizer  and an automatic voltage regulator, for simplicity.

Let $\delta_i$ and  $\omega_i$ denote the rotor angle and the frequency of the $i$th generator. Then, $\dot{\delta_i} = \omega_i - \omega_s$, where $\omega_s$ is a constant synchronous speed. 
The electromechanical swing equation for  the $i$th generator is given by the following damped oscillator:
\begin{equation*}
    \frac{2H_i}{\omega_s} \dot{\omega_i} =  P_{i} - d_i \omega_i - \sum_{j \neq i} \vert Y_{ij} \vert E_i E_j \sin(\delta_i - \delta_j),
\end{equation*}
where
$H_i$, $P_i$, $d_i$, and $E_i$ denote the inertia, the power injection, the damping coefficient, and the voltage of the $i$th generators, and
$Y$ denotes the admittance matrix of the power network.
Linearizing the swing equations at an operating point $(\delta^*, \omega^*)$ yields
\begin{equation*}
    M \Delta \ddot{\delta} + D \Delta\dot{\delta} + L \Delta\delta = \Delta P,
\end{equation*}
where
$M := (2/\omega_s)\mathrm{diag}(H)$, $D := \mathrm{diag}(d)$, and the Kron-reduced Laplacian matrix $L$ is defined by $L_{ij} := - \vert Y_{ij} \vert E_i E_j \cos(\delta^*_i-\delta^*_j)$ for $i \neq j$ and $L_{ii} := -\sum_{j \neq i} L_{ij}$.
The second-order ordinary differential equation can be expressed in the following state-space form:
\begin{equation*}
    \begin{bmatrix}\Delta \dot{\delta} \\\Delta \dot{\omega} \end{bmatrix} = \underbrace{\begin{bmatrix} 0 & I \\ -M^{-1} L & -M^{-1} D \end{bmatrix}}_{=: A} \begin{bmatrix} \Delta \delta \\ \Delta \omega \end{bmatrix}+ \underbrace{\begin{bmatrix} 0 \\ M^{-1} \end{bmatrix}}_{=:B}\Delta P,
\end{equation*}
with system state $x(t):=(\Delta\delta^\top(t), \Delta\omega^\top(t))^\top \in \mathbb{R}^{20}$ and control input $u(t):=\Delta P(t) \in \mathbb{R}^{10}$.

\begin{figure}[t!]
\centering
\includegraphics[scale=0.8]{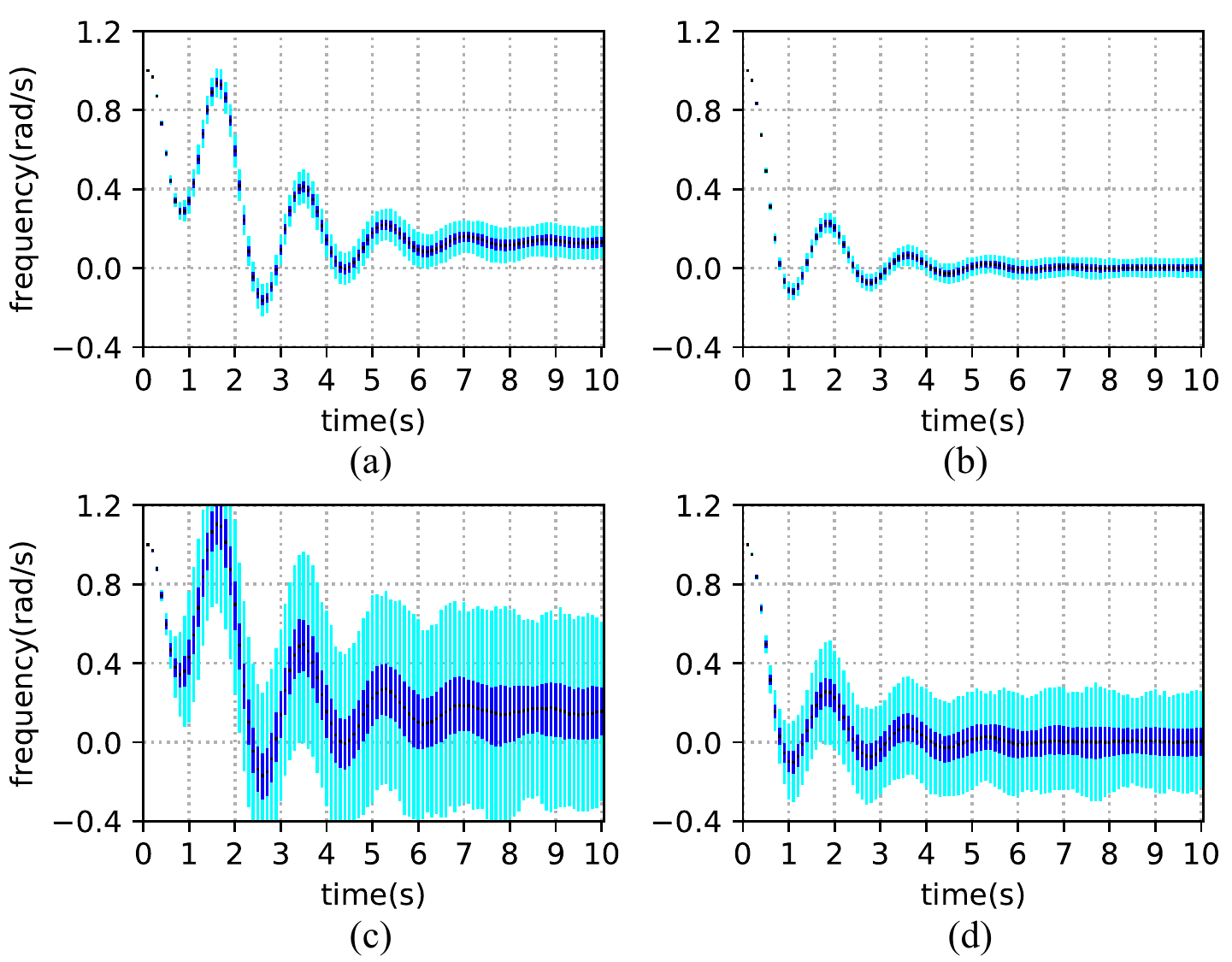}
\caption{
Box plots (1,000 test cases) of $\Delta \omega_{10}$, controlled by
(a) the standard LQG method under the worst-case distribution generated with $\theta = 0.5$,
(b) our minimax method under the worst-case distribution generated with $\theta = 0.5$,
(c) the standard LQG method under the worst-case distribution generated with $\theta = 1$,
and
(d) our minimax method under the worst-case distribution generated with $\theta = 1$.}
\label{fig:result}
\vspace{-0.1in}
\end{figure}

To model uncertainty in power injection or net demand,
a disturbance $w(t)$ is assumed to be added to the input $u(t)$. 
Then, $\Xi = B$.
For the quadratic cost function, we set $x^\top Q x= \frac{1}{2} \Delta \delta^\top (I_{10}-\frac{1}{10}\mathbf{1}_{10} \mathbf{1}_{10}^\top) \Delta \delta + \frac{1}{2} \Delta \omega^\top \Delta \omega$ and $R=I_{10}$,
where $I_{10}$ denotes the 10 by 10  identity matrix and $\mathbf{1}_{10}$ denotes the 10 dimensional vector of all ones. 
The system is discretized by a zero-order hold method with sample time $0.1$ seconds.
Suppose that the initial value of rotor speed $\Delta \omega_{10}$ is perturbed by $1$, $10$ samples of disturbances are generated according to the normal distribution $\mathcal{N}(0.02, 0.1^2 I)$,
 and the worst-case distribution in Corollary~\ref{cor:dist} is applied to the system in the finite-horizon setting with the number of stages $T=150$.\footnote{All the simulation codes and data   can be downloaded at https://github.com/hahakhkim/WassersteinLQ.}

\begin{table}
\caption{The settling time (in seconds) required for each generator to maintain the mean frequency less than $0.03$.}
\label{table:time}
\begin{center}
\begin{tabular}{p{4.5em}| *{10}{p{1.75em}}}
\hline
Gen \# & 1 & 2 & 3 & 4 & 5 & 6 & 7 & 8 & 9 & 10 \\
\hline
LQG & 4.8 & 4.9 & 4.7 & 4.9 & 5.1 & 4.2 & 4.4 & 4.1 & 3.9 & 7.1\\
Minimax & 1.9 & 3.2 & 3.2 & 2.9 & 3.1 & 2.7 & 2.7 & 3.7 & 3.4 & 4.5\\
\hline
\end{tabular}
\end{center}
\end{table}

\begin{figure*}[t!]
\centering
\includegraphics[scale=0.7]{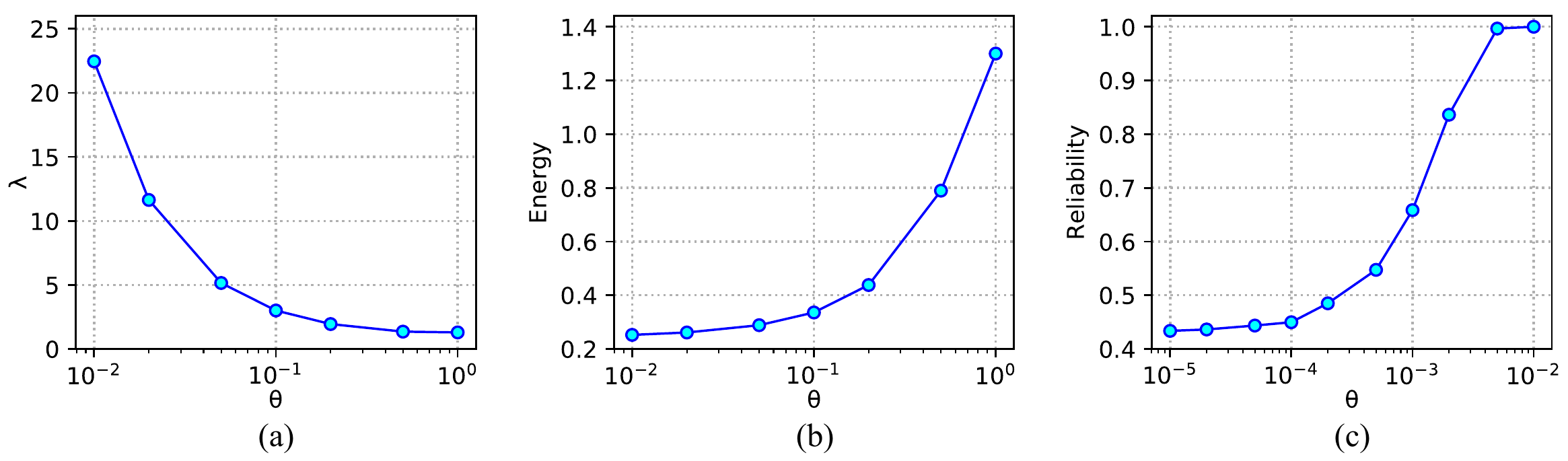}
\caption{(a) Optimal value of $\lambda$ depending on $\theta$,
(b) average control energy depending on $\theta$, and
(c) reliability depending on $\theta$.
}
\label{fig:result2}
\vspace{-0.1in}
\end{figure*}

Fig.~\ref{fig:result} shows the box plot of  frequency $\Delta \omega_{10}$, controlled by the standard LQG and the proposed minimax control methods.
The finite-horizon optimal policy \eqref{opt_policy}  is used, where the optimal penalty coefficient $\lambda^\star$ is obtained using Proposition~\ref{prop:lam} with the ambiguity set radius $\theta=0.5, 1.0$.
The  results demonstrate that our method significantly reduces the fluctuation of the frequency compared to the standard LQG method.
Additionally, the proposed control policy successfully drives the expected value of the system state to zero, while the standard LQG fails to do so.
The results also show that the size of the Wasserstein ambiguity set  or equivalently the value of $\theta$ plays an important role in the performance of our method.
As $\theta$ increases, the resulting policy with the penalty $\lambda^\star$ must guarantee the upper-bound of the cost function for a larger size of the ambiguity set.
Therefore, it is robust against a wider range of distributions, and the worst-case distribution is selected as a more extreme one.
As $\theta$ decreases, the worst-case distribution converges to the empirical distribution, and thus the robustness advantage of our policy over the standard LQG diminishes.
The settling time required for each generator to maintain the mean frequency less than $3\%$ of the initial deviation is shown in Table~\ref{table:time}, when the worst-case distribution with $\theta=0.5$ is applied to the system.
It takes $4.8$ seconds on average when using the standard LQG method, while the proposed minimax method requires $3.1$ seconds on average.

Fig.~\ref{fig:result2} (a) displays the optimal penalty parameter $\lambda^\star$ obtained using Proposition~\ref{prop:lam}, depending on the radius $\theta$.
The value of $\lambda^\star$ decreases as $\theta$ increases and eventually converges to the infimum of $\lambda^\star$ satisfying Assumption~\ref{ass:pen}.
This observation is  consistent with our intuition that the distributional robustness of the control policy can be tuned using the penalty parameter instead of $\theta$.

Fig.~\ref{fig:result2} (b) shows the average control energy required for our method depending on the value of $\theta$.
The control energy is measured for the first 5 seconds, i.e., $\sum_{t=0}^{49}\lVert u_t \rVert^2 / 50$, and  is averaged over 1,000 test cases. 
As shown in  Fig.~\ref{fig:result2}~(b), the required energy increases as $\theta$ increases.
If $\theta$ decreases, the required energy declines and eventually converges to the energy required for the standard LQG method.
This implies that a tradeoff between robustness and control energy exists in our method. 
Therefore, the value of $\theta$ should be properly selected based on the reliability of available data to balance robustness and control energy.

To test the out-of-sample performance of our control policy, 
 the \emph{reliability} 
$\mu^N \{  \hat{w} :  \frac{1}{T} \mathbb{E}^{\pi^\star_{\hat{w}}}_{w \sim \mu}$ $[\sum_{t=0}^{T-1}(x_t^\top Q x_t + u_t^\top R u_t) + x_T^\top Q_f x_T \mid x_0 = \bm{x} ] \leq \lambda \theta^2 + V (\bm{x}; \lambda)  \; \forall \bm{x} \in \mathbb{R}^n  \} $ is computed using 10,000 simulations with sample size $N = 10$.
 As shown in Fig.~\ref{fig:result2} (c), the reliability increases with $\theta$ as expected. 
 More specifically,  the reliability sharply increases in $[10^{-4}, 3\times 10^{-3}]$ and then saturates as $\theta$ increases further.
Given that the control energy also increases with $\theta$, 
it may be reasonable to choose $\theta \approx 3\times 10^{-3}$ in this problem to attain a sufficiently robust policy, which is not overly conservative.

\section{Conclusions}

We have presented a minimax LQ control method  with a Wasserstein penalty to address the issue of ambiguity inherent in practical stochastic systems.
Our method has several salient features including $(i)$ a closed-form expression of the optimal policy pair, $(ii)$  the convergence of a Riccati equation to the unique symmetric PSD solution to the corresponding ARE, $(iii)$ closed-loop stability, $(iv)$ distributional robustness, and $(v)$ an out-of-sample performance guarantee. 
The relation to the $H_\infty$-method indicates that our method may open an exciting avenue for future research that connects stochastic and robust control from the perspective of distributional robustness.
Moreover, it remains as future work to address partial observability and extensions to continuous-time settings.

\appendix
\section{Proofs}
\subsection{Proof of Lemma~\ref{lem:sol}}\label{app:lem:sol}

\begin{proof}
The function $w \mapsto V_{t+1}(A \bm{x} + B \bm{u} + \Xi w)-\lambda \lVert \hat{w}^{(i)}_t - w \rVert^2$ is strictly concave quadratic  under the assumption $\lambda I - \Xi^\top P_{t+1} \Xi \succ 0$.
Differentiating it with respect to $w$, we obtain the following optimality condition:
\begin{equation}\label{eq1}
w_t^{\star, (i)} = \frac{1}{2\lambda} \Xi^\top V_{t+1}' (A \bm{x} + B\bm{u} + \Xi w_t^{\star, (i)}  )+ \hat{w}_t^{(i)},
\end{equation}
which directly yields \eqref{w_opt}.
To solve the outer minimization problem in \eqref{eq:backward},
we first differentiate the outer objective function with respect to $\bm{u}$ to obtain that
\begin{equation} \nonumber
\begin{split}
&2R \bm{u} + \frac{1}{N} \sum_{i=1}^N \bigg [
\bigg (B + \Xi \frac{\partial w_t^{\star, (i)}}{\partial \bm{u}}  \bigg )^\top  V_{t+1}' (A\bm{x} + B\bm{u} + \Xi w_t^{\star, (i)} )
+ 2\lambda \frac{\partial w_t^{\star, (i)}}{\partial \bm{u}}^\top (\hat{w}_t^{(i)} - w_t^{\star, (i)})
\bigg ]\\
&= 2R \bm{u} + 2 B^\top g_t (\bm{u}),
\end{split}
\end{equation}
where 
\begin{equation} \label{g_eq}
\begin{split}
g_t (\bm{u}) &:= \frac{1}{2N} \sum_{i=1}^N V_{t+1}' (A\bm{x} + B\bm{u} + \Xi w_t^{\star, (i)}(\bm{u}) )\\
&= P_{t+1} \bigg ( 
A \bm{x} + B\bm{u} + \frac{1}{N} \sum_{i=1}^N \Xi w_t^{\star, (i)}(\bm{u}) \bigg ) + r_{t+1}.
\end{split}
\end{equation}
The Hessian of the outer objective function with respect to $\bm{u}$ is then given by
\[
2 \big [ 
R + B^\top P_{t+1} B + B^\top P_{t+1} \Xi (\lambda I - \Xi^\top P_{t+1} \Xi)^{-1} \Xi^\top P_{t+1} B
\big ],
\]
which is positive definite under the assumption on the penalty parameter. 
Thus, the outer objective function is strictly convex, and it has a unique minimizer, $\bm{u}^\star$. 
Equating the derivative to zero yields
\begin{equation}\label{u_eq}
\bm{u}^\star = -R^{-1} B^\top g_t^\star,
\end{equation}
where $g_t^\star := g_t (\bm{u}^\star)$.
By the definition of $g_t$ and~\eqref{eq1}, 
\begin{equation*}
\begin{split}
g_t^\star &= P_{t+1} \bigg ( 
A \bm{x} - BR^{-1} B^\top g_t^\star  + \frac{1}{\lambda} \Xi \Xi^\top g_t^\star+ \Xi \sum_{i=1}^N \frac{\hat{w}_t^{(i)}}{N} \bigg ) + r_{t+1},
\end{split}
\end{equation*}
which yields the following expression of $g_t^\star$:
\begin{equation}\label{g_eq2}
\begin{split}
g_t^\star = \bigg  (
I + P_{t+1} B R^{-1} B^\top  -  \frac{1}{\lambda} P_{t+1} \Xi \Xi^\top
\bigg )^{-1}\bigg( P_{t+1} A \bm{x} + P_{t+1}\Xi \sum_{i=1}^N \frac{\hat{w}_t^{(i)}}{N} + r_{t+1} \bigg) .
\end{split}
\end{equation}
Note that $I + P_{t+1} B R^{-1} B^\top  -  \frac{1}{\lambda} P_{t+1} \Xi \Xi^\top$ must be invertible by the uniqueness of $\bm{u}^\star$.
\end{proof}

\subsection{Proof of Theorem~\ref{thm:fin}}\label{app:thm:fin}

\begin{proof}
We use mathematical induction to show that $V_t(\bm{x}) = \bm{x}^\top P_t \bm{x} + 2r_t^\top \bm{x} + z_t$. 
For $t = T$, the statement is true by the definition of $P_T$, $r_T$, and $z_T$. 
Suppose that the induction hypothesis is true for $t+1$, i.e., $V_{t+1}(\bm{x}) = \bm{x}^\top P_{t+1} \bm{x} +  2r_{t+1}^\top \bm{x} + z_{t+1}$.
Recall that $g_t^\star :=g_t (\bm{u}^\star)$, where $g_t$ is given as~\eqref{g_eq}. 
Differentiating \eqref{eq:backward} with respect to $\bm{x}$ and using
 \eqref{eq1} and \eqref{u_eq}, we obtain
\begin{equation}\nonumber
\begin{split}
	& V_t' (\bm{x}) = 2 Q\bm{x} + 2 \frac{\partial \bm{u}^\star}{\partial \bm{x}}^\top R  \bm{u}^\star\\
	& +  \frac{1}{N} \sum_{i=1}^N \bigg [\bigg ( A + B \frac{\partial \bm{u}^\star}{\partial \bm{x}} + \Xi \frac{\partial \bm{w}^{\star, (i)}}{\partial \bm{x}} \bigg )^\top V_{t+1}' ( A \bm{x} + B \bm{u}^\star + \Xi \bm{w}^{\star, (i)} )+2 \lambda \frac{\partial \bm{w}^{\star, (i)}}{\partial \bm{x}}^\top (\hat{w}_t^{(i)} - \bm{w}^{\star, (i)} ) \bigg ]\\
	&= 2 Q\bm{x} + A^\top \frac{1}{N} \sum_{i=1}^N V_{t+1}' ( A \bm{x} + B \bm{u}^\star + \Xi \bm{w}^{\star, (i)} )  \\
	& + \frac{\partial \bm{u}^\star}{\partial \bm{x}}^\top  \bigg [2R \bm{u}^\star  + B^\top \frac{1}{N} \sum_{i=1}^N V_{t+1}' ( A \bm{x} + B \bm{u}^\star + \Xi \bm{w}^{\star, (i)} ) \bigg ]\\
	&=2Q\bm{x} + 2A^\top g_t^\star,
 \end{split}
 \end{equation}
 where $\bm{u}^\star$ is given as~\eqref{u_opt} and $\bm{w}^{\star, (i)}$ is given as~\eqref{w_opt} with $\bm{u} := \bm{u}^\star$.
  Replacing $g_t^\star$ with \eqref{g_eq2} yields
  \begin{equation}\label{eq_Q}
   Q\bm{x} + A^\top g_t^\star = P_t \bm{x} +  r_t
  \end{equation}
  by the recursion for $P_t$ in the Riccati equation~\eqref{ric}.
  Thus, 
  \[
  \frac{1}{2} V_t'(\bm{x}) = P_t \bm{x} + r_t,
  \]
  which implies that
 \[
  V_t (\bm{x}) = \bm{x}^\top P_t \bm{x} + 2 r_t^\top \bm{x} + z'_t,
 \]
for some constant  $z'_t \in \mathbb{R}$.

Plugging $\bm{u}^\star = K_t \bm{x} + L_t$ into \eqref{eq:backward} yields
\begin{equation}\label{eq:zprime}
\begin{split}
 z'_t &=  L_t^\top R L_t + z_{t+1}\\
& + \frac{1}{N} \sum_{i=1}^N \big[ ( BL_t + \Xi \bm{w}^{\star, (i)} )^\top  P_{t+1}( BL_t + \Xi \bm{w}^{\star, (i)} ) + 2 r_{t+1}^\top (BL_t + \Xi \bm{w}^{\star, (i)})-\lambda \lVert \hat{w}^{(i)}_t - \bm{w}^{\star, (i)} \rVert^2 \big ].
 \end{split}
 \end{equation}
For simplicity, let $\alpha_i:= \bm{w}^{\star, (i)} - \hat{w}_t^{(i)}$ and $\beta_i:= BL_t + \Xi  \hat{w}_t^{(i)}$.
Then, each term in the summation can be written as
\begin{equation}\nonumber
\begin{split}
&( \Xi \alpha_i + \beta_i)^\top  P_{t+1} (  \Xi \alpha_i + \beta_i ) + 2 r_{t+1}^\top (  \Xi \alpha_i + \beta_i) -\lambda \alpha_i^\top \alpha_i \\
&= \beta_i^\top P_{t+1} \beta_i + 2 \alpha_i^\top \Xi^\top (P_{t+1} \beta_i + r_{t+1})  + 2r_{t+1}^\top \beta_i -  \alpha_i^\top (\lambda I - \Xi^\top P_{t+1} \Xi) \alpha_i.
 \end{split}
 \end{equation}
It follows from \eqref{w_opt} that the constant part of $\alpha_i$ (with respect to $\bm{x}$) is given by $(\lambda I - \Xi^\top P_{t+1} \Xi )^{-1} \Xi^\top$ $(P_{t+1} \beta_i + r_{t+1})$.
Plugging it into the equality above, we have
\begin{equation}\nonumber
\begin{split}
& \beta_i^\top P_{t+1} \beta_i +  (P_{t+1} \beta_i + r_{t+1})^\top \Xi (\lambda I - \Xi^\top P_{t+1} \Xi )^{-1}  \Xi^\top  (P_{t+1} \beta_i + r_{t+1}) +  2r_{t+1}^\top  \beta_i\\
&= \beta_i^\top P_{t+1} (I -  \frac{1}{\lambda}  \Xi \Xi^\top P_{t+1})^{-1} \beta_i + 2r_{t+1}^\top \bigg [ I - \frac{1}{\lambda} \Xi \Xi^\top P_{t+1} \bigg ]^{-1} \beta_i + r_{t+1}^\top \Xi (\lambda I - \Xi^\top P_{t+1} \Xi)^{-1} \Xi^\top r_{t+1}.
 \end{split}
 \end{equation}
Substituting $\beta_i$ with $BL_t + \Xi  \hat{w}_t^{(i)}$, and $L_t$ with $-R^{-1} B^\top  (I + P_{t+1} \Phi)^{-1} ( P_{t+1}\Xi \bar{w}_t + r_{t+1})$, \eqref{eq:zprime} can be expressed as
\begin{equation}\nonumber
\begin{split}
z'_t &= z_{t+1}  + \mathrm{tr}   [ (I -  \Xi^\top P_{t+1} \Xi/\lambda )^{-1} \Xi^\top P_{t+1} \Xi \Sigma_t   ] \\
&+\bar{w}_t^\top \Xi^\top  [(I+P_{t+1}\Phi)^{-1}-(I- P_{t+1} \Xi \Xi^\top/\lambda)^{-1}] P_{t+1} \Xi \bar{w}_t  + (2  \bar{w}_t^\top \Xi^\top - r_{t+1}^\top \Phi)(I+P_{t+1} \Phi)^{-1} r_{t+1},
\end{split}
\end{equation}
where we have omitted the detailed algebra.
Finally, by the recursion for $z_t$ in \eqref{ric}, we deduce that $z'_t = z_t$.
This  completes our inductive argument. 
 Lastly, It follows from Lemma~\ref{lem:sol} that an optimal policy must be unique and it is obtained as \eqref{opt_policy}. 
 \end{proof}
 
\subsection{Proof of Proposition~\ref{prop:are}} 

\begin{proof}
In the standard LQG, it is well known that if $(A, B)$ is stabilizable, the Riccati equation~\eqref{ric_lqg} has a bounded limiting solution, which coincides with a symmetric PSD solution to an associated ARE~\cite[Theorem 2.4-1]{Lewis2012}.
Note that the ARE~\eqref{are} can be rewritten as
\begin{equation}\nonumber
\begin{split}
P &= Q + A^\top ( I + P\Phi)^{-1} PA\\
&= Q + A^\top ( I + P \underbrace{\sqrt{\Phi}}_{B'} \underbrace{I^{-1}}_{R'^{-1}}  \sqrt{\Phi}^\top )^{-1} PQ\\
&= Q + A^\top P A - A^\top P B' (R' + B'^\top P B')^{-1} B'^\top PA,
\end{split}
\end{equation}
which is in the form of the standard ARE.
Thus, our ARE~\eqref{are} is obtained by replacing $(A, B)$ with $(A, \sqrt{\Phi})$ in the ARE for the standard LQG, and the result follows.
\end{proof}

\subsection{Proof of Lemma~\ref{lem:sol2}}\label{app:lem:sol2}

 \begin{proof}
 Let $P$ be a solution to the equation $P-Q = A^\top P (I + \Phi P)^{-1} A$. Let $E:=(I +  \Phi P)^{-1} A$ be decomposed as $E=U_1 D U_1^{-1}$, where $D$ is a Jordan normal form.
Then, we have $P-Q = A^\top P U_1 D U_1^{-1}$.
Let $U_2:=PU_1$. Then, we obtain 
\[
U_2 -Q U_1 = A^\top U_2 D.
\]
Since $A = (I+ \Phi P)E=(I+ \Phi U_2 U_1^{-1}) U_1DU_1^{-1}$, we have 
\[
AU_1 = U_1D +  \Phi U_2 D.
\]
Therefore, we obtain that  $F \begin{bmatrix} U_1 \\ U_2 \end{bmatrix}  = G\begin{bmatrix} U_1 \\ U_2 \end{bmatrix} D$. 
This implies that a solution to the ARE~\eqref{are} is expressed as $P=U_2U_1^{-1}$ and $\begin{bmatrix} U_1 \\ U_2 \end{bmatrix}$ solves generalized eigenvalue problem.
\end{proof}

\subsection{Proof of Lemma~\ref{lem:stable}}\label{app:lem:stable}

\begin{proof}
Suppose first that $F \begin{bmatrix} U_1 \\ U_2 \end{bmatrix} = G \begin{bmatrix} U_1 \\ U_2 \end{bmatrix} \Lambda$, where $\Lambda$ is a Jordan normal form.
Then, $A = (I + \Phi U_2 U_1^{-1}) U_1 \Lambda U_1^{-1}$.
It follows from the ARE~\eqref{are} that
\begin{equation*}
\begin{split}
    U_2 U_1^{-1} &= Q + A^\top U_2 U_1^{-1} (I+\Phi U_2 U_1^{-1})^{-1} A\\
    &= Q + (U_1^{-H} \Lambda^H U_1^H + U_1^{-H} \Lambda^H U_2^H \Phi^H ) U_2 \Lambda U_1^{-1}.
\end{split}
\end{equation*}
This is a discrete-time Lyapunov equation of the form
\begin{equation}\label{lyap}
P = \bar{A}^H P \bar{A} + \bar{Q},
\end{equation}
 where $P= U_2 U_1^{-1}$, $\bar{A}:= U_1 \Lambda U_1^{-1}$, and $\bar{Q} := Q+(U_2 \Lambda U_1^{-1})^H \Phi U_2 \Lambda U_1^{-1}$.
Note that $\bar{Q} \succeq 0$ since $\Phi \succeq 0$ under Assumption~\ref{ass:W}.
By the theory of Lyapunov equations, 
we conclude that $P \succeq 0$ since
$\bar{Q} \succeq 0$ and $\bar{A}$ is stable.

We  now  assume that $P \succeq 0$. Suppose that  $\bar{Q}  \succeq 0$, and $\bar{A}$ has an unstable eigenvalue, i.e., $\bar{A} v = \gamma v$, where $\vert \gamma \vert \geq 1$.
Pre-multiplying $v^H$ and post-multiplying $v$ on  both sides of the Lyapunov equation~\eqref{lyap}, we obtain $(\gamma^* \gamma -1) v^H P v + v^H \bar{Q} v = 0$.
Then, ${\bar{Q}}^{1/2} v=0$, which leads to ${Q}^{1/2} v =  {\Phi}^{1/2} U_2 \Lambda U_1^{-1} v=0$ and $Av = (I+\Phi P)\bar{A}v = \gamma v$.
This contradicts Assumption \ref{ass:ob}.
Therefore, if $P  \succeq 0$ and $\bar{Q}  \succeq 0$, then $\bar{A}$ must be stable.
Since $\bar{A}$ and $\Lambda$ have the same spectrum, the result follows.
\end{proof}

\subsection{Proof of Proposition~\ref{prop:r}}

\begin{proof}
The recursion of $r_t$ in \eqref{ric} becomes
$r_t =  A^\top  (I + P_{ss} \Phi )^{-1} (P_{ss} \Xi  \bar{w} + r_{t+1} )$
as $P_t$ converges to $P_{ss}$.
When $A^\top (I+P_{ss} \Phi)^{-1}$ is stable, $r_t$  must converge  to \eqref{eq:oss}.
Thus, it suffices to show that $A^\top (I+P_{ss} \Phi)^{-1}$ is stable.
If follows from \eqref{gen2} that $A = U_1 \Lambda U_1^{-1} + \Phi U_2 \Lambda U_1^{-1}$ and
\begin{equation*}
\begin{split}
(I + \Phi P_{ss})^{-1}A 
&= (I + \Phi U_2 U_1^{-1})^{-1} (U_1 \Lambda U_1^{-1} + \Phi U_2 \Lambda U_1^{-1})\\
& = U_1 \Lambda U_1^{-1},
\end{split}
\end{equation*}
which implies that $(I + \Phi P_{ss})^{-1}A$ and $\Lambda$ have the same spectrum.
Since $\Lambda$ has $n$ stable eigenvalues, $(I + \Phi P_{ss})^{-1}A$ is stable, and so is its transpose  $A^\top (I+P_{ss} \Phi)^{-1}$.
\end{proof}

\subsection{Proof of Proposition~\ref{prop:avgcost}}

\begin{proof}
The steady-state average cost is computed as
\begin{equation*}
\begin{split}
& \limsup_{T \rightarrow \infty} \frac{1}{T} ( \bm{x}^\top P_{0} \bm{x} + 2 r_{0}^\top \bm{x} +  z_{0} )= \limsup_{T \rightarrow \infty} \frac{1}{T} (\bm{x}^\top P_{ss} \bm{x} + 2 r_{ss}^\top \bm{x} +  z_{0}) =\limsup_{T \rightarrow \infty} \frac{z_0}{T}.
\end{split}
\end{equation*}
By the recursion for $z_t$ in \eqref{ric}, this cost can be expressed as
\begin{equation*}
\begin{split}
	\limsup_{T \rightarrow \infty} \frac{1}{T} \sum_{t=0}^{T-1} &\bigg[ \mathrm{tr}   [ (I -  \Xi^\top P_{t+1} \Xi/\lambda )^{-1} \Xi^\top P_{t+1} \Xi \Sigma  ]\\
&+\bar{w}^\top \Xi^\top  [(I+P_{t+1}\Phi)^{-1}-(I- P_{t+1} \Xi \Xi^\top/\lambda)^{-1}] P_{t+1} \Xi \bar{w} \\
& + (2  \bar{w}^\top \Xi^\top - r_{t+1}^\top \Phi)(I+P_{t+1} \Phi)^{-1} r_{t+1}\bigg],
\end{split}
\end{equation*}
which is equal to the value given in the statement.
\end{proof}

\subsection{Proof of Proposition~\ref{prop:acoe}}

\begin{proof}
We use Lemma~\ref{lem:sol} with $V_{t+1} \equiv h$, setting $P_{t+1} = P_{ss}$ and $r_{t+1} = r_{ss}$ and $z_{t+1} = 0$. 
Then, $V_t (\bm{x}) = \bm{x}^\top P_t \bm{x} + 2r_{t}^\top \bm{x} + z_t$ in \eqref{eq:backward} satisfies  \eqref{ric} with $P_{t+1} = P_{ss}$ and $r_{t+1} = r_{ss}$ and $z_{t+1} = 0$:
\begin{equation*}
\begin{split}
P_t &= Q + A^\top (I + P_{ss}\Phi )^{-1} P_{ss} A\\
r_t &=  A^\top (I + P_{ss} \Phi)^{-1} ( P_{ss}\Xi \bar{w} + r_{ss} )\\
z_t &= \mathrm{tr}   [ (I -  \Xi^\top P_{ss} \Xi/\lambda )^{-1} \Xi^\top P_{ss} \Xi \Sigma   ]\\
&+\bar{w}^\top \Xi^\top  [(I+P_{ss}\Phi)^{-1}-(I- P_{ss} \Xi \Xi^\top/\lambda)^{-1}] P_{ss} \Xi \bar{w}\\
& + (2  \bar{w}^\top \Xi^\top - r_{ss}^\top \Phi)(I+P_{ss} \Phi)^{-1} r_{ss}.
\end{split}
\end{equation*}
It follows from the ARE~\eqref{are} that $P_t = P_{ss}$.
By the definition of $r_{ss}$ in \eqref{eq:oss}, 
\begin{equation} \nonumber
\begin{split}
 P_{ss}\Xi \bar{w} + r_{ss} &= P_{ss}\Xi \bar{w}  + [I - A^\top (I+P_{ss} \Phi)^{-1}]^{-1} A^\top (I+P_{ss} \Phi)^{-1} P_{ss} \Xi \bar{w}\\
&= [I - A^\top (I+P_{ss} \Phi)^{-1}]^{-1}  P_{ss} \Xi \bar{w}.
\end{split}
\end{equation}
Thus, we deduce that $r_t = r_{ss}$. 
Moreover, Proposition~\ref{prop:avgcost} implies that $z_t = \rho$. 
Putting these results together, we conclude that $V_t (\bm{x}) = \bm{x}^\top P_{ss} \bm{x} + r_{ss}^\top \bm{x} + \rho = h(\bm{x}) + \rho$. 
Therefore, the equality~\eqref{eq:acoe} holds.
The optimality of $(\pi_{ss}^\star (\bm{x}), \gamma_{ss}^\star (\bm{x}))$ also follows from Lemma~\ref{lem:sol}.
\end{proof}

\subsection{Proof of Theorem~\ref{thm:inf}}

\begin{proof}
{\it (a)} Consider any single-stage policy pair $(\pi_t, \gamma_t)$.
We define the mapping $\mathcal{T}^{\pi_t, \gamma_t}$ as
\begin{equation*}
\begin{split}
	(\mathcal{T}^{\pi_t, \gamma_t} h) (\bm{x}) &:=  \bm{x}^\top Q \bm{x} + u_t^\top R u_t - \lambda W_2({\mu}_t, \nu)^2\\
 &+\int_{\mathbb{R}^k} h (A \bm{x} + B u_t + \Xi w) \mathrm{d} \mu_t (w), \;\; u_t = \pi_t (\bm{x}), \mu_t = \gamma_t (\bm{x}).
\end{split}
\end{equation*}
It follows from Proposition~\ref{prop:acoe} that
\[
(\mathcal{T}^{\pi_t, \gamma^\star_{ss}} h) (\bm{x}) \geq (\mathcal{T}^{\pi^\star_{ss}, \gamma^\star_{ss}} h) (\bm{x}) = \rho + h(\bm{x}).
\]
Fix a policy $\pi := (\pi_0, \pi_1, \ldots) \in \Pi$ and an arbitrary positive integer $T$.
We first note that
\[
(\mathcal{T}^{\pi_{T-1}, \gamma^\star_{ss}} h)(\bm{x}) \geq \rho + h(\bm{x}). 
\]
By the monotonicity of the mapping $\mathcal{\mathcal{T}}^{\pi_{T-2}, \gamma^\star_{ss}}$, we have
\begin{equation*}
\begin{split}
(\mathcal{T}^{\pi_{T-2}, \gamma^\star_{ss}} \mathcal{T}^{\pi_{T-1}, \gamma^\star_{ss}} h)(\bm{x})
&\geq \rho + (\mathcal{T}^{\pi_{T-2}, \gamma^\star_{ss}}h)(\bm{x})\\
&\geq 2\rho +h(\bm{x}).
\end{split}
\end{equation*}
Recursively applying this inequality yields
\[
	(\mathcal{T}^{\pi_{0}, \gamma^\star_{ss}} \cdots \mathcal{T}^{\pi_{T-1}, \gamma^\star_{ss}} h)(\bm{x}) \geq T\rho +h(\bm{x}).
\]
Dividing both sides by $T$ and letting $T$ tend to $\infty$, we have 
\begin{equation}\label{eq:saddle1}
	 \tilde{J}^{\lambda}_{\bm{x}, \infty}(\pi, \gamma_{ss}^\star;h) \geq \rho,
\end{equation}
which holds for any $\pi \in \Pi$. 
By the same argument with $\pi_t := \pi_{ss}^\star$, 
we can also show that $\tilde{J}^{\lambda}_{\bm{x}, \infty}(\pi_{ss}^\star, \gamma; h) \leq \rho$ for all $\gamma \in \Gamma$.

{\it (b)} Inequalities \eqref{eq:saddle} imply that  $(\pi_{ss}^\star, \gamma_{ss}^\star)$ is an optimal policy pair of the minimax problem~\eqref{eq:avgopt}. Furthermore, we deduce that $\tilde{J}^{\lambda}_{\bm{x}, \infty}(\pi_{ss}^\star, \gamma_{ss}^\star;h) = \rho$. 

{\it (c)} Under the conditions \eqref{eq:avg} and \eqref{eq:avg2}, the $h$ term in \eqref{eq:saddle} can be ignored, and thus the following inequalities hold:
 \begin{equation*}
{J}^{\lambda}_{\bm{x}, \infty}(\pi_{ss}^\star, \gamma)  \leq \rho
 \leq {J}^{\lambda}_{\bm{x}, \infty}(\pi, \gamma_{ss}^\star) \quad \forall (\pi, \gamma) \in \bar{\Pi} \times \bar{\Gamma}.
\end{equation*}
Using the argument in Part {\it (b)},
we conclude that $(\pi_{ss}^\star, \gamma_{ss}^\star)$ is an optimal policy pair and $\rho$ is the optimal value of the problem~\eqref{eq:acopt}.
\end{proof}

\subsection{Proof of Theorem~\ref{thm:stable1}}\label{app:thm:stable1}

\begin{proof}
Let $\bar{x}_0 := x_0^\star$ and $\bar{x}_t := \mathbb{E} [x_t^\star]$ for $t = 1, 2, \ldots$, where $x_t^\star$ denotes the closed-loop system state under the optimal policy in Corollary~\ref{cor:inf}.
Then, the mean-state system is given by
\begin{equation*}
\begin {split}
\bar{x}_{t+1} 
&=  (A+ B K_{ss}) \bar{x}_t + B L_{ss} + \frac{\Xi}{N} \sum_{i=1}^N w^{\star, (i)} (\bar{x}_t ).
\end{split}
\end{equation*}
It follows from \eqref{eq1} and \eqref{g_eq2} that 
\begin{equation*}
\begin {split}
\frac{\Xi}{N} \sum_{i=1}^N w^{\star, (i)} (\bar{x}_t ) &= \frac{1}{\lambda} \Xi \Xi^\top g_t^\star + \Xi \bar{w}\\
&=\frac{1}{\lambda} \Xi \Xi^\top (I + P_{ss} \Phi)^{-1} (P_{ss} A \bm{x} + P_{ss} \Xi \bar{w} + r_{ss}) + \Xi \bar{w}.
\end{split}
\end{equation*} 
By this equality and the definition of $L_{ss}$ and $r_{ss}$, the mean-state dynamics can be rewritten as
\begin{equation*}
\begin {split}
\bar{x}_{t+1} =& (I + \Phi P_{ss})^{-1}A  \bar{x}_{t} + (I-\Phi (I+P_{ss} \Phi - A^\top)^{-1} P_{ss} ) \Xi \bar{w}.
\end{split}
\end{equation*}
It follows from the proof of Proposition~\ref{prop:r} that the gain $(I + \Phi P_{ss})^{-1}A $ is stable,
and therefore the expected state converges to 
$[I - (I + \Phi P_{ss})^{-1}A ]^{-1}
[ I-\Phi (I+P_{ss} \Phi - A^\top)^{-1} P_{ss} ]\Xi \bar{w}$.
\end{proof}

\subsection{Proof of Theorem~\ref{thm:stable2}}\label{app:thm:stable2}

\begin{proof}
Since $A+BK_{ss}$ is independent of $\nu$, without loss of generality, we let $\nu \equiv \delta_0$.
Consider the policy $\gamma'$ that selects $\mu_t = \delta_0 = \nu$ for all $t \geq 0$.
Under the policy pair $(\pi_{ss}^\star, \gamma')$, the closed-loop system is given by $x_{t+1} = (A + BK_{ss})x_t$ for all $t \geq 0$.
Since $r_{ss}=0$ in this case, we obtain that $h(\bm{x}) = \bm{x}^\top P_{ss} \bm{x}$ and $\rho=0$.
Since $h(\bm{x}) \geq 0$ for all $\bm{x}$, $\bar{\Gamma}$ is equivalent to $\Gamma$, and thus $\gamma' \in \bar{\Gamma}$.

It follows  from Theorem~\ref{thm:inf} that
\[
	\tilde{J}^{\lambda}_{\bm{x}, \infty} (\pi^\star_{ss}, \gamma'; h) \leq \rho = 0.
\]
Recall that $h(\bm{x}) = \bm{x}^\top P_{ss} \bm{x}$ and $W_2(\mu_t, \nu) = 0$ in the above setting. Thus,
\begin{equation*}
\limsup_{T \rightarrow \infty} \frac{1}{T} \mathbb{E}^{\pi^\star_{ss}, \gamma'} \bigg[ \sum_{t=0}^{T-1}(x_t^\top Q x_t + u_t^\top R u_t ) + x_T^\top P_{ss} x_T \bigg\vert x_0 = \bm{x} \bigg]
\end{equation*}
is less than or equal to $\rho = 0$.
Since $Q, P_{ss} \succeq 0$ and $R \succ 0$,   $\tilde{J}^{\lambda}_{\bm{x}, \infty} (\pi^\star_{ss}, \gamma'; h)=  0$ and $\limsup$ is replaced by $\lim$. 
This implies that
\begin{equation*}
    \lim_{t \rightarrow \infty} (x_t^\top Q x_t + u_t^\top R u_t) = 0,
\end{equation*}
which implies that ${Q}^{1/2} x_t \rightarrow 0$ and $u_t \rightarrow 0$ as $t \to \infty$.

Note that the linear system can be expressed as 
\[
x_{t+k} =  A^k x_t + \sum_{l=0}^{k-1} A^{k-1-l} B u_{t+l}, \quad k \geq 1.
\]
It follows from the triangle inequality that
\begin{equation*}
\rVert Q^{1/2} A^k x_t \rVert^2  \leq \lVert Q^{1/2} x_{t+k} \rVert^2 + \sum_{l=0}^{k-1}\Vert  Q^{1/2} A^{k-1-l} B u_{t+l} \rVert^2.
\end{equation*}
Recall that $Q^{1/2}x_t$ and $u_t$ converge to $0$ as $t \to \infty$.
Thus, for any $\epsilon>0$, there exists $T(\epsilon)$ such that
\[
\sum_{k=0}^{n-1} \rVert Q^{1/2} A^k x_{t} \rVert^2 \leq \epsilon^2, \quad t>T(\epsilon).
\]
The left-hand side is the squared Euclidean norm of the product of the observability matrix and $x_t$. 
By the observability of $(A, \sqrt{Q})$, the observability matrix has a full rank.
Thus, $\lVert x_t \rVert \leq \epsilon/\sigma_{\min}$ and $x_t$ converges to $0$, where $\sigma_{\min}$ is the smallest singular value of the observability matrix.
Hence, the closed-loop system $x_{t+1} = (A + BK_{ss})x_t$ is asymptotically stable and the mean-state system with $\pi_{ss}^\star$ is BIBO stable. 
\end{proof}

\subsection{Proof of Lemma~\ref{lem:bd}}

\begin{proof}
Fix $\pi \in \Pi$.
Let $p^\star := \sup_{\gamma \in \Gamma_{\mathcal{D}}} J_{\bm{x}} (\pi, \gamma)$ and $d^\star := \inf_{\lambda \geq 0} \sup_{\gamma \in \Gamma} (\lambda \theta^2 + J_{\bm{x}}^\lambda (\pi, \gamma))$. 
For any $\varepsilon > 0$, there exists $\gamma^\varepsilon \in \Gamma_{\mathcal{D}}$ such that
\begin{equation}\label{ie1}
p^\star - \varepsilon < J_{\bm{x}} (\pi, \gamma^\varepsilon).
\end{equation}
Since $\gamma^\varepsilon \in \Gamma_{\mathcal{D}}$, we have
\begin{equation}\nonumber
\begin{split}
&\lambda \theta^2 + J_{\bm{x}}^\lambda (\pi, \gamma^\varepsilon) \\
&\geq 
\frac{1}{T}\mathbb{E}^{\pi, \gamma^\varepsilon}\bigg [  \sum_{t=0}^{T-1} \lambda W_2 ( \mu_t, \nu_t)^2 + x_T^\top Q_f x_T  + \sum_{t=0}^{T-1} (x_t^\top Q x_t + u_t^\top R u_t - \lambda W_2 (\mu_t, \nu_t)^2)~\bigg\vert~x_0 = \bm{x} 
\bigg ]\\
&= J_{\bm{x}} (\pi, \gamma^\varepsilon).
\end{split}
\end{equation}
Minimizing both sides with respect to $\lambda \geq 0$ yields
\begin{equation}\label{ie2}
d^\star \geq J_{\bm{x}} (\pi, \gamma^\varepsilon).
\end{equation}
Combining inequalities~\eqref{ie1} and~\eqref{ie2}, we obtain
\[
p^\star -\varepsilon < d^\star.
\]
Since this inequality holds for any $\varepsilon > 0$, we conclude that
$p^\star \leq d^\star$.
\end{proof}

\subsection{Proof of Theorem~\ref{thm:bd}}

\begin{proof}
It follows from Lemma~\ref{lem:bd} that 
\begin{equation}\nonumber
\begin{split}
\sup_{\gamma \in \Gamma_{\mathcal{D}}} J_{\bm{x}} ( \pi^{\star,  \lambda}, \gamma ) &\leq \inf_{\lambda' \geq 0} \sup_{\gamma \in \Gamma} \big (\lambda' \theta^2 + J_{\bm{x}}^{\lambda'} (\pi^{\star, \lambda}, \gamma) \big )\\
& \leq \lambda \theta^2 + \sup_{\gamma \in \Gamma} J_{\bm{x}}^{\lambda} (\pi^{\star, \lambda}, \gamma).
\end{split}
\end{equation}
By the optimality of $\pi^{\star, \lambda}$, we have
\[
 V (\bm{x}; \lambda) = \inf_{\pi \in \Pi} \sup_{\gamma \in \Gamma} J_{\bm{x}}^{\lambda} (\pi, \gamma)
= \sup_{\gamma \in \Gamma} J_{\bm{x}}^{\lambda} (\pi^{\star, \lambda}, \gamma).
\]
Therefore, the result follows. 
\end{proof}

\subsection{Proof of Lemma~\ref{lem:lam}}

\begin{proof}
Fix an arbitrary $\lambda  \in  [0, \hat{\lambda})$ and consider any $\lambda' \in (\lambda, \hat{\lambda})$.
Since $\lambda' < \hat{\lambda}$, there exists $t \geq 1$ satisfying $\lambda' \leq \bar{\lambda}_t(\lambda')$, where $\bar{\lambda}_t(\lambda)$ is defined as the maximum eigenvalue of $\Xi^\top P_t^\lambda  \Xi$.
Let $t$ denote the largest time index in $\argmax \{ t \mid \lambda' \leq \bar{\lambda}_t(\lambda')\}$. 
Since $\lambda' > \bar{\lambda}_\tau(\lambda')$ for $\tau = t+1, \ldots, T$, the optimal value functions for $\tau = t, \ldots, T$ are characterized as
$V_\tau (\bm{x}; \lambda') = \bm{x}^\top P_{\tau}^{\lambda'} \bm{x} + 2(r_{\tau}^{\lambda'})^\top \bm{x} + z_\tau^{\lambda'}$ by using the inductive argument in the proof of Theorem~\ref{thm:fin}.

Now consider the optimal value functions with  the penalty parameter $\lambda$.
The Bellman recursion at $t-1$ is given by
\begin{equation}\nonumber
\begin{split}
&V_{t-1} (\bm{x}; \lambda ) =  \bm{x}^\top  Q \bm{x} + \inf_{\bm{u} \in \mathbb{R}^m} \bigg[ \bm{u}^\top R \bm{u}   + \frac{1}{N} \sum_{i=1}^N \sup_{ w\in \mathbb{R}^k} \big  \{ V_{t}(A\bm{x} + B \bm{u} + \Xi w; \lambda)-\lambda \lVert \hat{w}^{(i)}_t - w \rVert^2 \big  \} \bigg].
\end{split}
\end{equation}
Since $\lambda< \lambda'$, we have $V_{t}(\bm{x}; \lambda) \geq V_{t}(\bm{x}; \lambda')$ for all $\bm{x} \in \mathbb{R}^n$. 
Therefore,
\begin{equation}\nonumber
\begin{split}
&V_{t-1} (\bm{x}; {\lambda}) \geq \bm{x}^\top  Q \bm{x} + \inf_{\bm{u} \in \mathbb{R}^m} \bigg[ \bm{u}^\top R \bm{u}   + \frac{1}{N} \sum_{i=1}^N \sup_{ w\in \mathbb{R}^k} \big  \{ V_{t}(A\bm{x} + B \bm{u} + \Xi w; {\lambda'})-\lambda \lVert \hat{w}^{(i)}_t - w \rVert^2 \big  \} \bigg].
\end{split}
\end{equation}
Note that the $w$-dependent part of the inner maximization problem is 
\begin{equation}\nonumber
\begin{split}
\sup_{w \in \mathbb{R}^k} \big \{ & w^\top (\Xi^\top P_t^{\lambda'} \Xi  - \lambda I )w  + 2 [  \Xi^\top P_t^{\lambda'}(A\bm{x} + B \bm{u})+ \Xi^\top r_t^{\lambda'} + \lambda  \hat{w}^{(i)}_t]^\top w  \big \}.
\end{split}
\end{equation}
This is a strictly convex quadratic function with respect to $w$ since   $\lambda < \lambda' \leq \bar{\lambda}_t(\lambda')$.
Thus, the supremum must be $+\infty$ and $V_{t-1}(\bm{x}; \lambda) = +\infty$. It follows from the Bellman recursion that $V_0 (\bm{x}; \lambda) = +\infty$.

Next, we consider the case where $\lambda \in (\hat{\lambda}, \infty)$.
It suffices to show that any $\lambda$ in this range satisfies Assumption~\ref{ass:pen}.
Suppose that $\lambda > \hat{\lambda}$ does not satisfy Assumption~\ref{ass:pen}.
By the definition of $\hat{\lambda}$,  there exists at least one  $\tilde{\lambda} \in [\hat{\lambda}, \lambda)$ that satisfies Assumption~\ref{ass:pen}.
Then, we have
\begin{equation}\label{eq:fin}
\inf_{\pi \in \Pi} \sup_{\gamma \in \Gamma} J_{\bm{x}}^{\tilde{\lambda}} (\pi, \gamma ) = c_2 (\tilde{\lambda}),
\end{equation}
which is finite, since $\tilde{\lambda}$ satisfies Assumption~\ref{ass:pen}.
On the other hand, $\lambda$ does not  satisfy Assumption~\ref{ass:pen}, and thus we can take the largest time index $t$ in $\argmax \{ t \mid \lambda \leq \bar{\lambda}_t (\lambda) \}$ in the same way as in the previous case. 
Switching the role of $(\lambda, \lambda')$ in the previous case to that of $(\tilde{\lambda}, \lambda)$, we deduce that 
\begin{equation}\nonumber
\begin{split}
&V_{t-1} (\bm{x}; \tilde{\lambda}) \geq \bm{x}^\top  Q \bm{x} + \inf_{\bm{u} \in \mathbb{R}^m} \bigg[ \bm{u}^\top R \bm{u}   + \frac{1}{N} \sum_{i=1}^N \sup_{ w\in \mathbb{R}^k} \big  \{ V_{t}(A\bm{x} + B \bm{u} + \Xi w; {\lambda})- \tilde{\lambda} \lVert \hat{w}^{(i)}_t - w \rVert^2 \big  \} \bigg].
\end{split}
\end{equation}
since $V_t(\bm{x}; \tilde{\lambda}) \geq V_t(\bm{x}; \lambda)$.
The supremum must be $+\infty$ since $\tilde{\lambda} < \lambda \leq \bar{\lambda}_t(\lambda)$. 
Thus, we have
 \[
\inf_{\pi \in \Pi} \sup_{\gamma \in \Gamma} J^{\tilde{\lambda}}_{\bm{x}} (\pi, \gamma) = \infty,
\]
which is a contradiction to \eqref{eq:fin}.
Therefore, we conclude that any $\lambda>\hat{\lambda}$ must satisfy Assumption~\ref{ass:pen}.

Finally when $\lambda = \hat{\lambda}$, the value of the objective function can be either finite or infinite depending on the initial states, samples, and system matrices.
However, the boundary condition
 is guaranteed by the monotonically decreasing property of the objective function.
Precisely, $c_1$ should be greater than or equal to $c_2(\hat{\lambda}+\epsilon)$ for any $\epsilon >0$.
\end{proof}

\subsection{Proof of Proposition~\ref{prop:lam}}

\begin{proof}
It is clear that $\lambda_*$ minimizes the objective function of \eqref{opt_lambda} in the range $(\hat{\lambda}, \infty)$.
Moreover, it follows from Lemma~\ref{lem:lam} that $\inf_{\pi \in \Pi} \sup_{\gamma \in \Gamma} J^{\lambda}_{\bm{x}} (\pi, \gamma)=\infty$ for  all $\lambda < \hat{\lambda}$. Thus, $\lambda_*$ minimizes the objective function in the range $[0, \hat{\lambda}) \cup (\hat{\lambda}, \infty)$  and it suffices to show that $\hat{\lambda} \theta^2 + c_1 \geq \lambda_* \theta^2 + c_2(\lambda_*)$.

Suppose that $\hat{\lambda} \theta^2 + c_1 <\lambda_* \theta^2 + c_2(\lambda_*)$.
Let
$$
\Delta := (\lambda_* \theta^2 + c_2(\lambda_*)) - (\hat{\lambda} \theta^2 + c_1) > 0.
$$
Since $\lambda_*$ is a minimizer of \eqref{eq:min},
\[
 \lambda_* \theta^2 + c_2(\lambda_*) \leq (\hat{\lambda}+\epsilon)\theta^2 + c_2(\hat{\lambda}+\epsilon) \quad  \forall \epsilon >0,
\]
which is equivalent to
\[
c_1 + \Delta \leq \epsilon\theta^2 + c_2(\hat{\lambda}+\epsilon) \quad \forall \epsilon >0.
\]
Now, let $\epsilon$ be  sufficiently small so that $\epsilon \theta^2 < \Delta$.
Then,  $c_1< c_2(\hat{\lambda}+\epsilon)$, which is a contradiction to the boundary condition in Lemma~\ref{lem:lam}.
Thus, we conclude that  $\hat{\lambda} \theta^2 + c_1 \geq \lambda_\star \theta^2 + c_2(\lambda_\star)$ and the result follows.

We now show that the optimal value functions $V_t(\bm{x}; \lambda) = \bm{x}^\top P_t^\lambda \bm{x} + 2(r_t^\lambda)^\top \bm{x} + z_t^\lambda$ is jointly convex in $(\lambda, \bm{x}) \in  (\hat{\lambda}, \infty)  \times \mathbb{R}^n$ using mathematical induction. 
For $T$, it is clear that $V_T(\bm{x}; \lambda) = \bm{x}^\top Q_f \bm{x}$ satisfies the joint convexity. 
Suppose now that the induction hypothesis is valid for $t$. 
Recall that the Bellman equation for $t-1$ is given by
\begin{equation}\nonumber
\begin{split}
&V_{t-1}(\bm{x}; \lambda) = \bm{x}^\top Q \bm{x} + \inf_{\bm{u} \in \mathbb{R}^m} \bigg [
\bm{u}^\top R \bm{u} \\
&+ \frac{1}{N} \sum_{i=1}^N \sup_{w \in \mathbb{R}^k}
\big \{
V_t (A\bm{x} + B\bm{u} + \Xi w; \lambda) - \lambda \| \hat{w}_t^{(i)} - w\|^2
\big \}
\bigg ].
\end{split}
\end{equation}
For each $w \in \mathbb{R}^k$, $V_t (A\bm{x} + B\bm{u} + \Xi w; \lambda) - \lambda \| \hat{w}_t^{(i)} - w\|^2$ is convex in  $(\lambda, \bm{x})  \in  (\hat{\lambda}, \infty)  \times \mathbb{R}^n$.
Thus, the convexity is preserved through the point-wise supremum, and 
$\bm{u}^\top R \bm{u} 
+ \frac{1}{N} \sum_{i=1}^N \sup_{w \in \mathbb{R}^k}
 \{
V_t (A\bm{x} + B\bm{u} + \Xi w; \lambda) - \lambda \| \hat{w}_t^{(i)} - w\|^2
 \}$ is jointly convex in $(\lambda, \bm{x}, \bm{u})$  on  $(\hat{\lambda}, \infty)  \times \mathbb{R}^n \times \mathbb{R}^m$, which is a convex set. 
Thus, its infimum over $\bm{u} \in \mathbb{R}^m$ is convex in  $(\lambda, \bm{x})  \in  (\hat{\lambda}, \infty)  \times \mathbb{R}^n$.
This completes our mathematical induction, and the result follows. 
\end{proof}

\subsection{Proof of Theorem~\ref{thm:gc}}

\begin{proof}
Fix an arbitrary infinite-horizon policy $\gamma \in \Gamma_{\mathcal{D}}$.
Let $\gamma_{0: T-1}$ be defined as the marginal of $\gamma$ from stage $0$ and $T-1$.
Then, $\gamma_{0: T-1}$  is an admissible policy of the opponent in the finite-horizon setting. 
Therefore, 
\[
\tilde{J}_{\bm{x}} (\pi_{ss}^{\star, \lambda}, \gamma_{0: T-1}; h^\lambda) \leq \sup_{\gamma \in \Gamma_{\mathcal{D}}} \tilde{J}_{\bm{x}} (\pi_{ss}^{\star, \lambda}, \gamma; h^\lambda ),
\]
where, with a slight abuse of notation,
 $\pi_{ss}^{\star, \lambda}$ represents a finite-horizon policy using $\pi_{ss}^{\star, \lambda}$ at every stage,  and
$\tilde{J}_{\bm{x}} (\pi, \gamma; h) :=
 \frac{1}{T} \mathbb{E}^{\pi, \gamma}
\big [
\sum_{t=0}^{T-1} (x_t^\top Q x_t + u_t^\top R u_t ) + h (x_T) \mid x_0 = \bm{x}
\big ]$. 

Let $\tilde{J}_{\bm{x}}^\lambda (\pi, \gamma; h) :=
 \frac{1}{T} \mathbb{E}^{\pi, \gamma}
\big [
\sum_{t=0}^{T-1} (x_t^\top Q x_t + u_t^\top R u_t - \lambda W_2(\mu_t, \nu)^2) + h (x_T) \mid x_0 = \bm{x}
\big ]$.
Using the argument used in the proof of Lemma~\ref{lem:bd}, we can deduce that
\begin{equation}
\begin{split}
\sup_{\gamma \in \Gamma_{\mathcal{D}}} \tilde{J}_{\bm{x}} (\pi_{ss}^{\star, \lambda}, \gamma; h^\lambda ) &\leq \inf_{\lambda' \geq 0} \sup_{\gamma \in \Gamma} \big ( \lambda' \theta^2 + \tilde{J}_{\bm{x}}^{\lambda'} (\pi_{ss}^{\star, \lambda}, \gamma; h^\lambda) \big )\\
& \leq \lambda \theta^2 + \sup_{\gamma \in \Gamma} \tilde{J}_{\bm{x}}^\lambda (\pi_{ss}^{\star, \lambda}, \gamma; h^\lambda ).
\end{split}
\end{equation}
Combining the two inequalities above yields
\begin{equation} \nonumber
\begin{split}
\tilde{J}_{\bm{x}, \infty} (\pi_{ss}^{\star, \lambda}, \gamma; h^\lambda) &=
\limsup_{T \to \infty} \tilde{J}_{\bm{x}} (\pi_{ss}^{\star, \lambda}, \gamma_{0: T-1}; h^\lambda)\\
& \leq \lambda \theta^2  + \limsup_{T \to \infty} \sup_{\gamma \in \Gamma} \tilde{J}_{\bm{x}}^\lambda (\pi_{ss}^{\star, \lambda}, \gamma; h^\lambda )\\
&\leq \lambda \theta^2 + \rho (\lambda),
\end{split}
\end{equation}
where the last inequality follows from Theorem~\ref{thm:inf} (a).
Since $\gamma$ was arbitrarily chosen from $\Gamma_\mathcal{D}$, the first bound holds. 
The second bound can be obtained using the same argument. 
\end{proof}

\subsection{Proof of Lemma~\ref{lem:gc}}

\begin{proof}
There exists $\hat{\lambda}_{11}>0$ satisfying 
\[
\Phi := BR^{-1} B^\top - \Xi \Xi^\top / \lambda \succeq 0\quad \forall \lambda >\hat{\lambda}_{11},
\]
 since $B R^{-1} B^\top \succ 0$.
It follows from the stabilizability of $(A, B)$ and the observability of $(A, C)$ that the ARE of the standard LQG has a unique PSD solution $\tilde{P}$.
Moreover, the LQG control gain $\tilde{K}:=- R^{-1} B^\top (I + \tilde{P}BR^{-1}B^\top)^{-1} \tilde{P}A$ stabilizes the closed-loop system, such that
\begin{equation}\nonumber
\begin{split}
A+B\tilde{K} &= A - B R^{-1} B^\top (I + \tilde{P}BR^{-1}B^\top)^{-1} \tilde{P}A
\end{split}
\end{equation}
is stable and all eigenvalues of this matrix lie inside the unit circle.
Then, there exists $\hat{\lambda}_{12}$ such that all eigenvalues of
\begin{equation}\nonumber
\begin{split}
&A-(B R^{-1}B^\top - \Xi \Xi^\top/\lambda) (I + \tilde{P}BR^{-1}B^\top)^{-1} \tilde{P}A\\
& = A-\Phi (I + \tilde{P}BR^{-1}B^\top)^{-1} \tilde{P}A := A + \Phi K'
\end{split}
\end{equation}
lie inside the unit circle for any $\lambda>\hat{\lambda}_{12}$, since $B R^{-1}B^\top - \Xi \Xi^\top/\lambda$ is continuous in $\lambda > 0$ and converges to $B R^{-1}B^\top$ as $\lambda \to +\infty$.
Since $A+ \Phi K'$ is stable for any $\lambda>\hat{\lambda}_{12}$, we can conclude that $(A, \Phi^{1/2} )$ is stabilizable for any $\lambda>\hat{\lambda}_{12}$.
Letting $\hat{\lambda}_1 := \max_{t=1,2}\{\hat{\lambda}_{1t}\}$, the result follows.
\end{proof}

\subsection{Proof of Proposition~\ref{prop:gc}}

\begin{proof}
Fix $\lambda \in  (\hat{\lambda}_\infty, \infty)$. Then, $\lambda > \hat{\lambda}_1$ and Assumption~\ref{ass:W} holds by Lemma~\ref{lem:gc}.
Moreover, Assumption~\ref{ass:pen} also holds since $\lambda > \hat{\lambda}_2$.
Thus, Assumptions~\ref{ass:pen}--\ref{ass:ob} hold, and the steady-state average cost $\rho(\lambda)$ exists as defined in Proposition~\ref{prop:avgcost}.
Recall that $\rho(\lambda) = \limsup_{T \to \infty} \min_{\pi \in \Pi} \max_{\gamma \in \Gamma} J_{\bm{x}, T}^\lambda (\pi, \gamma)$ and $J_{\bm{x}, T}^\lambda (\pi, \gamma)$ is monotonically decreasing with respect to $\lambda$ for any $T > 0$. 
Thus, $\rho$ is a monotonically nonincreasing function.
The limit \eqref{lqgavgcost} directly follows from the definition of $\rho(\lambda)$ in Proposition~\ref{prop:avgcost}.

We now show that $\rho(\lambda)$ is convex on $(\hat{\lambda}_\infty, \infty)$ using the convexity result in the finite-horizon case. 
Fix any $\lambda_1, \lambda_2 \in (\hat{\lambda}_\infty, \infty)$ and $\alpha \in (0, 1)$.
We then have
\begin{equation}\nonumber
\begin{split}
\alpha \rho(\lambda_1) + (1-\alpha) \rho(\lambda_2)
& = \limsup_{T \rightarrow \infty} \frac{\alpha}{T} V (\bm{x}; \lambda_1)+ \limsup_{T \rightarrow \infty} \frac{1-\alpha}{T} V (\bm{x}; \lambda_2)\\
& \geq \limsup_{T \rightarrow \infty}  \frac{1}{T} \big [ \alpha V (\bm{x}; \lambda_1) + (1-\alpha) V (\bm{x}; \lambda_2) \big ]\\
& \geq \limsup_{T \rightarrow \infty}  \frac{1}{T} V (\bm{x}; \alpha \lambda_1+ (1-\alpha) \lambda_2)  = \rho(\alpha \lambda_1 + (1-\alpha) \lambda_2),
\end{split}
\end{equation}
where the last inequality comes from the convexity of $V (\bm{x}; \lambda)$ shown in Proposition~\ref{prop:lam}.
Therefore, we conclude that $\rho(\lambda)$ is convex on $(\hat{\lambda}_\infty, \infty)$.
\end{proof}

\subsection{Proof of Theorem~\ref{thm:out}}

\begin{proof}
If the true probability measure $\mu_t$ is contained in the Wasserstein ambiguity set $\mathcal{D}$ for all $t = 0, 1, \ldots, T-1$, 
it follows from Theorem~\ref{thm:bd} that
\begin{equation} \nonumber
\begin{split}
&\frac{1}{T} \mathbb{E}^{\pi^\star_{\hat{w}}}_{w \sim \mu} \big[C_T(x, u) \mid x_0 = \bm{x} \big] \leq \lambda \theta^2 +   V (\bm{x}; \lambda) \quad \forall \bm{x} \in \mathbb{R}^n.
\end{split}
\end{equation}
Therefore, the probability of the expected cost being no greater than  $ \lambda \theta^2 +  V(\bm{x}; \lambda)$  is greater than or equal to the probability that $\mu_t \in \mathcal{D}$ for all $t = 0, 1, \ldots, T-1$.
We then have
\begin{equation*}
\begin{split}
    &\mu^N \bigg\{  \hat{w} :  \frac{1}{T} \mathbb{E}^{\pi^\star_{\hat{w}}}_{w \sim \mu} \big[C_T(x, u) \mid x_0 = \bm{x} \big]  \leq \lambda \theta^2 +  V (\bm{x}; \lambda) \;\; \forall \bm{x} \in \mathbb{R}^n \bigg \}\\
    &\geq \prod_{t=0}^{T-1} \mu_t^N \big \{ \hat{w}_t : W_2(\mu_t,\nu_{\hat{w}_t})^2 \leq \theta (N, \beta)^2 \big \}\\
    &\geq \big  (1-c_1 [b_1(N, \theta^2) \mathbf{1}_{\{ \theta^2 \leq 1\}} + b_2(N, \theta^2) \mathbf{1}_{\{ \theta^2 > 1\}}] \big )^T.
\end{split}
\end{equation*}
The radius $\theta (N, \beta)$ stated in the theorem satisfies
$1-\beta = (1-c_1 [b_1(N, \theta^2) \mathbf{1}_{\{\theta^2 \leq 1\}} + b_2(N, \theta^2) \mathbf{1}_{\{\theta^2 > 1\}}])^T$, and therefore the probabilistic  guarantee \eqref{eq:out} holds.
\end{proof}

\bibliographystyle{IEEEtran}

\bibliography{reference}

\end{document}